\newif\ifdraft\draftfalse
\DeclareMathAlphabet{\mathcal}{OMS}{cmsy}{m}{n}
\DeclareMathOperator{\mhyphen}{\!-\!}
\newcommand\cons{\mathsf{cons}}
\renewcommand\succ{\mathsf{succ}}
\newcommand\zero{\mathsf{zero}}
\newcommand{\eqA}{\hyperref[eq:homc_A_botfree]{A}}
\newcommand{\eqB}{\hyperref[eq:homc_B_inclusion]{B}}
\newcommand{\eqC}{\hyperref[eq:homc_C_inclusion]{C}}
\newcommand\set[1]{\{#1\}}
\newcommand{\makeset}[1]{\{ #1 \}}
\newcommand\lub{\bigsqcup}
\newcommand\glb{\bigsqcap}
\newcommand{\cTo}{\mathrel{\To_c}}
\newcommand{\mTo}{\mathrel{\To_m}}
\newcommand{\sorts}{\vdash}
\newcommand{\types}{\vdash}
\newcommand{\abra}[1]{\langle #1 \rangle}
\newcommand\anglebra[1]{\langle #1 \rangle}
\newcommand{\To}{\Rightarrow}
\newcommand{\dom}{\mathsf{dom}}
\newcommand\coloneqq{\mathbin{:=}}
\newcommand\Coloneqq{\mathbin{::=}}
\newcommand\defeq{\coloneqq}
\renewcommand{\models}{\vDash}
\renewcommand{\phi}{\varphi}
\newcommand{\vv}[1]{\overline{#1}}
\newcommand{\lfp}{\mathsf{lfp}}
\newcommand{\gfp}{\mathsf{gfp}}
\newcommand\calG{\mathcal{G}}
\newcommand\calL{\mathcal{L}}
\newcommand{\hmng}[1]{\mathcal{H}\llbracket #1 \rrbracket}
\newcommand\ssym{s}
\newcommand\bsym{b}
\newcommand\sem[1]{\llbracket #1 \rrbracket}
\newcommand\lift[1]{{\ulcorner #1 \urcorner}}
\newcommand\dlift[1]{{\ulcorner\!\!\ulcorner #1 \urcorner\!\!\urcorner}}
\newcommand\Rel{\mathsf{Rel}}
\newcommand\id{\mathrm{id}}
\newcommand{\imap}{\mathbf{i}}
\newcommand{\jmap}{\mathbf{j}}
\newcommand{\HORS}{\mathcal{G}}
\newcommand{\nonterms}{\mathcal{N}}
\newcommand{\calR}{\mathcal{R}}
\newcommand{\Prop}{\mathit{Prop}}
\newcommand{\calN}{\mathcal{N}}
\newcommand{\calI}{\mathcal{I}}
\newcommand\arity[1]{\textrm{ar}({#1})}
\newcommand\rsvars{V_\mathrm{RS}}
\newcommand\Trees[1]{\mathcal{T}_{{#1}}}
\newcommand\Sigbot{{\Sigma_\bot}}
\newcommand\alltrees{\Trees{\Sigbot}}
\newcommand{\Incl}{\mathsf{Incl}}
\newcommand{\Image}{{\mathsf{Im}}}
\newcommand{\dmng}[1]{\mathcal{D}\llbracket #1 \rrbracket}
\newcommand{\mmng}[1]{\mathcal{M}\llbracket #1 \rrbracket}
\newcommand{\calP}{\mathcal{P}}
\newcommand{\calM}{\mathcal{M}}
\newcommand{\atheory}{\mathit{Th}}
\newcommand{\aformulas}{\mathit{Fm}}
\newcommand{\aterms}{\mathit{Tm}}
\newcommand{\boolsort}{o}
\newcommand{\intsort}{\mathsf{int}}
\newcommand{\mfunc}{T^\mathcal{M}}
\newcommand{\abs}[2]{\lambda #1.\,#2}
\newcommand{\FormatLabel}[1]{%
  \textsf{(#1)}
}
\newcommand{\RuleName}[2]{%
  \newcommand{#1}{\FormatLabel{#2}}
}
\RuleName{\GVar}{GVar}
\RuleName{\GCst}{GCst}
\RuleName{\GConstraint}{GConstr}
\RuleName{\GAppRel}{GAppR}
\RuleName{\GAppInd}{GAppI}
\RuleName{\GAbs}{GAbs}
\RuleName{\GEx}{GEx}
\def\term#1{%
    \@term#1 \@empty
}
\def\@term#1 #2{%
   \mathit{#1}
   \ifx #2\@empty\else
    \:\expandafter\@term  %
   \fi
   #2%
}
\renewcommand{\comm@todo@mpar}[1]{}
\def\divider{%
  \leavevmode\leaders\hrule height 0.6ex depth \dimexpr0.4pt-0.6ex\hfill%
  \kern0pt%
}
\theoremstyle{plain} 
\newtheorem{theorem}{Theorem}[section]
\newtheorem{lemma}[theorem]{Lemma}
\newtheorem{proposition}[theorem]{Proposition}
\newtheorem{corollary}[theorem]{Corollary}
\newtheorem{definition}[theorem]{Definition}
\newtheorem{example}[theorem]{Example}
\title{Reducing Higher-order Recursion Scheme Equivalence \\ to Coinductive Higher-order Constrained Horn Clauses}
\author{Jerome Jochems
\institute{Department of Computer Science\\
University of Bristol\footnote{Most of the work was concluded while the author was a Research Associate at the University of Oxford.}\\
Bristol, UK}
\email{jerome.jochems@bristol.ac.uk}}
\begin{document}

\maketitle

\begin{abstract}
Higher-order constrained Horn clauses (HoCHC) are a semantically-invariant system of higher-order logic modulo theories.
With semi-decidable unsolvability over a semi-decidable background theory, HoCHC is suitable for safety verification.
Less is known about its relation to larger classes of higher-order verification problems.
Motivated by program equivalence, we introduce a coinductive version of HoCHC that enjoys a greatest model property.
We define an encoding of higher-order recursion schemes (HoRS) into HoCHC logic programs.
Correctness of this encoding reduces decidability of the open HoRS equivalence problem -- and, thus, the $\lambda$\textbf{Y}-calculus B\"ohm tree equivalence problem -- to semi-decidability of coinductive HoCHC over a complete and decidable theory of trees.
\end{abstract}


\section{Introduction}
\label{sec:intro}

Cathcart Burn et al.~\cite{BurnOR18} have proposed a promising logical framework for higher-order safety verification.
They frame the search for ``safe'' program invariants as a satisfiability problem for systems of \emph{HoCHC}:
these \emph{higher-order constrained Horn clauses} -- which extend constrained Horn clauses to higher-order logic with constraints from a first-order background theory -- aim to act as a universal setting in which disparate verification algorithms can be compared, independent of application or programming language.

Thanks to its higher-order predicates, the HoCHC fragment expresses certain invariants of higher-order programs quite directly. 
Even so, it retains many of the excellent algorithmic properties to which first-order constrained Horn clauses owe their suitability for first-order model checking~\cite{Bjorner2012,Bjorner2015}. 
Given a semi-decidable background theory, HoCHC unsolvability (unsatisfiability) is semi-decidable~\cite{Pham2018,OngWagner2019}. 

We study the relation between (the logic-program presentation of) HoCHC and \emph{higher-order recursion schemes} (HoRS).
Whilst higher-order \emph{model checking} has grown out of the decidability of HoRS model checking \cite{Ong2006,KobayashiO2009} and flourished, higher-order \emph{program equivalence} is relatively underdeveloped;
decidability of HoRS equivalence is a long-standing open problem~\cite{Clairambault2013,Ong2015}.
Note that the HoRS model checking safety problem can be solved via a decidable higher-order Datalog fragment of HoCHC~\cite{Wagner2019}.

A HoRS of order $n$ is essentially an $n$th-order tree grammar: the trees generated at orders 0, 1 and 2 are regular trees, algebraic trees (i.e.~those generated by context-free tree grammars), and hyperalgebraic trees, respectively~\cite{Courcelle1978}.
These potentially infinite trees generated by HoRS correspond to (abstractions of) computation trees of higher-order functional programs.

Let us consider (deterministic) HoRS $\calG_1$ and HoRS $\calG_2$ in Figure~\ref{fig:HoRS_example} that both generate an infinite tree with the prefix on the right (by unfolding the rewrite rules ad infinitum, starting from $S_1$ and $S_2$, respectively).
To determine whether these HoRS generate the same tree, we define a HoCHC logic program that contains one predicate $R_N$ (of arity $n+1$) for each nonterminal symbol $N$ (of arity $n$) in the input HoRS.
In particular, we want $R_{S_1}$ (resp. $R_{S_2}$) to be the characteristic function of the tree generated from $S_1$ by $\calG_1$ (resp. from $S_2$ by $\calG_2$), so we can query the existence of a tree $t$ such that $R_{S_1}\, t \land R_{S_2}\, t$.

\begin{figure}[t]
\centering
\begin{subfigure}[r]{0.35\textwidth}
\begin{align*}
    S_1 &= G\,\zero\\
    G &= \lambda x.\, \cons\,(\succ\,x)\,(G\,(\succ\,x)) \\[8pt]
        S_2 &= F\,\succ\\
    F &= \lambda \varphi.\, \cons\,(\varphi\, \zero)\,(F\,(B\,\varphi\,\varphi))\\
    B &= \lambda \varphi\,\psi\,x.\, \varphi \, (\psi \, x)
\end{align*}
\vspace*{-13pt}
\caption{Their respective rewrite rules\\}
\end{subfigure}
\begin{subfigure}[c]{0.25\textwidth}
\Tree[.{$\cons$} 
        [.{$\succ$} {$\zero$} ]
        [.{$\cons$} 
            [.{$\succ$} [.{$\succ$} {$\zero$} ] ]
                {\dots} ]]
\caption{A common prefix of the trees generated by $\calG_1$ and $\calG_2$}
\end{subfigure}
\hspace*{0.2cm}\begin{subfigure}[l]{0.36\textwidth}
\vspace*{0.2cm}
\begin{enumerate}[leftmargin=*]
    \item Eliminate divergent $\bot$-labelled leaves from HoRS (Sec~\ref{sec:HoRS_botfree})
    \item Encode HoRS into coinductive HoCHC logic program (Sec~\ref{sec:HoRS_transformation})
    \item Solve the two HoCHC instances from Sec~\ref{sec:homc_procedure} concurrently (outside the scope of this paper)
\end{enumerate}
\vspace*{-0.2cm}
\caption{The ``decision'' procedure for HoRS equivalence, pending semi-decidability}
\label{fig:decision_procedure}
\end{subfigure}
\caption{Example order-1 HoRS $\calG_1$ and order-2 HoRS $\calG_2$ over tree constructors $\makeset{\cons,\succ,\zero}$}
\label{fig:HoRS_example}
\vspace*{-0.3cm}
\end{figure}

\emph{Encoding} HoRS into HoCHC (in ``continuation-passing style'') is natural, as this program shows:
\begin{align*} 
    R_{S_1}
        &= \lambda r.\,\exists r_1.\, R_G\,r_1\,r \land (\zero = r_1)\\
    R_G
        &= \lambda x'\,r.\, \exists r_1\, r_2\, r_3.\, (\cons\,r_1\,r_2=r) \land (\succ\,x' =r_1) \land R_G\,r_3\,r_2 \land (\succ\,x' =r_3) \\
    R_{S_2} &= \lambda r.\, R_F\,(\lambda y\,r'.\, \succ\,y = r')\,r \\
    R_F &= \lambda \varphi'\,r.\, \exists r_1\,r_2\,r_3.\, (\cons\,r_1\,r_2 = r) \land \varphi'\,r_3\,r_1 \land (\zero = r_3) \land R_F\,(\lambda y\,r'.\, R_B \, \varphi' \, \varphi'\, y \, r')\,r_2 \\
    R_B &= \lambda \varphi'\,\psi'\,x'\,r.\, \exists r_1\,r_2.\, \varphi' \,r_1\,r \land \psi'\,r_2\,r_1 \land (x' = r_2)
\end{align*}
Each HoRS subterm is represented by a conjunct whose arguments are $r_i$ bounded by a subsequent conjunct (if a tree) or are inlined (if higher type).
Unfortunately though, this HoCHC logic program has no natural inductive interpretation.
The empty assignment is a model (a fixpoint) of the program, because the program contains no ``base cases'' to break out of the recursion.
In fact, there does not exist a HoCHC term such that the characteristic function of an infinite tree arises in its least fixpoint.

To tackle this disparity between HoCHC and HoRS, we define a \emph{coinductive} HoCHC framework that enjoys a greatest model property (under the monotone interpretation).
We interpret the above clauses coinductively over a complete and decidable background theory of trees first introduced by Maher~\cite{Maher1988}.

In our example, $R_{S_1}$ is assigned the characteristic function of the tree generated by $\calG_1$ in the greatest model (and $R_{S_2}$ of $\calG_2$).
We can (independently) query the existence of two identical and two distinct trees $t_1$ and $t_2$ such that $R_{S_1}\, t_1 \land R_{S_2}\, t_2$.
Only two distinct such trees exist;
after the common prefix in Figure~\ref{fig:HoRS_example}, the trees deviate.
The left children of $\cons$ have shape $\succ^n\; \zero$ in $t_1$ and $\succ^{2^{n-1}}\, \zero$ in $t_2$, for $n \geq 1$.

This new framework allows us to characterise HoRS in HoCHC logic programs and, thus, reduce decidability of the HoRS equivalence problem to semi-decidability of coinductive HoCHC over a decidable background theory.
This has implications for the $\lambda\mathbf{Y}$-calculus B\"ohm tree equivalence problem~\cite{Clairambault2013}, which asks whether the B\"ohm trees of two given $\lambda\mathbf{Y}$-terms are equal; this problem is recursively equivalent to the HoRS equivalence problem and can also be reduced to semi-decidability of coinductive HoCHC.

\paragraph{Contributions.}

The (open) HoRS equivalence problem asks whether two given deterministic HoRS generate the same tree.
We prove that decidability of this problem can be reduced to the (open) semi-decidability of coinductive HoCHC over a decidable background theory (see Figure~\ref{fig:decision_procedure}).

First, we prove that there is an algorithm that, given a HoRS (which may contain ``diverging'' $\bot$-labelled nodes), returns its $\bot$-free transform -- i.e.~a HoRS that generates the same tree, except that every $\bot$-labelled node is replaced by the infinite linear tree $\bsym \ (\bsym \ (\bsym \cdots))$.
The proof appeals to the logical reflectivity of HoRS with respect to properties definable in monadic second-order logic, in the sense of \cite{DBLP:conf/lics/BroadbentCOS10}.
Notice that two HoRS are equivalent if and only if their respective $\bot$-free transforms are equivalent.

Next, we exhibit a natural encoding of HoRS into constrained logic programs, with the sort of individuals interpreted as the set of finite and infinite trees.
Given two $\bot$-free HoRS, we define two instances of the coinductive HoCHC problem, call them positive and negative.
We use Maher's first-order theory of equations of finite and infinite trees \cite{Maher1988}, which is complete and decidable, as the background theory.
The positive and the negative problem instance share a logic program: the union of the respective HoRS-to-HoCHC encodings.
The goal formulas of the problem instances are so designed that the two input HoRS are equivalent (resp.~inequivalent) iff the positive (resp.~negative) instance is solvable.
Provided that the resulting coinductive HoCHC instances are semi-decidable, we obtain two semi-decision procedures, one for checking equivalence of the input HoRS and one for inequivalence.
A decision procedure for the equivalence of the input HoRS could then be obtained by dovetailing the two semi-decision procedures.

\paragraph{Outline.}

Building on Cathcart Burn et al.'s (inductive) HoCHC~\cite{BurnOR18}, we introduce coinductive HoCHC in Section~\ref{sec:HoRS_coinductive_HoCHC}.
We define HoRS and their denotational semantics in Section~\ref{sec:hors}, where we also prove the existence of an algorithm that generates the $\bot$-free transform of HoRS.
We encode HoRS into constrained logic programs in Section~\ref{sec:transformation}.
Section~\ref{sec:HoRS_equivalence} shows how to use these HoRS-to-HoCHC encodings to reduce decidability of the HoRS equivalence problem to semi-decidability of coinductive HoCHC over Maher's complete and decidable theory of trees.
Finally, we consider implications and related work in Section~\ref{sec:conclusion}.


\section{Preliminaries}
\label{sec:prelims}


\subsection{Higher-order constrained Horn clauses}
\label{sec:coinductive_HoCHC}
\label{sec:HoRS_coinductive_HoCHC}

Following \cite{BurnOR18}, we work in higher-order logic
presented as a typed (sorted) lambda calculus.
We follow their (monotone logic-program) definitions until we introduce the coinductive HoCHC decision problem.

\subsubsection{Syntax}
\paragraph{Sorts.}
Given a sort $\iota$ of individuals (for example $\intsort$), and a sort $\boolsort$ of (boolean) truth values, \emph{sorts} are just the simple types generated by 
$\sigma \Coloneqq \iota \mid \boolsort \mid \sigma\to\sigma$. 
\emph{Relational sorts} (typically denoted by $\rho$) have the following restricted form: $\rho \Coloneqq \boolsort \mid \iota \to \rho \mid \rho \to \rho$.

\paragraph{Background theory.}
Assume a fixed, first-order language over a first-order signature, consisting of distinguished subsets of first-order terms $\aterms$ and first-order formulas (or \emph{constraints}) $(\phi \in)$ $\aformulas$, and a first-order theory $\atheory$ in which to interpret those. 
We fix a \emph{standard model} $A$ of $\atheory$ we often leave implicit.
We refer to this first-order language as the \emph{constraint language}, and $\atheory$ as the \emph{background theory}.

\paragraph{Goal terms.}
The class of well-sorted \emph{goal terms} $\Delta \sorts G : \rho$ is given by the sorting judgements defined by the following rules, where $\sigma$ stands for the sort of individuals $\iota$ or some relational sort.

\noindent
\begin{center}
\hspace{35pt}\begin{minipage}{.4\linewidth}
\begin{prooftree}
    \AxiomC{\vphantom{$\Delta$}}
      \RightLabel{$\Delta \sorts \phi : o \in \aformulas$}
      \LeftLabel{\GConstraint}
      \UnaryInfC{$\Delta \sorts \phi : o$}
\end{prooftree}
\end{minipage}\begin{minipage}{.45\linewidth}
\begin{prooftree}
    \AxiomC{}
      \LeftLabel{\GVar}
      \UnaryInfC{$\Delta_1, x:\rho, \Delta_2 \sorts x : \rho$}
\end{prooftree}
\end{minipage}\\\begin{minipage}{.40\linewidth}
\begin{prooftree}
    \AxiomC{$\Delta \sorts G : \boolsort$}
      \AxiomC{$\Delta \sorts H : \boolsort$}
      \LeftLabel{\GCst}
      \RightLabel{$* \in \makeset{\wedge, \vee} $}
      \BinaryInfC{$\Delta \sorts G * H : \boolsort$}
\end{prooftree}
\end{minipage}\hspace{30pt}
\begin{minipage}{.40\linewidth}
\begin{prooftree}
    \AxiomC{$\Delta,x:\sigma \sorts G : \boolsort$}
      \LeftLabel{\GEx}
      \RightLabel{$\sigma = \iota \textrm{ or } \rho$}
      \UnaryInfC{$\Delta \sorts \exists x\hspace{-2pt}:\hspace{-2pt}\sigma.G  : \boolsort$}
\end{prooftree}
\end{minipage}\\\hspace{-15pt}\begin{minipage}{.4\linewidth}
\begin{prooftree}
    \AxiomC{$\Delta \sorts G : \iota \to \rho$}
      \RightLabel{$\Delta \sorts N : \iota \in \aterms$}
      \LeftLabel{\GAppInd}
      \UnaryInfC{$\Delta \sorts \term{G N} : \rho$}
\end{prooftree}
\end{minipage}\hspace{27pt}
\begin{minipage}{.45\linewidth}
\begin{prooftree}
    \AxiomC{$\Delta \sorts G : \rho_1 \to \rho_2$}
      \AxiomC{$\Delta \sorts H : \rho_1$}
      \LeftLabel{\GAppRel}
      \BinaryInfC{$\Delta \sorts \term{G H} : \rho_2$}
\end{prooftree}
\end{minipage}\\\vspace{-8pt}
\begin{prooftree}
    \AxiomC{$\Delta, x:\sigma \sorts G : \rho$}
      \LeftLabel{\GAbs}
      \RightLabel{$x \notin \dom(\Delta)$}
      \UnaryInfC{$\Delta \sorts \abs{x}{G} : \sigma \to \rho$}
\end{prooftree}
\end{center}\vspace*{4pt}
From now on, we use $G$ and $H$ (and variants thereof) to stand for arbitrary goal terms and disambiguate as necessary, and use uppercase $R$ to stand for \emph{relational variables} (i.e.~variables of a relational sort).

\paragraph{Constrained logic program.}
A higher-order constrained \emph{logic program}, $P$, over a sort environment $\Delta = \{R_1:\rho_1,\ldots,R_k:\rho_k\}$ is a finite system of (mutually) recursive definitions of shape $R_i :\rho_i = {G_i}$ for some goal term $G_i$.
Such a program is well sorted when $\Delta \types {G_i} : \rho_i$, for each $1 \leq i \leq k$. 
Since each $R_i$ is distinct, we will sometimes regard a program $P$ as a finite map from variables to terms, defined so that $P(R_i) = G_i$. 
We write $\types P : \Delta$ to mean that $P$ is a well-sorted program over $\Delta$. 

\subsubsection{Semantics}
\label{sec:hochc_interpretations}

Motivated by the fact that unlike its first-order counterpart, HoCHC has no least model property for standard semantics, Cathcart Burn et al.~consider an equivalent \emph{monotone semantics} that \emph{does} have a least model property.
This interpretation suits us too, because it also has a greatest model property.

\paragraph{Monotone sort frame.}
We define the \emph{monotone sort frame} $\mmng{-}$ over the domain $A_\iota$ of the background theory recursively by:
\[ \mmng{\iota} := A_\iota \qquad \mmng{o} := \mathbb{B}\defeq\{0 \leq 1\}
  \qquad \mmng{\rho_1 \to \rho_2} := \mmng{\rho_1} \mTo
  \mmng{\rho_2} \]
where $X \mTo Y$
is the monotone function space between $X$ and $Y$ w.r.t.~a partial
ordering $\sqsubseteq$;
this partial order is the discrete ordering on $A_\iota$,
satisfies $0\sqsubseteq 1$, and is lifted to higher sorts in a pointwise manner.
It is easy to see that each $\mmng{\rho}$ is a complete lattice.

We extend this ordering to sort environments with $\mmng{\Delta} := \prod x \in \mathsf{dom}(\Delta).\, \mmng{\Delta(x)}$, pointwise over its elements, i.e., for $\beta,\gamma \in \mmng{\Delta}$, $\beta \sqsubseteq \gamma$ if and only if $\beta(x) \sqsubseteq \gamma(x)$ for all $x:\rho \in \Delta$.

\paragraph{Denotation of goal terms.}
The meaning $\mmng{\Delta \sorts G : \rho}: \mmng{\Delta} \to \mmng{\rho}$ of a goal term $\Delta\sorts G:\rho$ is defined as follows, for $\beta \in \mmng{\Delta}$:
\begin{align*}
  \mmng{\Delta\sorts\phi:o}(\beta)
                                         &\defeq\atheory\llbracket\phi\rrbracket(\beta)\\
  \mmng{\Delta_1,x:\rho,\Delta_2\sorts x:\rho}(\beta) &\defeq\beta(x)\\
  \mmng{\Delta\sorts G\land H:o}(\beta) &\defeq\min\{\mmng{\Delta\sorts G: o}(\beta), \mmng{\Delta\sorts H: o}(\beta)\}\\
  \mmng{\Delta\sorts G\lor H:o}(\beta) &\defeq\max\{\mmng{\Delta\sorts G: o}(\beta), \mmng{\Delta\sorts H: o}(\beta)\}\\
  \mmng{\Delta\sorts\exists x:\sigma\ldotp  G:o}(\beta)
                                         &\defeq\max\{\mmng{\Delta,x:\sigma\sorts
                                           G: o}(\beta[x\mapsto
                                           x'])\mid x'\in\mmng\sigma\}\\
  \mmng{\Delta\sorts\lambda x:\sigma\ldotp G:\sigma\to\rho}(\beta)
                                         &\defeq\lambda
                                           x'\in\mmng\sigma\ldotp
                                           \mmng{\Delta,x:\sigma\sorts
                                           G:\rho}(\beta[x\mapsto
                                           x'])\\
  \mmng{\Delta\sorts G\,H:\rho_2}(\beta) &\defeq\mmng{\Delta\sorts
                                          G:\rho_1\to\rho_2}(\beta)(
                                          \mmng{\Delta\sorts
                                            H:\rho_1}\llbracket
                                            H\rrbracket(\beta))\\
    \mmng{\Delta\sorts G\,N:\rho}(\beta) &\defeq\mmng{\Delta\sorts
                                          G:\iota\to\rho}(\beta)(
                                          \atheory\llbracket N\rrbracket(\beta))
\end{align*}
In the above, $\min$ and $\max$ denote the greatest lower bound and the least upper bound, resp., within the complete lattice of booleans $\mmng{o}$, and $\atheory\llbracket - \rrbracket(\beta)$ denotes the interpretation of $-$ in the (standard) model of the background theory.
We write $A,\beta \vDash G:o$, i.e.~$A,\beta$ \emph{satisfies} $G$, just if $\mmng{\Delta \vdash G:o}(\beta)=1$.

\paragraph{One-step consequence operator.}
Logic programs give rise to an endofunction $T_{P:\Delta}^{\mathcal M}: \mmng{\Delta} \to \mmng{\Delta}$, defined by $T_{P:\Delta}^{\mathcal M}\,(\beta)\,(R_i) \defeq\mmng{\Delta \sorts P(R_i):\Delta(R_i)}(\beta)$,
called the \emph{one-step consequence operator}.
We call $\beta$ a \emph{model} of $\vdash P:\Delta$, written $A,\beta \vDash P$ for model $A$ of the background theory, just if $\beta = T_{P:\Delta}^{\mathcal M}(\beta)$.

\subsubsection{Coinductive decision problem}
\label{sec:coind_coind}

\begin{definition}[Coinductive HoCHC problem]
A coinductive HoCHC problem $\abra{\Delta,P,G}$, where 
$\Delta$ is a sorting of relational variables,
$\sorts P : \Delta$ is a constrained logic program, and
$\Delta \sorts G : \boolsort$ is a constrained goal formula,
is \emph{solvable} just if, for the standard model $A$ of the background theory $\atheory$, there exists a valuation $\beta$ of the variables in $\Delta$ such that $A,\beta \models P$ and $A,\beta \models G$.
\end{definition}

Note that the problem triple $\abra{\Delta,P,G}$ is identical to its inductive counterpart, but the definition of ``solvability'' differs;
in the original HoCHC problem, solvability requires the existence of prefixed point $\beta$ of $P$ such that $A,\beta \not\models G$.
The background theory $\atheory$ could be any first-order theory -- in which the constraints in $P$ and $G$ can be interpreted -- but in Section~\ref{sec:HoRS_equivalence} we fix a specific theory of trees.

Interpreting satisfaction w.r.t.~standard and monotone semantics gives rise to two distinct but equivalent HoCHC decision problems.
A dual argument to Cathcart Burn et al.'s  Lemma~5 for inductive HoCHC~\cite{BurnOR18} shows that a coinductive HoCHC problem is \emph{solvable under the standard interpretation} iff it is \emph{solvable under the monotone interpretation.}
We consider the monotone interpretation.

By the Knaster-Tarski theorem and $\mmng{\Delta}$ being a complete lattice for relational $\Delta$, the set of fixpoints of the monotone one-step consequence operator $\mfunc_{P:\Delta}$ forms a complete lattice. This guarantees the existence of a greatest fixpoint of $\mfunc_{P:\Delta}$.
Thus, monotone HoCHC enjoys a greatest model property.

\begin{theorem}[Greatest model property for monotone HoCHC]\label{thm:homc_coind_hochc_gmp}
Under the monotone interpretation, HoCHC definite clauses possess greatest models.
\end{theorem}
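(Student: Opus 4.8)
The plan is to derive the greatest model property as a direct application of the Knaster–Tarski fixpoint theorem, exactly dual to how Cathcart Burn et al.\ obtain the least model property for inductive HoCHC. The two ingredients are already assembled in the text preceding the statement: (i) for a relational sort environment $\Delta$, the monotone sort frame $\mmng{\Delta}$ is a complete lattice, and (ii) the one-step consequence operator $\mfunc_{P:\Delta}$ is a monotone endofunction on $\mmng{\Delta}$. Given these, Knaster–Tarski yields that the set of fixpoints of $\mfunc_{P:\Delta}$ is itself a complete lattice, hence in particular has a top element, which is the greatest model of $P$.

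\emph{First} I would make precise that $\mmng{\Delta}$ is a complete lattice when $\Delta$ is relational. Since each relational sort $\rho$ is built from $o$ and $\iota$ by $\iota\to(-)$ and $(-)\to(-)$, and $\mmng{o}=\mathbb{B}$ is a (two-element) complete lattice while monotone function spaces $X\mTo Y$ with pointwise order inherit completeness from the codomain $Y$, an easy induction on $\rho$ shows each $\mmng{\rho}$ is a complete lattice; the product $\mmng{\Delta}=\prod_{x\in\dom(\Delta)}\mmng{\Delta(x)}$ with pointwise order is then a complete lattice. (This is the remark ``It is easy to see that each $\mmng{\rho}$ is a complete lattice'' made above; I would cite it rather than reprove it.)

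\emph{Next} I would establish monotonicity of $\mfunc_{P:\Delta}$, i.e.\ that $\beta\sqsubseteq\gamma$ implies $\mfunc_{P:\Delta}(\beta)\sqsubseteq\mfunc_{P:\Delta}(\gamma)$. By definition $\mfunc_{P:\Delta}(\beta)(R_i)=\mmng{\Delta\sorts P(R_i):\Delta(R_i)}(\beta)$, so it suffices to show that for every goal term $\Delta\sorts G:\rho$ the denotation $\beta\mapsto\mmng{\Delta\sorts G:\rho}(\beta)$ is monotone in $\beta$. This is a routine structural induction over the sorting derivation of $G$: the constraint case is constant in the relational part of $\beta$ hence monotone; the variable case is a projection; conjunction, disjunction, and existential quantification are $\min$, $\max$, and a $\max$ over a set, all monotone; abstraction and both application rules compose monotone maps (using, for \GAppRel, that application in the monotone function space is monotone in both the function and the argument). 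I expect this induction to be the only place requiring real (though entirely standard) work, and it is in fact the monotone analogue of the corresponding lemma in \cite{BurnOR18}; I would phrase it as a reuse of their argument rather than redo it in full.

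\emph{Finally}, with $\mmng{\Delta}$ a complete lattice and $\mfunc_{P:\Delta}$ monotone, the Knaster–Tarski theorem gives that $\{\beta\mid\mfunc_{P:\Delta}(\beta)=\beta\}$ is a non-empty complete lattice; its greatest element $\nu\mfunc_{P:\Delta}=\lub\{\beta\mid\beta\sqsubseteq\mfunc_{P:\Delta}(\beta)\}$ is a fixpoint of $\mfunc_{P:\Delta}$, hence a model of $P$ by definition, and it is above every other model. This establishes the greatest model property. The main conceptual ``obstacle'' is really just the bookkeeping that the existential-quantifier clause takes a $\max$ over $\{\mmng{\cdots}(\beta[x\mapsto x'])\mid x'\in\mmng\sigma\}$, which is well-defined precisely because $\mmng{o}$ (and hence the relevant slice) is a complete lattice; no genuinely new difficulty arises beyond dualising the inductive development.
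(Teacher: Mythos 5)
Your proposal is correct and follows essentially the same route as the paper, which also obtains the result by noting that $\mmng{\Delta}$ is a complete lattice for relational $\Delta$ and applying Knaster--Tarski to the monotone one-step consequence operator, so that the fixpoints form a complete lattice with a greatest element. The additional detail you supply (the induction showing each $\mmng{\rho}$ is a complete lattice and the structural induction for monotonicity of $\mfunc_{P:\Delta}$) is exactly the standard bookkeeping the paper leaves implicit by appeal to~\cite{BurnOR18}.
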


Thus, a greatest model witnesses the solvability of a coinductive HoCHC problem in the monotone setting, like a least model witnesses solvability for traditional HoCHC.
Instead of building up a least model from the least valuation, we start with the greatest valuation and work our way down.
Intuitively, we are taking the \emph{backwards closure} of our logic program.

\begin{theorem}
A coinductive HoCHC problem $\abra{\Delta,P,G}$ is solvable under the monotone interpretation if and only if $\mmng{G}(M_P) = 1$ for greatest model $M_P$ of $P$.
\end{theorem}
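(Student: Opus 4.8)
The plan is to derive the equivalence directly from two facts: (i) $M_P$ is the \emph{greatest} fixpoint of $\mfunc_{P:\Delta}$ (Theorem~\ref{thm:homc_coind_hochc_gmp}, via Knaster--Tarski), so every model of $P$ lies below it; and (ii) the goal-term semantics is monotone in the valuation. Granted these, the ``if'' direction is witnessed by $\beta = M_P$ itself, and the ``only if'' direction follows by pushing an arbitrary witnessing model up to $M_P$.

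First I would record the auxiliary monotonicity lemma: for every well-sorted goal term $\Delta \sorts G : \rho$, the map $\beta \mapsto \mmng{\Delta \sorts G : \rho}(\beta)$ is a monotone function $\mmng{\Delta} \to \mmng{\rho}$. This is a routine structural induction on the sorting derivation, implicit in Cathcart Burn et al.'s development of the monotone semantics: the $\GVar$ and $\GConstraint$ cases are immediate (constraints do not depend on the relational component of $\beta$, and $\mmng{\iota}$ carries the discrete order); $\GCst$ and $\GEx$ go through because $\min$, $\max$ and least upper bounds in a complete lattice are monotone; $\GAppRel$ and $\GAppInd$ go through because, by the induction hypothesis, $\mmng{G}(\beta)$ is a monotone function and $\mmng{H}(\beta)$ is monotone in $\beta$, and composition/application of these is monotone in $\beta$; and $\GAbs$ is handled pointwise, crucially using that $\mmng{\rho_1 \to \rho_2}$ is the \emph{monotone} function space so the abstraction lands in the intended domain. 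For the present statement only the instance $\rho = o$ is needed, but the induction must pass through all sorts.

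For the ``if'' direction, assume $\mmng{G}(M_P) = 1$. By Theorem~\ref{thm:homc_coind_hochc_gmp}, $M_P$ is a model of $P$, i.e.~$A, M_P \models P$; and $\mmng{G}(M_P) = 1$ is by definition $A, M_P \models G$. Hence $\beta \defeq M_P$ witnesses that $\abra{\Delta,P,G}$ is solvable. For the ``only if'' direction, assume $\abra{\Delta,P,G}$ is solvable, witnessed by some $\beta$ with $A,\beta \models P$ and $A,\beta \models G$; that is, $\beta$ is a fixpoint of $\mfunc_{P:\Delta}$ and $\mmng{G}(\beta) = 1$. Since $M_P$ is the greatest fixpoint of $\mfunc_{P:\Delta}$, we have $\beta \sqsubseteq M_P$; applying monotonicity of $\mmng{\Delta \sorts G : o}$ gives $1 = \mmng{G}(\beta) \sqsubseteq \mmng{G}(M_P)$ in $\mmng{o} = \{0 \leq 1\}$, so $\mmng{G}(M_P) = 1$.

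The only real content, and the main thing to get right, is the monotonicity lemma, which has to be phrased uniformly over \emph{all} sorts for the induction at the application and abstraction cases to close; once it is available, the theorem is immediate. (One small point worth noting: the definition of ``model'' here demands an exact fixpoint of $\mfunc_{P:\Delta}$, but this causes no difficulty, since the greatest fixpoint dominates all fixpoints, and indeed all post-fixpoints, by Knaster--Tarski.)
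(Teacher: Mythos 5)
Your proof is correct and follows essentially the same route as the paper's: the ``if'' direction is witnessed by $M_P$ itself, and the ``only if'' direction pushes an arbitrary witnessing fixpoint up to $M_P$ via Knaster--Tarski and monotonicity of the goal-term semantics in the valuation. The only difference is that you make explicit the structural-induction monotonicity lemma that the paper invokes with the bare phrase ``by monotonicity,'' which is a reasonable thing to spell out.
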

\begin{proof}
Recall that a coinductive HoCHC problem $\abra{\Delta,P,G}$ is solvable iff, for the standard model $A$ of the background theory $\atheory$, 
there exists a valuation $\beta$ of the variables in $\Delta$ such that $A,\beta \models P$ and $A,\beta \models G$.

Clearly, $\mmng{G}(M_P) = 1$ for greatest model $M_P$ of $P$ implies solvability of $\abra{\Delta,P,G}$.
For the converse, let $\beta \in \mmng{\Delta}$ be a valuation such that $A,\beta \vDash P$ and $A,\beta \vDash G$, for standard model $A$ of the background theory.
By Knaster-Tarski, $\beta \sqsubseteq M_P$.
By monotonicity, $A,\beta \vDash G$ implies $A,M_P \vDash G$, as required.
\end{proof}

HoCHC and coinductive HoCHC are not equivalent in the sense that the standard and the monotone interpretation are, as the following example demonstrates;
it is not the case that a HoCHC problem $\abra{\Delta,P,G}$ is solvable if and only if the corresponding coinductive HoCHC problem $\abra{\Delta,P,G}$ is solvable.

\begin{example}\label{ex:coinductive_encoding}
Consider the HoCHC triples $\calP_a = \anglebra{\makeset{R_S},P,R_S\,a^\omega}$ and $\calP_b = \anglebra{\makeset{R_S},P,R_S\,b^\omega}$, where $P$ consists of
$R_S = \lambda r.\, \left(\exists r_1.\, (a\,r_1 = r) \land R_S\,r_1 \right)$
and $a^\omega$ is the infinite unary tree of only $a$s (and $b^\omega$ of $b$s).
The logic program $P$ has two models;
relational variable $R_S$ corresponds to the empty set in least model $M_\emptyset$, and to the singleton set $\makeset{a^\omega}$ in greatest model $M_{\makeset{a^\omega}}$.

When we consider these HoCHC triples as inductive HoCHC problems, we find both $\calP_a$ and $\calP_b$ solvable, since $M_\emptyset$ is a witness to the refutation of both goal clauses.
If we consider them as coinductive problems, however, $\calP_a$ is solvable, while $\calP_b$ is unsolvable. 
This follows from $M_{\makeset{a^\omega}}$ witnessing the satisfiability of goal clause $R_S\,a^\omega$, while neither $M_{\makeset{a^\omega}}$ nor $M_{\emptyset}$ witnesses the satisfiability of $R_S\,b^\omega$.
\end{example}

Coinductive higher-order constrained Horn clauses allow us to reason about programs with datatypes inhabited by infinite objects, notably the (potentially) infinite trees generated by HoRS. 
These clauses are not merely an academic indulgence, though.
There is a tradition of coinduction and corecursion in logic programming (see e.g.~\cite{Jaffar1986,Gupta2007,Simon2007,Komendantskaya2017}).
This is hardly surprising, given that some well-formed logic programs do not have natural inductive interpretations, as we have seen in the introduction. Examples of other infinite data types that arise in practice include infinite lists and streams.


\subsection{Higher-order recursion schemes}
\label{sec:hors}
\label{sec:PRE_HoMC}

We fix a ranked alphabet $\Sigma$ of tree constructors and write $\Sigma_\bot = \Sigma \cup \makeset{\bot}$. 
The set of all finite and infinite $\Sigma_\bot$-labelled trees, written $\Trees{\Sigbot}$, 
is a pointed poset with least element $\bot$ over the subtree ordering $\sqsubseteq$, which is
the least partial order such that
$C[\bot] \sqsubseteq C[t]$ for every tree context $C[\mbox{\textvisiblespace}] \in \Trees{\makeset{\mbox{\textvisiblespace}} \cup \Sigbot}$ and $t \in \Trees{\Sigbot}$.

Let $\calG = \anglebra{\calN,\Sigma,\calR,S}$ be a (deterministic) higher-order recursion scheme (HoRS).
That is, $\calN$ maps a nonterminal symbol to its sort, $\Sigma$ maps a terminal symbol to its sort, $S \in \calN$ is the designated start symbol, and
there exists one rewrite rule in $\calR$ for each $F \in \calN$ such that $\calR(F) = \lambda x_1 \dots x_n.\,t$,
where $t:\iota$ is an applicative term over $\calN \cup \Sigma \cup \makeset{x_1,\dots,x_n}$ for some distinct $x_1,\dots,x_n {\in \rsvars}$ drawn from a finite set of recursion scheme variables $\rsvars$.

The \emph{HoRS equivalence problem} asks whether two given HoRS generate the same tree (i.e.~have the same semantics).
Decidability of this problem is perhaps the best known and most challenging open problem in higher-order model checking.

\subsubsection{Denotational semantics}
\label{sec:HoRS_semantics}

The meaning of a HoRS can be given by a number of different formalisms.
We introduce an infinite ``Herbrand'' interpretation that treats the rewrite rules as definitional equality in the style of a HoCHC logic program.
Models are built incrementally from the smallest tree $\bot$.

Our interpretation of HoRS is Herbrandesque in that constants and function symbols are assigned very simple meanings.
However, unlike typical Herbrand models, our models may contain infinite terms.

Let us define an interpretation of the sorts over $\iota$ (i.e.~the sorts of HoRS terms):
\[ \hmng{\iota} \defeq \abra{\alltrees,\sqsubseteq} \qquad \hmng{\sigma_1 \to \sigma_2} \defeq \hmng{\sigma_1} \Rightarrow_c \hmng{\sigma_2}\]
where $X \cTo Y$ is the continuous function space between directed-complete partial orders (dcpos) $X$ and $Y$, ordered pointwise with respect to subtree ordering $\sqsubseteq$ on $\hmng{\iota}$.

Given the environment $\Gamma = \makeset{x_1:\tau_1, \dots, x_k:\tau_k}$, let $\calN'$ denote the extended environment $\calN,\Gamma \coloneqq \calN \cup \Gamma$, which we view as a sort function whose domain is $\dom(\calN) \cup \makeset{x_1,\dots,x_k}$, mapping each symbol to its sort.
Set $\hmng{\calN'}: {\prod} x \in \mathsf{dom}(\calN').\,\hmng{\calN'(x)}$ with typical element $\alpha$.
Define
\[ \hmng{\calN' \vdash t:\sigma}: \hmng{\calN'} \Rightarrow \hmng{\sigma} \]
by cases and recursion on syntax: 
\begin{align*}
    \hmng{\calN' \vdash x: \calN'(x)}(\alpha) &= \alpha(x)\\
    \hmng{\calN' \vdash f: \iota^{\arity{f}} \to \iota}(\alpha) &= \widehat{F_f}\\
    \hmng{\calN' \vdash s\,t: \tau}(\alpha) &= \hmng{\calN' \vdash s:\sigma \to \tau}(\alpha) (\hmng{\calN' \vdash t: \sigma}(\alpha))\\
    \hmng{\calN' \vdash \lambda x : \sigma_1.\,t:\sigma_2}(\alpha) &= \lambda v \in \hmng{\sigma_1}.\,\hmng{\calN', x:\sigma_1 \vdash t}(\alpha[x \mapsto v])
\end{align*}
for $\widehat{F_f} \in \hmng{\iota^{\arity{f}} \to \iota}$ the usual Herbrand interpretation of $f:\iota^{\arity{f}} \to \iota \in \Sigma$. 
We define $\hmng{\calG}_{\calN'}: \hmng{\calN'} \To \hmng{\calN'}$ pointwise by 
$\hmng{\calG}_{\calN'}\,(\alpha)\,(x) \defeq \hmng{\calN' \vdash \calR(x):\calN'(x)}(\alpha)$.

\begin{lemma}
\label{lem:H_continuous}
$\hmng{\calG}_{\calN} : \hmng{\calN} \Rightarrow \hmng{\calN}$ is continuous for all deterministic HoRS $\calG = \abra{\calN,\Sigma,\calR,S}$.
\end{lemma}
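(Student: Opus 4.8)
The plan is to prove continuity of $\hmng{\calG}_{\calN}$ by reducing it, via its pointwise definition, to the continuity of $\hmng{\calN \vdash t : \sigma}$ for each right-hand side $t$ of a rewrite rule, and then to prove continuity of the meaning function $\hmng{\calN' \vdash t : \sigma}$ for \emph{all} well-sorted applicative terms $t$ by structural induction on $t$. Since $\hmng{\calG}_{\calN}(\alpha)(F) = \hmng{\calN \vdash \calR(F) : \calN(F)}(\alpha)$ and the target $\hmng{\calN}$ carries the pointwise order over the $\hmng{\calN(F)}$, a function into $\hmng{\calN}$ is continuous iff each of its (finitely many) components is; so it suffices to show $\alpha \mapsto \hmng{\calN \vdash \calR(F):\calN(F)}(\alpha)$ is continuous for each $F \in \dom(\calN)$.

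First I would record the infrastructure facts about the sort frame $\hmng{-}$: each $\hmng{\sigma}$ is a dcpo (by induction on $\sigma$, using that $\alltrees$ is a dcpo and that the continuous function space $X \cTo Y$ of dcpos is again a dcpo under the pointwise order), and likewise each $\hmng{\calN'}$ is a dcpo as a product of dcpos, with suprema of directed sets computed componentwise. I would also note the two standard lemmas I will invoke repeatedly: (i) evaluation $\mathrm{ev} : (X \cTo Y) \times X \to Y$, $(g,x) \mapsto g(x)$, is continuous, and more precisely separately continuous in each argument and jointly continuous on directed sets; and (ii) a function of several arguments is continuous iff it is continuous in each argument separately (for dcpos). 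Then the induction on the derivation of $\calN' \vdash t : \sigma$ proceeds case by case. For the variable case $\hmng{\calN' \vdash x : \calN'(x)}(\alpha) = \alpha(x)$ this is projection, which is continuous. For the constant case $\hmng{\calN' \vdash f : \iota^{\arity{f}} \to \iota}(\alpha) = \widehat{F_f}$ this is a constant function, hence continuous. For application $\hmng{\calN' \vdash s\,t : \tau}(\alpha) = \hmng{\calN' \vdash s : \sigma \to \tau}(\alpha)\bigl(\hmng{\calN' \vdash t : \sigma}(\alpha)\bigr)$, the map is the composite of $\alpha \mapsto \bigl(\hmng{\calN' \vdash s}(\alpha), \hmng{\calN' \vdash t}(\alpha)\bigr)$ — continuous by the induction hypotheses and the fact that pairing into a product is continuous iff each component is — followed by evaluation, which is continuous by (i). For the abstraction case $\hmng{\calN' \vdash \lambda x{:}\sigma_1.\,t : \sigma_1 \to \sigma_2}(\alpha) = \lambda v \in \hmng{\sigma_1}.\,\hmng{\calN', x{:}\sigma_1 \vdash t}(\alpha[x \mapsto v])$, there are two things to check: that the value $\lambda v.\,\hmng{\calN',x{:}\sigma_1 \vdash t}(\alpha[x\mapsto v])$ genuinely lies in the continuous function space $\hmng{\sigma_1} \cTo \hmng{\sigma_2}$ (i.e. it is continuous in $v$), and that the assignment $\alpha \mapsto (\lambda v.\,\ldots)$ is itself continuous. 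Both follow from the induction hypothesis applied to $\calN', x{:}\sigma_1 \vdash t$ — continuity in $v$ because $v \mapsto \alpha[x\mapsto v]$ is continuous and the IH gives continuity of $\hmng{\calN',x{:}\sigma_1 \vdash t}$, and continuity in $\alpha$ because suprema in the function space $\hmng{\sigma_1}\cTo\hmng{\sigma_2}$ are pointwise, so for a directed $D \subseteq \hmng{\calN'}$ one computes $\bigl(\lambda v.\,\hmng{t}((\lub D)[x\mapsto v])\bigr) = \lambda v.\,\lub_{\alpha \in D} \hmng{t}(\alpha[x\mapsto v]) = \lub_{\alpha \in D}\bigl(\lambda v.\,\hmng{t}(\alpha[x\mapsto v])\bigr)$, using that $(\lub D)[x\mapsto v] = \lub_{\alpha\in D}\alpha[x\mapsto v]$ and the IH.

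The main obstacle is the abstraction case, specifically the interchange of the $\lambda v$ binder with the directed supremum over $\alpha$: one must be careful that directedness of $D$ in $\hmng{\calN'}$ transfers to directedness of $\{\alpha[x\mapsto v] : \alpha \in D\}$ in $\hmng{\calN',x{:}\sigma_1}$ (it does, since $x \notin \dom(\calN')$ so the update is on a fresh coordinate and the order is pointwise), and that the resulting pointwise-in-$v$ supremum is the correct supremum in the function space — which is exactly the defining property of the pointwise order on $X \cTo Y$, together with the fact that a pointwise supremum of a directed family of continuous functions is continuous. A secondary, purely bookkeeping point is that the excerpt writes $\hmng{\calN' \vdash t : \sigma} : \hmng{\calN'} \Rightarrow \hmng{\sigma}$ with the bare arrow $\Rightarrow$; I would make explicit at the outset that the content of this lemma (and the reason the recursion is well defined) is precisely that this function is in fact continuous, i.e. lands in $\hmng{\calN'} \cTo \hmng{\sigma}$, so that the application and abstraction clauses type-check — the lemma as stated for $\calG$ is then the special case obtained by taking $t = \calR(F)$, $\calN' = \calN$, and assembling the finitely many components.
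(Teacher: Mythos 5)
Your proof is correct. The paper actually states Lemma~\ref{lem:H_continuous} without any proof, treating it as a standard consequence of the fact that dcpos and continuous maps form a cartesian closed category; your argument --- componentwise reduction over the finite product $\hmng{\calN}$, then structural induction on the term with the projection, constant, evaluation-after-pairing, and currying cases --- is exactly the standard proof one would supply, and the two points you flag as delicate (that the abstraction clause lands in $\hmng{\sigma_1}\cTo\hmng{\sigma_2}$, and the interchange of $\lambda v$ with the directed supremum over the environment, justified by updates on a fresh coordinate preserving directedness and suprema in the function space being pointwise) are precisely the places where care is needed. Nothing in your write-up conflicts with the paper's definitions, so it can stand as the missing proof.
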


We define the (denotational) semantics of $\calG$ as $\lfp(\hmng{\calG}_{\calN})$, the least fixpoint of the continuous endofunction $\hmng{\calG}_{\calN}$, which is well-defined by Kleene's Theorem.
Henceforth we write $\sem{\calG} \defeq \lfp(\hmng{\calG}_{\calN}) (S)$, where $S \in \calN$ is the start symbol of $\calG$.
Note that $\sem{\calG}$ is the $\Sigbot$-labelled tree generated by $\calG$.

\subsubsection{Computability of \texorpdfstring{$\bot$}{bottom}-free transform of HoRS}
\label{sec:HoRS_botfree}

Intuitively, eliminating $\bot$ from $\sem{\calG}$ allow us to distinguish ``unfinished'' trees from ``finished'' (but diverging) trees in e.g.~the proofs in Section~\ref{sec:correctness}.

As usual, let $\Sigbot$ be a finite ranked alphabet $\Sigma$ extended with (nullary) $\bot$.
Let $\bsym:\iota \to \iota \notin \Sigbot$ (for ``bottom'') be a fresh terminal symbol.

\begin{definition}
Given a $\Sigbot$-labelled tree, its \emph{$\bot$-free conversion} is obtained by replacing every $\bot$-labelled node by the infinite linear tree $\bsym \ (\bsym \ 
 (\bsym \ \dots))$.
\end{definition}

\begin{lemma}[Computability of $\bot$-free transform of HoRS]
\label{lem:bot-free-transform}
There is an algorithm that, given a HoRS $\calG$, returns a HoRS -- call it the \emph{$\bot$-free transform} of $\calG$ -- that generates the $\bot$-free conversion of $\sem{\calG}$. 
\end{lemma}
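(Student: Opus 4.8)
The plan is to reduce the statement to two effective operations on recursion schemes: first \emph{making $\bot$ explicit}, and then a routine \emph{tree substitution} that grafts the infinite linear tree onto the resulting leaves. Note first that, since $\bot$ is nullary, every $\bot$-labelled node of $\sem{\calG}$ is a leaf; hence the $\bot$-free conversion of $\sem{\calG}$ is exactly the tree obtained by replacing each such leaf with a fresh copy of $\bsym\,(\bsym\,(\bsym\cdots))$, henceforth $\bsym^\omega$, which is itself generated by the trivial order-$0$ HoRS $\calG_\bsym$ with the single rule $D = \bsym\,D$ (so $\sem{\calG_\bsym} = \bsym^\omega$). It therefore suffices to exhibit an algorithm realising this particular tree substitution while preserving scheme-generability and not increasing the order.

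The obstruction is that the rules of $\calG$ never mention $\bot$: the $\bot$-labels of $\sem{\calG}$ are produced not by any rewrite step but by \emph{divergent} (non-productive) infinite reductions, so their positions cannot be located syntactically in $\calG$. To overcome this I would invoke the logical reflectivity of HoRS with respect to MSO-definable node properties, in the sense of \cite{DBLP:conf/lics/BroadbentCOS10}: the predicate ``the node $x$ carries label $\bot$'' — equivalently, the subterm reached at $x$ has no terminal-headed normal form — is MSO-definable over $\Sigbot$-labelled trees, and the reflection construction (via the associated alternating automaton) yields, effectively and with the same order, a HoRS $\calG_\bot$ over the extended ranked alphabet $\Sigbot$ in which $\bot$ is now a genuine nullary terminal, such that $\sem{\calG_\bot} = \sem{\calG}$ — the $\bot$-labels now being emitted by explicit rewrite steps rather than arising from divergence. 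This is the crux; everything before and after is bookkeeping.

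Given $\calG_\bot = \abra{\calN_\bot,\Sigbot,\calR_\bot,S}$, I would then form $\calG'$ by adjoining the order-$0$ nonterminal $D:\iota$ with rule $D = \bsym\,D$ and replacing every occurrence of the nullary terminal $\bot$ in the right-hand sides of $\calR_\bot$ by $D$. Since $\bot$ and $D$ both have sort $\iota$ and occur only at leaves, this is sort-preserving, and the order of $\calG'$ equals that of $\calG_\bot$. A straightforward induction on the Kleene approximants of $\hmng{\calG'}$, using continuity (Lemma~\ref{lem:H_continuous}) and the compositionality of $\hmng{-}$, together with $\sem{\calG_\bsym} = \bsym^\omega$, shows that $\sem{\calG'}$ is precisely $\sem{\calG_\bot}$ with each $\bot$-leaf replaced by $\bsym^\omega$, i.e.\ the $\bot$-free conversion of $\sem{\calG}$. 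Hence $\calG'$ is the sought $\bot$-free transform, obtained effectively; with the observation from the introduction that two HoRS are equivalent iff their $\bot$-free transforms are, this is all that is needed downstream.

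The hard part, as indicated, is the middle paragraph: converting the ``semantic'' $\bot$'s of $\sem{\calG}$ into syntactic output of a scheme of the \emph{same order}. I expect the remaining technical care to lie in checking that the reflection construction preserves determinism of the HoRS (automatic, since it outputs a scheme) and the order bound, and that the final leaf substitution commutes with the least-fixpoint semantics as claimed — both routine but worth spelling out.
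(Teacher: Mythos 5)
There is a genuine gap at the step you yourself flag as the crux. The reflection theorem (Theorem~\ref{thm:HoRS_reflective}, after \cite{DBLP:conf/lics/BroadbentCOS10}) only guarantees that $\calG_\phi$ generates the \emph{same underlying tree} with the nodes satisfying $\phi$ relabelled from $f$ to $\underline{f}$ for terminals $f \in \Sigma$; it says nothing about \emph{how} that tree is produced. A $\bot$-labelled node of $\sem{\calG}$ is not produced by any rewrite step: it arises in the limit from a divergent subcomputation, and the reflected scheme still diverges at that position. Consequently your $\calG_\bot$ does not ``emit $\bot$ by explicit rewrite steps'' -- there is no occurrence of a nullary terminal $\bot$ in the right-hand sides of $\calR_\bot$ for your subsequent substitution $\bot \mapsto D$ to act on. Reflecting the (trivially MSO-definable) property ``this node is labelled $\bot$'' relabels the value tree but does not reify divergence into syntax; what you actually need is to decide, at each intermediate configuration, whether the remaining computation diverges and to output a marker if so, which is a different (and stronger-looking) statement than reflection of a node property of the value tree.

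The paper closes exactly this gap with its Stage~1 productivity transformation: every rule whose right-hand side is headed by a nonterminal or variable is wrapped in a fresh unary terminal $\bsym$, so that $\sem{\calG_1}$ contains \emph{no} $\bot$-nodes at all -- divergence is converted into infinite production of $\bsym^\omega$, while finite stalls become finite $\bsym^\ast$ segments. Only then is reflection applicable, to the genuinely $\mu$-calculus-definable property ``this $\bsym$ eventually reaches a $\Sigma$-symbol'' (i.e.\ is part of a finite $\bsym$-chain); those occurrences are relabelled to $\ssym$ and contracted away via an identity nonterminal $I = \lambda x.\,x$ in Stage~3, leaving $\bsym^\omega$ precisely at the former $\bot$-positions. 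Your first and last paragraphs (the reduction to a leaf substitution, the rule $D = \bsym\,D$, and the fixpoint-induction argument that the substitution commutes with the semantics) are fine in spirit, but without a productivity step the middle of your argument cannot be carried out with the tools the paper provides.
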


For clarity, we \emph{convert} trees, but \emph{transform} HoRS (their generators).

It is clear from freshness of $\bsym : \iota \to \iota \notin \Sigbot$ that the following holds.

\begin{proposition}
HoRS are equivalent if and only if their respective $\bot$-free transforms are equivalent.
\end{proposition}

Before we present a three-stage algorithm to transform a HoRS $\calG = \abra{\calN,\Sigma,\calR,S}$ to its $\bot$-free transform and an example in Figure~\ref{fig:botfree_example}, we require some background on logical reflection.

\subsubsection*{Logical reflection of HoRS}

Let $\mathfrak{R}$ be a class of generators of $\Sigma$-labelled trees, and $\calL$ be a set of correctness properties of these trees. 
Define the ranked alphabet $\Sigma' := \set{\underline{f} : \sigma \mid f : \sigma \in \Sigma}$, which is a copy of $\Sigma$.
Given a generator $\calG \in \mathfrak{R}$ and property $\phi \in \calL$, we say that $\calG_\phi$ is a \emph{$\phi$-reflection} of $\calG$ just if 
\begin{enumerate}
\item $\calG$ and $\calG_\phi$ generate the same underlying tree, and 
\item if node $\alpha$ of $\sem{\calG}$ has label $f$, then node $\alpha$ of $\sem{\calG_\phi}$ is labelled $\underline{f}$ if $\alpha$ satisfies $\phi$ and $f$ otherwise. 
\end{enumerate}
We say that $\mathfrak{R}$ is \emph{reflective w.r.t.~$\calL$} just if there is an algorithm that transforms a given pair $\abra{\calG, \phi}$ to $\calG_\phi$.

\begin{theorem}[\cite{DBLP:conf/lics/BroadbentCOS10}]\label{thm:HoRS_reflective}
HoRS are reflective w.r.t.~modal $\mu$-calculus and monadic second-order logic.
\end{theorem}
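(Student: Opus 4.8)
The plan is to reduce the statement, uniformly for both logics, to \emph{alternating parity tree automata} (APT), and then to transport an effective ``regularity of winning regions'' theorem for collapsible pushdown games back to recursion schemes. First I would note that over $\Sigma$-labelled ranked trees the modal $\mu$-calculus (with one modality per child direction) and MSO are effectively inter-translatable, and every closed formula is equivalent to an APT; a property $\phi(x)$ with one free node-variable is handled by the usual marking trick, so that $\phi$ holds at node $\alpha$ of a tree $t$ iff the existential player wins the acceptance parity game of the APT $\calA_\phi$ on $t$ from position $(\alpha,q_I)$, with $q_I$ the initial state. Hence it suffices to exhibit an algorithm that, given a recursion scheme $\calG$ and an APT $\calA$, outputs a scheme generating $\sem{\calG}$ in which each node $\alpha$ carries the underlined copy of its label exactly when $(\alpha,q_I)$ is winning for the existential player.

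Next I would pass from schemes to machines: by the equi-expressivity of order-$n$ HoRS and order-$n$ collapsible pushdown automata (CPDA), effectively convert $\calG$ into a CPDA $\calC$ generating $\sem{\calG}$. The synchronous product of the generation process of $\calC$ with $\calA$ is an order-$n$ \emph{collapsible pushdown parity game} $\mathbb{G}$ whose positions reached just before a terminal is emitted correspond to pairs $(\alpha,q)$ of a tree node and an automaton state. The remaining task is to decorate the output of $\calC$ according to the winner of $\mathbb{G}$ at these positions and then translate the decorated CPDA back into a recursion scheme using the reverse equi-expressivity direction.

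The hard part is precisely what one needs about $\mathbb{G}$: not merely decidability of ``who wins from a given configuration'', but a \emph{uniform, on-the-fly} computation of the winner at every output position, realised by a device of the \emph{same} order $n$. This is exactly the content of the effective winning-region/winning-strategy results for collapsible pushdown parity games obtained via the saturation method and its extension to collapse (building on Carayol--Serre for the collapse-free case and on Hague--Murawski--Ong--Serre): the set of winning configurations is ``regular'' in the appropriate stack-automaton sense, and one can build an order-$n$ CPDA $\calC'$ that simulates $\calC$ with finitely much extra control and, whenever $\calC$ is about to emit a terminal $f$ at a node $\alpha$, evaluates this regular predicate at $(\alpha,q_I)$ and emits $\underline{f}$ or $f$ accordingly. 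Keeping the strategy computation implementable at order $n$ --- rather than blowing up the order, or settling for a non-uniform yes/no answer --- is where essentially all the difficulty lies; an alternative route reaching the same annotation more directly is Ong's traversal/variable-profile analysis of the $\mu$-calculus model-checking game, decorating the scheme with the winning variable profiles.

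Finally I would assemble the algorithm: $\calC'$ is converted back to an order-$n$ recursion scheme $\calG_\phi$, which by construction generates the same underlying tree as $\calG$ and underlines precisely the $\phi$-satisfying nodes, so $\calG_\phi$ is a $\phi$-reflection of $\calG$. Every step used --- $\phi \mapsto \calA_\phi$, the two directions of the scheme $\leftrightarrow$ CPDA equi-expressivity, the product, the winning-region CPDA, and the relabelling --- is effective, so $\abra{\calG,\phi} \mapsto \calG_\phi$ is computable, proving that HoRS are reflective with respect to the modal $\mu$-calculus and MSO.
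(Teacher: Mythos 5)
This theorem is imported by citation -- the paper gives no proof of its own, deferring entirely to Broadbent, Carayol, Ong and Serre \cite{DBLP:conf/lics/BroadbentCOS10} -- and your sketch is an accurate reconstruction of exactly the argument in that cited work: APT translation with the marking trick, the HoRS/CPDA equi-expressivity of Hague--Murawski--Ong--Serre, the product collapsible pushdown parity game, and the effective regularity of its winning region realised by an order-$n$ CPDA that is converted back to a scheme. So the proposal is correct and takes essentially the same route as the source the paper relies on; there is nothing in the paper itself to contrast it with.
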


\paragraph{Stage 1: From $\Sigbot$-labelling $\calG$ to $\Sigma \cup \set{\bsym}$-labelling $\calG_1$.}

The input HoRS $\calG$ is first transformed to a \emph{$\bsym$-productive} counterpart $\calG_1 \coloneqq \abra{\calN,\Sigma \cup \makeset{\bsym},\calR',S}$. 
The idea is that in the potentially infinite process of generating the tree $\sem{\calG_1}$ from the start nonterminal $S$ by leftmost-outermost rewriting, each rewriting step is witnessed by either a terminal symbol from $\Sigma$ or by $\bsym$.
The set $\calR'$ of rewrite rules of $\calG_1$ is defined as follows.
For every $F:\sigma_1 \to \dots \to \sigma_n \to \iota \in \calN$:
\begin{itemize}
    \item if $\calR(F) = \lambda x_1 \dots x_n.\,f \, t_1 \dots t_m$ for some $f \in \Sigma$, then $\calR'(F) \defeq \calR(F)$
    \item if $\calR(F) = \lambda x_1 \dots x_n.\,\$ \, t_1 \dots t_m$ for $\$ \in \calN \cup \rsvars$, then $\calR'(F) \defeq \lambda x_1 \dots x_n.\,\bsym\,(\$ \, t_1 \dots t_m)$
\end{itemize}
Notice that the tree $\sem{\calG_1}$, by construction, does not have any $\bot$-labelled nodes.
Intuitively we can get $\sem{\calG}$ back from $\sem{\calG_1}$ by erasing finite $\bsym^\ast$, and replacing infinite $\bsym^\omega$ by $\bot$.

\paragraph{Stage 2: From $\Sigma \cup \set{\bsym}$-labelling $\calG_1$ to $\Sigma \cup \set{\bsym, \ssym}$-labelling $\calG_2$.} 

We define a modal $\mu$-calculus formula:
\[
    \varphi 
        \defeq p_\bsym \land \mu X. \left(\bigvee_{f \in \Sigma} \diamond_1 p_f \lor \diamond_1 X \right)
\]
where $p_\bsym$ (resp.~$p_f$ for $f \in \Sigma$) is a propositional variable that denotes that a node is labelled with $\bsym$ (resp.~$f \in \Sigma$).
Refer to \cite{Bradfield2001} for the syntax and semantics of the modal $\mu$-calculus.
Note that this formula holds for $\bsym$-labelled nodes that are not ``part of some infinite $\bsym^\omega$''.

Let $\ssym:\iota \to \iota \not\in \Sigbot$ (for ``step'') be another fresh arity-1 terminal symbol.
Consider the following operation on $\Sigma \cup \makeset{\bsym}$-labelled trees.
\begin{quote} 
For every node $\alpha$, if $\alpha \vDash \phi$ then rewrite the label at $\alpha$ to $\ssym$, otherwise do nothing.
\end{quote}
This operation leaves exactly those occurrences of $\bsym$ in some infinite $\bsym^\omega$ (which witnesses $\bot$) intact, while rewriting finite paths $\bsym^\ast$ to $\ssym^\ast$.
We call this operation \emph{$\bsym$-to-$\ssym$ conversion}.

Thanks to Theorem~\ref{thm:HoRS_reflective}, the main result in \cite{DBLP:conf/lics/BroadbentCOS10}, there is an algorithm that, given $\calG_1$, returns a HoRS $\calG_2$ over $\Sigma \cup \makeset{\bsym,\ssym}$ that generates the $\bsym$-to-$\ssym$ conversion of $\sem{\calG_1}$.
In the language of~\cite{DBLP:conf/lics/BroadbentCOS10}, $\calG_2$ is the \emph{$\phi$-reflection} of $\calG_1$ where $\phi$ (above) is a property definable in the modal $\mu$-calculus.

\paragraph{Stage 3: From $\Sigma \cup \set{\bsym, \ssym}$-labelling $\calG_2$ to $\Sigma \cup \set{\bsym}$-labelling $\calG_3$.}

Although the tree $\sem{\calG_2}$ does not have infinite paths exclusively labelled by $\ssym$,
it may still have nodes labelled by $\ssym$.
We construct a $\Sigma \cup \set{\bsym}$-labelling HoRS $\calG_3$ that generates the tree $\sem{\calG_2}$ but with these remaining $\ssym$-labelled nodes cut out, which is easily achieved by replacing every occurrence of the terminal symbol $\ssym$ in the rewrite rules of $\calG_2$ by the identity nonterminal $I$.
To be precise, if $\calR_2$ is the set of rewrite rules of $\calG_2$, then the resultant HoRS
\[
\calG_3
\coloneqq
\anglebra{
\calN \cup \set{I}, 
\Sigma \cup \set{\bsym}, 
\set{F = \lambda \overline x . t[I / \ssym] \mid 
F = \lambda \overline x . t \in \calR_2} 
\cup \set{I = \lambda x . x},
S
}
\]
is the $\bot$-free transform of the input HoRS $\calG$.
\vspace*{-11pt}
\begin{figure}[h]
\centering
\raisebox{0.4cm}{
\begin{subfigure}[l]{0.25\textwidth}
\begin{align*}
    S &= \textcolor{red}{b\,(}F\,\zero\textcolor{red}{)}\\
    F &= \lambda x.\, \cons\,(G\,x)\,(F\,(\succ\,x)) \\
    G &= \lambda x.\, \textcolor{red}{b\,(}G\, (\succ \, x)\textcolor{red}{)}
\end{align*}
\end{subfigure}}
\hspace{0.4cm}
\raisebox{1cm}{
\scalebox{.8}{
\Tree[.{$\cons$} {$\bot$} 
        [.{$\cons$} {$\bot$} {\dots} ]]
}
$\mapsto$ 
\scalebox{.8}{
\Tree[.{$\textcolor{red}{b}$} 
		[.{$\cons$} {$\textcolor{red}{b}^\omega$} 
        [.{$\cons$} {$\textcolor{red}{b}^\omega$} {\dots} ]]]
}
$\mapsto$ 
\scalebox{.8}{
\Tree[.{$\textcolor{blue}{s}$} 
		[.{$\cons$} {$\textcolor{red}{b}^\omega$} 
        [.{$\cons$} {$\textcolor{red}{b}^\omega$} {\dots} ]]]
}
$\mapsto$ 
\scalebox{.8}{
\Tree[.{$\cons$} {$\textcolor{red}{b}^\omega$} 
        [.{$\cons$} {$\textcolor{red}{b}^\omega$} {\dots} ]]
}
}
\vspace*{-4pt}
\caption{Conversion $\sem{\calG} \mapsto \sem{\calG_1} \mapsto \sem{\calG_2} \mapsto \sem{\calG_3}$, with HoRS $\calG$ on the left (w/o $\textcolor{red}{b}$s; $\calG_1$ with $\textcolor{red}{b}$s)}
\label{fig:botfree_example}
\end{figure}
\vspace*{-20pt}


\section{Encoding HoRS-to-HoCHC logic program}
\label{sec:HoRS_transformation}
\label{sec:transformation}

In this section, we encode a (deterministic) HoRS $\calG = \abra{\calN,\Sigma,\calR,S}$ into a HoCHC logic program that captures the meaning of the HoRS under the coinductive monotone interpretation.
We assume $\sem{\calG}$ does not contain $\bot$-labelled nodes, which is WLOG by Lemma~\ref{lem:bot-free-transform}.

We define the \emph{HoRS-to-HoCHC encoding $\sorts P_\HORS : \Delta_\HORS$ of HoRS $\calG$} over the coinductive monotone HoCHC interpretation where $\mmng{\iota}$ is interpreted as the underlying set of $\hmng{\iota}$, which is the set of finite and infinite trees over $\Sigbot$.
The constrained logic program $\sorts P_\HORS : \Delta_\HORS$ is defined by:
\[
\Delta_\HORS 
\defeq
\left\{
     R_{F} : \Rel^+(\sigma) \mid F:\sigma \in \calN
\right\} 
\qquad
P_{\HORS} 
\defeq
\left\{
    R_{F} :\Rel^+(\sigma) = \lift{\calR(F)} \mid F:\sigma \in \calN
\right\}
\]
where 
\[
    \Rel^-(\iota) \defeq \iota \qquad 
    \Rel^+(\iota) \defeq\iota \to o \qquad 
    \Rel^-(\sigma_1 \to \sigma_2) \defeq \Rel^+(\sigma_1 \to \sigma_2) \defeq \Rel^-(\sigma_1) \to \Rel^+(\sigma_2).
\]

In the HoRS-to-HoCHC encoding below, we annotate variables with superscripts of not merely sorts but of \emph{interpreted} sorts -- $\hmng{\sigma}$ or $\mmng{\sigma}$ for sort $\sigma$ -- to distinguish HoRS and HoCHC variables.

Let the metavariable $\$$ range over $\nonterms \cup \Sigma \cup V_{\text{RS}}$. We define a transformation $\$ \mapsto \$'$ according to:
\[
\begin{array}{l|l|l}
& \multicolumn{1}{c|}{\$} & \multicolumn{1}{c}{\$'}\\
\hline
\hbox{Variables}  & x : \iota & D_{x'} : \Rel^+(\iota)\\
& x : \sigma_1 \to \sigma_2 & x' : \Rel^+(\sigma_1 \to \sigma_2)\\
\hbox{Terminals} & f : \iota^{n} \to \iota & D_f : \Rel^+(\iota^{n} \to \iota)\\
\hbox{Nonterminals} & F : \sigma & R_F : \Rel^+(\sigma)
\end{array}
\]
where
\[ D_{f} := \lambda x_1 \dots x_{\arity{f}} r. \, (f\,x_1\dots x_{\arity{f}} = r). \]
Note that $R_F$ is a relational variable, but $D_f$ is merely a shorthand; 
$D_f$ is not a symbol -- and neither is $D_{x'}$.
This shorthand allows us to present the relational lift $\lift{-}$ in a simpler way.
Whenever $D_f$ occurs in some encoded HoRS term, it occurs in a fully applied term $D_f\,t_1\dots t_{\arity{f}}\,r$, which is $\beta$-equivalent to $f\,t_1\dots t_{\arity{f}} = r$.
It is this latter term we use in practice (similarly for $D_{x'}$).

For each $F:\sigma \in \nonterms$, we define $\lift{\calR(F)}: \Rel^+(\sigma)$, called the \emph{relational lift}, as follows.
We write $x' : \Rel^-(\tau)$ for the HoCHC variable that is the relational clone of HoRS variable $x : \tau \in \rsvars$, such that:
\[
\lift{\lambda x_1^{\hmng{\sigma_1}} \ldots x_m^{\hmng{\sigma_m}} . \, e} \defeq
\lambda {x'}_1^{\mmng{\Rel^-(\sigma_1)}} \dots {x'}_{m}^{\mmng{\Rel^-(\sigma_m)}}.\, \lift{e}
\]

For HoRS term $\$ \, e_1 \dots e_l$, we define the relational lift as:
\[
\lift{\$ \, e_1 \dots e_l} := 
\begin{array}{l}
\lambda y_1^{\mmng{\Rel^-(\sigma_1)}} \dots y_n^{\mmng{\Rel^-(\sigma_n)}}\, r.\, \\
\qquad \qquad \exists r_1 \dots r_l.  \left(
\begin{array}{ll}
 & \$' \, \dlift{(e_1,r_1)}\dots \dlift{(e_l,r_l)}\,  y_1 \dots y_n\,  r  \\
\land & \bigwedge_{i=1}^l \Prop(e_i,r_i)
\end{array}
\right)
\end{array}
\]
for fresh HoCHC variables $y_1, \dots, y_n$, where $\$: \tau_1 \to \dots{} \to \tau_l \to \sigma_1 \to \dots{} \to \sigma_n \to \iota$ and
\[
    \dlift{(e:\sigma,r)} := \left\{ 
    \begin{array}{ll}
    r & \text{if }\sigma=\iota\\
    \lift{e:\sigma} & \text{o/w}
    \end{array}
    \right.
    \qquad
    \Prop(e:\sigma,r) := \left\{ \begin{array}{ll}
    \lift{e:\sigma}\,r & \text{if }\sigma=\iota\\
    \mathsf{true} & \text{o/w.}
    \end{array}
    \right.
\]

It is worth pointing out that $\lift{e : \sigma}: \Rel^+(\sigma)$, $\dlift{(e:\sigma,r)}:\Rel^-(\sigma)$, and $\Prop(e:\sigma,r):o$.

\subsection{Correctness}
\label{sec:correctness}

Our HoRS-to-HoCHC encoding $\vdash P_\calG : \Delta_\calG$ contains a relational variable for each nonterminal symbol in the original HoRS $\calG = \abra{\calN,\Sigma,\calR,S}$.
We claim that the HoCHC rational variable $R_S: \iota \to o$ corresponding to start symbol $S$ valuates to the characteristic function of $\sem{\calG}$ in the greatest model of $\vdash P_\calG: \Delta_\calG$:

\begin{samepage}
\begin{restatable*}[Correctness]{theorem}{correctnessEquality}
\label{thm:correctness_equality}
\label{thm:homc_correctness}
$\mmng{\Delta_\calG \vdash R_S}(\gfp(T^\calM_{P_\calG:\Delta_\calG})) \, t = 1$ 
if and only if
$t = \sem{\calG}$.
\end{restatable*}
To prove the theorem, we establish a lockstep between iterations of the HoRS endofunction $\hmng{\calG}_\calN$ (in the ascending Kleene chain) and (descending) iterations of the HoCHC one-step consequence operator $T_{\Delta_\calG:P_\calG}^\calM$.
The proof consists of four parts, corresponding to the respective sections of Appendix~\ref{apx:correctness}.

\nopagebreak

First, we define two families of mappings between HoRS semantics and (coinductive) HoCHC semantics.
These mappings allow us to embed HoRS semantics into HoCHC relations.
Second, we show that there exists a $\bot$-free tree $t$ such that $\mmng{\Delta_\calG \vdash \lift{S}}(T_{P_\calG:\Delta_\calG}^{\calM \, n}(\top_{\Delta_\calG}))\,t = 1$, for every iteration $n$ of the one-step consequence operator (Lemma~\ref{lem:correctness_nonemptiness}, ``nonemptiness'').
Third, we show that each $\mmng{\Delta_\calG \vdash \lift{S}}(T_{P_\calG:\Delta_\calG}^{\calM \, n}(\top_{\Delta_\calG}))$ is included in the embedding of $\hmng{\calN \vdash S}(\hmng{\calG}_{\calN}^n(\bot_\calN))$ into HoCHC (Corollary~\ref{cor:correctness_inclusion}, ``inclusion'').
\end{samepage}

Finally, we prove that these ``nonemptiness'' and ``inclusion'' results suffice to show that $R_S$ valuates to the characteristic function of $\sem{\calG}$ in the greatest model of $\vdash P_\calG : \Delta_\calG$ (Theorem~\ref{thm:homc_correctness}).

Appendix~\ref{apx:correctness} details the full proof. 
Although our results in this section pertain to ground sort $\iota$, we require logical relations and proofs lifted to higher-sorts to attain them.
This starts with the families of mappings $\imap_\sigma,\jmap_\sigma$ between HoRS semantics and HoCHC relations that are markedly simpler for sort $\iota$ (Definition~\ref{def:mappings_iota}) than the full mappings defined in Definition~\ref{def:rel_emb}.

\begin{definition}[Embedding of trees into HoCHC relations]
\label{def:mappings_iota}
We define a function $\imap_\iota : \hmng{\iota} \to  \mmng{\Rel^+(\iota)}$ by $\imap_\iota(t) := \lambda s.\,t \sqsubseteq s$, for all $t \in \hmng{\iota}$.
\end{definition}
Note that the function $\imap_\iota$ allows us to embed HoRS trees into HoCHC relations; it is antitone and injective.
Because we are trying to relate a least fixpoint (HoRS semantics) to a greatest fixpoint (coinductive HoCHC), the following lemma is key.
Please refer to Appendix~\ref{apx:correctness} for the proofs.

\begin{restatable}{lemma}{auxiliaryLubGlb}
\label{lem:homc_lub_glb}
For all directed sets $D \subseteq \hmng{\iota}$,
    $\imap_\iota \left(\lub D\right) 
        = \glb \makeset{\imap_\iota(d) \mid d \in D}$.
\end{restatable}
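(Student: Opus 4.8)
The plan is to reduce the identity to the universal property of directed suprema in the dcpo $\hmng{\iota}$ by evaluating both sides pointwise. First I would recall that $\mmng{\Rel^+(\iota)} = \mmng{\iota \to o} = \mmng{\iota} \mTo \mmng{o}$, and that since the monotone sort frame equips $\mmng{\iota}$ with the \emph{discrete} order, every function $\mmng{\iota} \to \mathbb{B}$ is monotone; hence $\mmng{\Rel^+(\iota)}$ is (isomorphic to) the powerset lattice of $\alltrees$, each $\imap_\iota(t) = \lambda s.\, t \sqsubseteq s$ is a bona fide element of it, and greatest lower bounds in $\mmng{\Rel^+(\iota)}$ are computed pointwise. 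I would also note that $\lub D$ exists because $D$ is directed and $\hmng{\iota}$ is a dcpo, so the left-hand side is well defined.

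Next I would fix an arbitrary $s \in \mmng{\iota}$ and compare the two functions at $s$. On the left, $\imap_\iota(\lub D)(s) = 1$ iff $\lub D \sqsubseteq s$. On the right, $\glb\makeset{\imap_\iota(d) \mid d \in D}(s) = \glb\makeset{\imap_\iota(d)(s) \mid d \in D}$ in $\mathbb{B}$, which equals $1$ iff $d \sqsubseteq s$ for every $d \in D$. So it suffices to establish, for each $s$,
\[ \lub D \sqsubseteq s \iff \forall d \in D.\ d \sqsubseteq s , \]
and this is precisely the defining property of $\lub D$ as the least upper bound of $D$: the right-hand side says $s$ is an upper bound of $D$; the forward implication follows from $d \sqsubseteq \lub D \sqsubseteq s$ for all $d$ by transitivity, and the backward implication from minimality of $\lub D$ among upper bounds. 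Since $\mathbb{B}$ has only two elements, agreement at every $s$ gives equality of the two functions.

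I do not anticipate a genuine obstacle: the lemma is exactly the universal property of directed suprema transported across $\imap_\iota$. The only points meriting a word of care are that the infimum on the right is taken in the complete lattice $\mmng{\Rel^+(\iota)}$ and is therefore evaluated pointwise (which is why recording the discreteness of the order on $\mmng{\iota}$ is worthwhile), and that directedness of $D$ is used solely to guarantee that $\lub D$ exists in the dcpo $\hmng{\iota}$; the empty case, if admitted, also goes through, since then both sides equal $\lambda s.\, 1$.
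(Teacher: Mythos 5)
Your proposal is correct and follows essentially the same route as the paper: both evaluate pointwise at an arbitrary $s$ and reduce the identity to the equivalence $\lub D \sqsubseteq s \iff \forall d \in D.\ d \sqsubseteq s$, i.e.\ the defining universal property of the least upper bound, with the meet in $\mmng{\Rel^+(\iota)}$ computed pointwise as conjunction in $\mmng{o}$. Your additional remarks on discreteness of the order on $\mmng{\iota}$ and on the empty case are harmless elaborations of what the paper leaves implicit.
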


Greatest upper bounds of chains are preserved by the semantics of relationally lifted HoRS terms.

\begin{restatable}{lemma}{auxiliaryDescendingChain}
\label{lem:homc_decreasing_chain_valuations}
For all typing judgements 
\(
\calN \vdash e : \sigma
\) 
of the HoRS $\calG$,
and non-increasing chains of valuations $\calI \subseteq \mmng{\Delta_\calG}$,
\[ 
    \mmng{\Delta_\calG \vdash \lift{e: \sigma}}\left(\glb \calI\right)
        = \glb_{I \in \calI} \mmng{\Delta_\calG \vdash \lift{e: \sigma}}(I).
\]
\end{restatable}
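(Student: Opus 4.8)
The plan is to proceed by structural induction on the typing derivation $\calN \vdash e : \sigma$, establishing the stronger statement that for *every* well-sorted HoRS term (not just closed ones over $\calN$), the semantics of its relational lift preserves greatest lower bounds of non-increasing chains of valuations — where for higher-sorted terms ``preserves glbs'' must be read pointwise in the extra HoCHC arguments $y_1,\dots,y_n$ and $r$ that the lift introduces. Concretely, I would phrase the induction hypothesis so that it also covers the auxiliary constructs $\dlift{(e,r)}$ and $\Prop(e,r)$ that appear in the definition of the lift, since these are defined by mutual recursion with $\lift{-}$. The base reason the statement is true at all is Lemma~\ref{lem:homc_lub_glb}: at ground sort $\iota$, $\imap_\iota$ turns the least fixpoint directed-sup structure of HoRS trees into the greatest-lower-bound structure on the HoCHC side, and the one-step consequence operator is built from $\land$, $\exists$, application and abstraction, each of which I must check is ``$\glb$-continuous'' (i.e.\ preserves glbs of non-increasing chains) in the appropriate sense.

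The key steps, in order: (1) \emph{Variables and terminals.} For a variable $x'$, the lift is just projection $\beta \mapsto \beta(x')$, which trivially commutes with $\glb$ over a chain since the order on $\mmng{\Delta_\calG}$ is pointwise; for a terminal $f$, $D_f$ is a constant (independent of $\beta$), so the claim is immediate. (2) \emph{Abstraction.} Since $\mmng{-}$ interprets $\lambda$ pointwise and glbs in the monotone function space are computed pointwise, $\glb$ passes through the binder. (3) \emph{The application case $\$\,e_1\dots e_l$}, which is the heart of the argument. Here the lift is $\lambda\vv y\,r.\,\exists r_1\dots r_l.\,(\$'\,\dlift{(e_1,r_1)}\dots\dlift{(e_l,r_l)}\,\vv y\,r \land \bigwedge_i \Prop(e_i,r_i))$. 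Fixing $\vv y, r$ and the quantified $r_1,\dots,r_l$, the body is a finite conjunction of terms each of which, by the induction hypothesis applied to the $e_i$ (and to $\$'$ when $\$$ is itself a nonterminal/variable, in which case $\$'$'s denotation is a chain-indexed family to which the IH applies componentwise), preserves glbs of the chain $\calI$; finite $\min$ (for $\land$) commutes with glbs of a chain because $\mmng{o}=\mathbb B$ is finite and the chain is monotone, and the outer $\exists$, being a $\max$ over a fixed index set $\mmng{\sigma}$, likewise commutes with $\glb$ over a non-increasing chain in the two-element lattice. (4) \emph{Assembling.} Combining (1)–(3) along the derivation of $\calN \vdash e:\sigma$ yields the statement for $e$; specializing to closed terms over $\calN$ gives the lemma as stated.

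I expect the main obstacle to be step~(3), specifically the interchange of $\exists$ (a $\max$ over a possibly infinite domain $\mmng{\sigma}$) with an arbitrary non-increasing chain's $\glb$. This is \emph{not} a general continuity fact — arbitrary sups do not commute with arbitrary infs — so the argument must exploit that we are in the two-valued lattice $\mathbb B$ together with the chain being totally ordered (non-increasing). The clean way is: $\glb_{I\in\calI}\bigl(\max_{x'\in\mmng\sigma} \mmng{\dots}(I[x\mapsto x'])\bigr) = 1$ iff every $I$ in the chain satisfies $\exists x.\,(\dots)$, iff (using that the chain is non-increasing and that, by the IH, the body's value at a fixed $x'$ is itself the glb over the chain) there is a single $x'$ working for the glb valuation — and here one must be slightly careful that the witness can be chosen uniformly, which follows because $\mmng{\dots}(\glb\calI)(x\mapsto x') = \glb_I \mmng{\dots}(I)(x\mapsto x')$ by the IH, so if the left side is $1$ for some $x'$ we are done, and if it is $0$ for all $x'$ then some $I$ already fails. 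The symmetric ``$\sqsupseteq$'' direction is monotonicity. Getting this exchange lemma stated and proved cleanly — perhaps factored out as a small observation about $\max$/$\glb$ interchange in $\mathbb B$ along chains, mirroring Lemma~\ref{lem:homc_lub_glb} — is the technical crux; everything else is bookkeeping over the syntax of the lift.
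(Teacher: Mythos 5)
Your overall strategy coincides with the paper's: strengthen the statement to open terms (an environment $\Gamma$ with a fixed valuation $\theta$ of the lifted variables), proceed by structural induction on $e$, and dispatch the variable, terminal and nonterminal cases by observing that the lift is respectively a constant, a projection onto $\theta$, and a projection onto a component of the chain, where glbs are computed pointwise. You are also right that the entire difficulty is concentrated in the application case $\$\,e_1\dots e_\ell$, and specifically in commuting the existentials $\exists r_1\dots r_\ell$ (a $\max$ over the infinite domain $\mmng{\iota}$) past the $\glb$ of the chain; the paper's own write-up of this case says only that it ``follows from applying the induction hypothesis in a straightforward though laborious unfolding'', so in isolating this interchange you are being more explicit than the source.

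However, the argument you give for the interchange does not close it. The inequality $\max_{x'}\glb_I \sqsubseteq \glb_I\max_{x'}$ is monotonicity, as you say. For the converse you argue: if $\mmng{\cdots}(\glb\calI)[x\mapsto x']=0$ for every $x'$, then ``some $I$ already fails''. Unfolding via the induction hypothesis, your premise is $\forall x'.\,\exists I.\ \mmng{\cdots}(I[x\mapsto x'])=0$, whereas what you need is $\exists I.\,\forall x'.\ \mmng{\cdots}(I[x\mapsto x'])=0$; this $\forall\exists$-to-$\exists\forall$ swap is exactly where sups fail to commute with infs, and neither two-valuedness nor the chain being totally ordered rescues it. Abstractly, take a decreasing chain $(I_m)_{m\in\mathbb N}$ and a body with $f(I_m,x'_n)=1$ iff $m\le n$: each $I_m$ has a witness, so $\glb_m\max_n f=1$, yet $\glb_m f(I_m,x'_n)=0$ for every $n$, so $\max_n\glb_m f=0$. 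Hence the lemma cannot follow from the induction hypothesis plus generic lattice reasoning in $\mathbb B$; one must exploit the specific shape of the lifted clauses -- the witness sets $\set{\vv r\mid \text{body holds at }I}$ form a decreasing family of nonempty sets cut out by prefix-like constraints on trees, so a compactness/K\"onig-style argument (in the spirit of the ``incremental'' reasoning sketched for Lemma~\ref{lem:homc_nonempty_limit}) is needed to extract a uniform witness for $\glb\calI$. Without an argument of that kind, the crux step of your proposal remains open.
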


Let us write $\alpha^n$ for $\hmng{\calG}_{\calN}^n(\bot_\calN)$ and $\beta^n$ for $T_{P_\calG:\Delta_\calG}^{\calM \, n}(\top_{\Delta_\calG})$, so that the following hold.

\begin{restatable}[Nonemptiness]{lemma}{correctnessNonemptiness}
\label{lem:correctness_nonemptiness}
\label{lem:homc_nonempty_limit}
There exists a $\bot$-free tree $t \in \mmng{\iota}$ such that $\mmng{\Delta_\calG \vdash \lift{S}}\left(\glb \beta^n\right)\,t$.
\end{restatable}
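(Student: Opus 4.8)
The plan is to take $t \defeq \sem{\calG}$ as the witness; it is a $\bot$-free tree by the standing assumption of this section (justified by Lemma~\ref{lem:bot-free-transform}) and it lies in $\mmng{\iota}$, the set of finite and infinite $\Sigbot$-labelled trees. By definition of the relational lift, $\lift{S} = \lambda r.\, R_S\, r$, so $\mmng{\Delta_\calG \vdash \lift{S}}(\beta) = \beta(R_S)$ for every valuation $\beta$. Applying Lemma~\ref{lem:homc_decreasing_chain_valuations} to the typing judgement $\calN \vdash S : \iota$ and the chain $\beta^0 \sqsupseteq \beta^1 \sqsupseteq \cdots$ (non-increasing since $T_{P_\calG:\Delta_\calG}^{\calM}$ is monotone and $\beta^0 = \top_{\Delta_\calG}$), and using that meets in $\mmng{\iota \to o}$ are pointwise, we obtain $\mmng{\Delta_\calG \vdash \lift{S}}(\glb \beta^n)\,\sem{\calG} = \glb_n\, \beta^n(R_S)(\sem{\calG})$. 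Since $\mmng{o}$ is the two-element lattice, it therefore suffices to show that $\beta^n(R_S)(\sem{\calG}) = 1$ for every $n$.

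First I would establish, by induction on $n$, the following \emph{lockstep lower bound}: for every nonterminal $F : \sigma \in \calN$,
\[
  \imap_\sigma\bigl(\alpha^n(F)\bigr) \;\sqsubseteq\; \beta^n(R_F) \quad\text{in } \mmng{\Rel^+(\sigma)},
\]
where $\imap_\sigma$ is the embedding of HoRS values into HoCHC relations (Definition~\ref{def:rel_emb}; for $\sigma = \iota$ it restricts to the antitone, injective map $\imap_\iota$ of Definition~\ref{def:mappings_iota}). The base case $n = 0$ is an equality, since $\alpha^0 = \bot_\calN$, $\beta^0 = \top_{\Delta_\calG}$, and $\imap_\sigma$ sends the least element of $\hmng{\sigma}$ to the greatest element of $\mmng{\Rel^+(\sigma)}$ (for $\iota$: $\imap_\iota(\bot) = \lambda s.\, \bot \sqsubseteq s = \top$). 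For the inductive step, write $\gamma^n \in \mmng{\Delta_\calG}$ for the valuation with $\gamma^n(R_G) = \imap_{\calN(G)}(\alpha^n(G))$, so that the induction hypothesis reads $\gamma^n \sqsubseteq \beta^n$. Using (i) a \emph{lax-naturality} property of the embedding with respect to the relational lift --- namely $\imap_\sigma\bigl(\hmng{\calN \vdash e : \sigma}(\alpha)\bigr) \sqsubseteq \mmng{\Delta_\calG \vdash \lift{e : \sigma}}(\gamma)$ for every typing judgement $\calN \vdash e : \sigma$ and the valuation $\gamma$ with $\gamma(R_G) = \imap_{\calN(G)}(\alpha(G))$, instantiated at $e = \calR(F)$ and $\alpha = \alpha^n$ --- and (ii) monotonicity of the goal-term semantics in the valuation together with $\gamma^n \sqsubseteq \beta^n$, we get
\[
  \imap_{\calN(F)}\bigl(\alpha^{n+1}(F)\bigr) = \imap_{\calN(F)}\bigl(\hmng{\calN \vdash \calR(F)}(\alpha^n)\bigr) \sqsubseteq \mmng{\Delta_\calG \vdash \lift{\calR(F)}}(\gamma^n) \sqsubseteq \mmng{\Delta_\calG \vdash \lift{\calR(F)}}(\beta^n) = \beta^{n+1}(R_F),
\]
which completes the induction.

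Taking $F = S$ in the lockstep bound gives $\imap_\iota(\alpha^n(S)) \sqsubseteq \beta^n(R_S)$ in $\mmng{\iota \to o}$. Since $\alpha^n(S) \sqsubseteq \lub_m \alpha^m(S) = \lfp(\hmng{\calG}_\calN)(S) = \sem{\calG}$, the value $\imap_\iota(\alpha^n(S))(\sem{\calG})$ is the truth value of $\alpha^n(S) \sqsubseteq \sem{\calG}$, namely $1$; hence $\beta^n(R_S)(\sem{\calG}) \geq \imap_\iota(\alpha^n(S))(\sem{\calG}) = 1$. As this holds for all $n$, the reduction of the first paragraph finishes the proof.

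The main obstacle is the lax-naturality property (i): that $\imap_\sigma \circ \hmng{\calN \vdash e : \sigma}$ lies below $\mmng{\Delta_\calG \vdash \lift{e : \sigma}}$ precomposed with the embedded valuation, uniformly for \emph{all} HoRS terms and \emph{all} sorts $\sigma$. Although only the $\iota$-instance at $e = S$ is visible in the lemma, its proof must proceed by induction on the syntax of $e$ in logical-relations style; this forces a generalisation to open HoRS terms, in which a HoRS environment binding $x_i \mapsto v_i$ is related to a HoCHC environment binding $x_i' \mapsto \jmap_{\sigma_i}(v_i)$ on the \emph{negative} ($\Rel^-$) side --- which is precisely why the dual family $\jmap_\sigma$ is needed alongside $\imap_\sigma$. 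The delicate cases are the application clause of $\lift{\cdot}$, where one checks that the existential result variables $r_i$, the guard conjuncts $\Prop(e_i, r_i)$, and the inlined higher-type arguments $\dlift{(e_i, r_i)}$ all preserve the inequality, and the function-sort clauses of $\imap_\sigma$ and $\jmap_\sigma$, where the monotone and continuous function-space structures must be matched up. This is where the bulk of the work in Appendix~\ref{apx:correctness} lies.
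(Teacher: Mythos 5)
Your proposal is sound but follows a genuinely different route from the paper. The paper proves this lemma in two stages: first a logical relation $\bot\mhyphen\mathsf{free}_\sigma$ (Definition~\ref{def:homc_botfree_relations}, Lemma~\ref{lem:homc_botfree_lemma}) yields, for \emph{each} finite $n$, some $\bot$-free witness $t_n$ with $\mmng{\Delta_\calG \vdash \lift{S}}(\beta^n)\,t_n = 1$ (Corollary~\ref{cor:homc_cor_nonempty}); then, because these witnesses may a priori vary with $n$, an informal ``incrementality'' (compactness) argument is sketched to extract a single witness for the limit $\glb_n \beta^n$. You instead exhibit the uniform witness $t = \sem{\calG}$ directly and reduce everything to the lockstep lower bound $\imap_\sigma(\alpha^n(F)) \sqsubseteq \beta^n(R_F)$ --- the \emph{converse} of the paper's Corollary~\ref{cor:correctness_inclusion} --- which holds at $n=0$ by Lemma~\ref{lem:homc_wellsorted_pair}\ref{enum:homc_imap_extremes} and is propagated by your lax-naturality property; since $\alpha^n(S) \sqsubseteq \sem{\calG}$ for all $n$, reflexivity gives $\beta^n(R_S)(\sem{\calG})=1$ at every stage, and Lemma~\ref{lem:homc_decreasing_chain_valuations} passes to the meet. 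This buys you two things: you bypass the $\bot\mhyphen\mathsf{free}$ relation entirely ($\bot$-freeness of the witness is free from the standing assumption on $\calG$), and you replace the unformalised compactness step --- the weakest link in the paper's proof sketch --- with an explicit limit computation; moreover, combined with Corollary~\ref{cor:correctness_inclusion} your bound would give $\glb_n\beta^n(R_S) = \imap_\iota(\sem{\calG})$ outright, streamlining Theorem~\ref{thm:homc_correctness}. The price is a new logical-relations induction of comparable weight to Lemma~\ref{lem:homc_cor_inclusion}, which you correctly identify as the real work. Two small repairs: the negative-side environment should bind $x_i' \mapsto \imap^-_{\sigma_i}(v_i)$, not $\jmap_{\sigma_i}(v_i)$ ($\jmap$ maps the wrong way); and at order $\geq 2$ the embedding $\imap_\sigma(h)$ lives in the relatively monotone frame $\dmng{\Rel^+(\sigma)}$ rather than $\mmng{\Rel^+(\sigma)}$, so the lockstep bound must be phrased as a logical relation restricted to arguments in the image of $\imap^-$ (in the style of $\Incl_\sigma$, Definition~\ref{def:homc_inclusion_relations}) rather than as a literal inequality of monotone functions; at the ground instance $F = S$ this collapses to $\mmng{\iota\to o}$ and your conclusion is unaffected.
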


\begin{restatable}[Inclusion]{corollary}{correctnessInclusion} 
\label{cor:correctness_inclusion}
For all $n \geq 0$,
$\mmng{\Delta_\calG \vdash \lift{S}}(\beta^{n}) \sqsubseteq  \imap_\iota(\hmng{\calN \vdash S}(\alpha^{n}))$.
\end{restatable}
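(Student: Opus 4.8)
The plan is to obtain Corollary~\ref{cor:correctness_inclusion} as the ground-sort instance of a single lemma, proved by induction on the iteration count $n$, that carries the statement at all sorts: \emph{for every typing judgement $\calN \sorts e : \sigma$ of $\calG$, the HoCHC relation $\mmng{\Delta_\calG \sorts \lift{e:\sigma}}(\beta^n)$ is related to the embedded HoRS value $\imap_\sigma\!\big(\hmng{\calN \sorts e:\sigma}(\alpha^n)\big)$ by a fixed sort-indexed logical relation} -- where at $\sigma=\iota$ the relation is simply $\sqsubseteq$ precomposed with $\imap_\iota$, so that the statement reads $\mmng{\lift{e}}(\beta^n) \sqsubseteq \imap_\iota(\hmng{e}(\alpha^n))$. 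Instantiating this lemma at $e := S$ and $\sigma := \iota$ is exactly the corollary, since $\hmng{\calN\sorts S:\iota}(\alpha^n)$ is the value quoted there. This mirrors, at the level of finite iterates, the lockstep promised in the discussion of Theorem~\ref{thm:homc_correctness}.

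Concretely, building on the maps $\imap_\sigma,\jmap_\sigma$ of Definition~\ref{def:rel_emb}, I would define by mutual recursion on $\sigma$ two relations $\lesssim^+_\sigma \subseteq \mmng{\Rel^+(\sigma)}\times\hmng{\sigma}$ and $\lesssim^-_\sigma \subseteq \mmng{\Rel^-(\sigma)}\times\hmng{\sigma}$: at ground sort, $b \lesssim^+_\iota t$ iff $b \sqsubseteq \imap_\iota(t)$ (equivalently, $b(s)=1$ implies $t \sqsubseteq s$), and $a \lesssim^-_\iota t$ iff $t \sqsubseteq a$ (the argument over-approximates the HoRS value); at function sort, where $\Rel^+$ and $\Rel^-$ coincide, $g \lesssim^+_{\sigma_1\to\sigma_2} h$ (equivalently $\lesssim^-$) iff $g(a) \lesssim^+_{\sigma_2} h(v)$ whenever $a \lesssim^-_{\sigma_1} v$. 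I extend $\lesssim$ pointwise to a relation between a HoCHC valuation $\beta$ -- of the variables $R_F$, and of the relational clones $x'$ of any HoRS variables in scope -- and a HoRS valuation $\alpha$, writing $\beta \lesssim \alpha$ when $\beta(R_F) \lesssim^+ \alpha(F)$ for every nonterminal $F$ and $\beta(x') \lesssim^- \alpha(x)$ for every variable $x$.

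Two facts drive the proof. (i) \emph{Fundamental lemma}: if $\beta \lesssim \alpha$ over suitably extended environments, then $\mmng{\lift{e:\sigma}}(\beta) \lesssim^+_\sigma \hmng{\calN\sorts e:\sigma}(\alpha)$ for all $\calN\sorts e:\sigma$; this is by structural induction on $e$. (ii) \emph{Top is related to bottom}: $\top_{\mmng{\Rel^+(\sigma)}} \lesssim^+_\sigma \bot_{\hmng{\sigma}}$ for every $\sigma$ over $\iota$; this is an easy induction on $\sigma$, whose base case is the identity $\top_{\mmng{\Rel^+(\iota)}} = \imap_\iota(\bot)$ and whose step is pointwise. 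Granting (i) and (ii), the corollary follows by induction on $n$: for $n=0$, $\beta^0 = \top_{\Delta_\calG}$ and $\alpha^0 = \bot_\calN$ are pointwise related by (ii), so (i) yields the general statement; for the step, $\beta^{n+1}(R_F) = \mmng{\lift{\calR(F)}}(\beta^n)$ and $\alpha^{n+1}(F) = \hmng{\calN\sorts\calR(F)}(\alpha^n)$ by the definitions of the one-step consequence operator and of $\hmng{\calG}_\calN$, so the induction hypothesis $\beta^n \lesssim \alpha^n$ with (i) applied to the closed rule body $\calR(F)$ gives $\beta^{n+1} \lesssim \alpha^{n+1}$, and (i) once more gives the general statement at $n+1$.

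The main obstacle is the fundamental lemma (i), and inside it the application case $\lift{\$\,e_1\cdots e_l}$. There one must verify that the generated formula $\exists r_1\cdots r_l.\ \$'\,\dlift{(e_1,r_1)}\cdots\dlift{(e_l,r_l)}\,y_1\cdots y_n\,r \;\land\; \bigwedge_i \Prop(e_i,r_i)$ tracks the semantic application $\hmng{\$}(\alpha)(\hmng{e_1}(\alpha))\cdots(\hmng{e_l}(\alpha))$ correctly: for $e_i:\iota$, the conjunct $\Prop(e_i,r_i) = \lift{e_i}\,r_i$ evaluating to $1$ forces (by the structural sub-induction) $\hmng{e_i}(\alpha) \sqsubseteq r_i$, i.e.\ exactly $r_i \lesssim^-_\iota \hmng{e_i}(\alpha)$, and it is precisely this $r_i$ that $\dlift{(e_i,r_i)}$ feeds to $\$'$; for $e_i$ of higher sort, $\dlift{(e_i,r_i)} = \lift{e_i}$ is fed to $\$'$ directly and the sub-induction gives $\mmng{\lift{e_i}}(\beta) \lesssim^- \hmng{e_i}(\alpha)$; either way the definition of $\lesssim$ at the sort of $\$'$ propagates the bound, and the leftover $\lambda y_1\cdots y_n\,r$ is discharged by supplying $\$'$ with further $\lesssim^-$-related arguments. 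The other cases are routine: a variable or nonterminal head unfolds to the environment hypothesis; abstraction is immediate from the definition of $\lesssim$ at function sort; and the terminal head $\$ = f$, for which $\$' = D_f = \lambda x_1\cdots x_{\arity{f}}\,r.\,(f\,x_1\cdots x_{\arity{f}} = r)$, reduces to monotonicity of the constructor $f$ for the subtree ordering, namely $w_i \sqsubseteq a_i$ implies $f(w_1,\dots,w_n) \sqsubseteq f(a_1,\dots,a_n)$.
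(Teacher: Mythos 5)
Your proposal matches the paper's proof in essence: the corollary is obtained there as the ground-sort, empty-context instance of Lemma~\ref{lem:homc_cor_inclusion}, whose sort-indexed logical relation $\Incl_\sigma$ (Definition~\ref{def:homc_inclusion_relations}) is exactly your $\lesssim^+_\sigma$ read through the embedding -- $b \lesssim^+_\sigma h$ corresponds to $\Incl_{\Rel^+(\sigma)}(b,\imap_\sigma(h))$ -- and the paper likewise runs an outer induction on $n$ (seeded by $\imap_\sigma(\bot_{\hmng{\sigma}}) = \top_{\dmng{\Rel^+(\sigma)}}$, Lemma~\ref{lem:homc_wellsorted_pair}) interleaved with a structural induction on $e$ whose crux is the same application case. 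The only presentational difference is that you absorb the slack on ground-sort arguments into the negative relation ($t \sqsubseteq a$), whereas the paper keeps $\Incl_\iota$ as equality and discharges the existentially quantified $r_i \sqsupseteq t_i$ via a separate monotonicity lemma (Lemma~\ref{lem:homc_existential_inclusion}); both devices rest on the same monotonicity of tree constructors, so the arguments are interchangeable.
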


\correctnessEquality
\noindent\emph{Proof.}
\vspace*{-10pt}
\begin{align*}
    \mmng{\Delta_\calG \vdash R_S}(\gfp(T^\calM_{P_\calG:\Delta_\calG}))
                        &= \glb \makeset{ \mmng{\Delta_\calG \vdash \lift{S}}(\beta^n) \mid n \geq 0} \qquad \text{Lem~\ref{lem:homc_decreasing_chain_valuations}}\\
        &\sqsubseteq \glb \makeset{ \imap_\iota(\hmng{\calN \vdash S}(\alpha^n)) \mid n \geq 0} \qquad \text{Cor~\ref{cor:correctness_inclusion}}\\
        &= \imap_\iota\left(\lub \makeset{\hmng{\calN \vdash S}(\alpha^n) \mid n \geq 0}\right) \qquad \text{Lem~\ref{lem:homc_lub_glb}}\\
        &= \imap_\iota\left(\hmng{\calN \vdash S}\left(\lub\alpha^n\right)\right) \qquad \text{Lem~\ref{lem:H_continuous}}\\
                &= \lambda r.\,(\sem{\calG}=r) \qquad \text{$\bot$-freeness}
\end{align*}
Either $\mmng{\Delta_\calG \vdash R_S}(\gfp(T^\calM_{P_\calG:\Delta_\calG}))$ is the constant false function, or it is $\lambda r.\,(\sem{\calG}=r)$.
By Lemma~\ref{lem:homc_nonempty_limit},  $\mmng{\Delta_\calG \vdash R_S}(\gfp(T^\calM_{P_\calG:\Delta_\calG}))$ is not $\lambda r.\,0$, so we conclude that it is $\lambda r.\,(\sem{\calG}=r)$, instead.

It follows that $\mmng{\Delta_\calG \vdash R_S}(\gfp(T^\calM_{P_\calG:\Delta_\calG}))\,t = 1$ if and only if $t = \sem{\calG}$. 
\hfill $\square$


\section{HoRS equivalence problem}
\label{sec:HoRS_equivalence}

The higher-order recursion scheme (HoRS) equivalence problem asks 
whether two given deterministic recursion schemes $\calG_1,\calG_2$ 
generate the same tree 
(i.e.~whether $\sem{\calG_1}=\sem{\calG_2}$, see e.g.~\cite{Ong2015}). 

\nopagebreak

Here, we reduce the HoRS equivalence problem to coinductive HoCHC.
Our procedure formulates a positive and negative instance of the coinductive monotone HoCHC problem over a decidable background theory.
We present the background theory in Section~\ref{sec:homc_maher} and the HoCHC instances in Section~\ref{sec:homc_procedure}.

If coinductive HoCHC is semi-decidable over a decidable background theory -- or our HoRS-to-HoCHC encodings are semi-decidable over Maher's theory of trees in particular~\cite{Maher1988} -- then these two instances can be solved concurrently for a full decision procedure for the HoRS equivalence problem.

\subsection{Maher's theory of trees}\label{sec:homc_maher}

A \emph{theory} $T$ is a set of sentences, 
which is \emph{complete} if 
either $T \vDash \varphi$ or $T \vDash \neg \varphi$, for every sentence $\varphi$.
An \emph{axiomatisation} of an algebra $\mathfrak{A}$ is a recursive set of sentences which are true of $\mathfrak{A}$.
The \emph{theory of an algebra} $\mathfrak{A}$ is a the set of all sentences true of $\mathfrak{A}$.

Maher's (equational) theory of trees $T_\Sigma$ is complete for any finite or infinite alphabet $\Sigma$~\cite{Maher1988}.
It is axiomatised by three axioms:
\begin{align}
    \forall f \in \Sigma. \qquad&\forall \vv{x}\,\vv{y}.\, f\,\vv{x} = f\,\vv{y} \leftrightarrow \vv{x} = \vv{y} \\
    \forall f,g \in \Sigma. \qquad&  \, f \not\equiv g \rightarrow \forall \vv{x}\,\vv{y}.\, f\,\vv{x} \neq g\,\vv{y} \\
    \forall y. \exists ! x. \qquad& x = t(x,y)
\end{align}
where $x=t(x,y)$ ranges over rational solved forms~(see \cite{Maher1988}, p. 355).

In case $\Sigma$ is finite, we need to add the Domain Closure Axiom to obtain completeness:
\begin{align}
    \forall x.\,\bigvee_{f \in \Sigma} \exists \vv{z}.\, x = f\,\vv{z} \tag{DCA}
\end{align}

Fix a ranked alphabet $\Sigma$, viewed as tree constructors.
We write $T_\Sigma$ for the first-order theory of equations of finite and infinite trees constructed from $\Sigma$.
The theory $T_\Sigma$ is complete and decidable, making it an exceedingly appropriate choice of background theory for HoCHC. 
The theory has several models, including the set of finite and infinite trees over $\Sigma$ that we are interested in.

Djelloul et al.~have presented a full first-order constraint solver for (an augmented version of) the theory $T_\Sigma$~\cite{Djelloul2008}.
Questions of expressivity and complexity of the Maher theory are explored in \cite{Colmerauer2003}.
Recent work by Zaiser and Ong has improved the performance of Djelloul et al.'s solver and adapted the theory to algebraic (co)datatypes~\cite{Zaiser2020}.

We are interested in the theory $T_\Sigbot$ over finite alphabet $\Sigbot$, for input HoRS $\calG_1 = \abra{\calN_1,\Sigma,\calR_1,S_1}$ and $\calG_2 = \abra{\calN_2,\Sigma,\calR_2,S_2}$. 
Note that the assumption that both HoRS have the same alphabet $\Sigma$ is WLOG; if they have distinct alphabets, we can take $\Sigma$ to be their union.
In the Maher theory $T_\Sigbot$, the ``unfinished'' tree $\bot$ is treated as any other nullary terminal symbol.

\subsection{Decision procedure}\label{sec:homc_procedure}

Let $\calG_1 = \langle \mathcal{N}_1, \Sigma, \mathcal{R}_1, S_1\rangle$ and $\calG_2 = \langle \mathcal{N}_2, \Sigma, \mathcal{R}_2, S_2\rangle$ be deterministic HoRS.
Assume the trees they generate are $\bot$-free, which is WLOG due to Section~\ref{sec:HoRS_botfree}.
Consider these HoCHC goal formulas:
\[
\mathit{Eq}_1 \defeq \exists r_1\, r_2 \ldotp (R_{S_1} \, r_1 \wedge R_{S_2} \, r_2) \wedge (r_1 = r_2)
\qquad\quad
\mathit{Eq}_0 \defeq \exists r_1\, r_2 \ldotp (R_{S_1} \, r_1 \wedge R_{S_2} \, r_2) \wedge (r_1 \not= r_2)
\]
Using the definitions from Section~\ref{sec:HoRS_transformation}, we define HoCHC problems 
$\mathcal{P}_i := \abra{\Delta_{\calG_1} \cup \Delta_{\calG_2}, P_{\calG_1} \cup P_{\calG_2}, Eq_i}$,
for $i \in \makeset{0,1}$, with the Maher theory $T_\Sigbot$ as the constraint language and the set $\alltrees$ of finite and infinite trees as the designated model.

\begin{samepage}
Thanks to Theorem~\ref{thm:homc_correctness}, we have: $\sem{\calG_1} = \sem{\calG_2}$ iff $\mathcal{P}_1$ is solvable, and $\sem{\calG_1} \neq \sem{\calG_2}$ iff $\mathcal{P}_0$ is solvable.

\nopagebreak
Recall that the Maher theory $T_\Sigbot$ is decidable -- to be exact, the question $T_\Sigbot \models \phi$ for first-order tree constraints $\phi$ like $r_1 = r_2$ and $r_1 \neq r_2$ above.\end{samepage}
Note, however, that $\mathcal{P}_1$ and $\mathcal{P}_0$ are coinductive HoCHC problems. 
It is an open question whether coinductive HoCHC problems over a (semi-)decidable background theory -- like the Maher theory $T_\Sigbot$ -- can be semi-decided via a reduction to a first-order problem, like inductive HoCHC can~\cite{Pham2018,OngWagner2019}.
If there exists a such semi-decision procedure for solving (monotone) coinductive HoCHC over $T_\Sigbot$, then we can decide $\sem{\calG_1} = \sem{\calG_2}$ by dovetailing our two HoCHC problems.

The full ``decision'' procedure for the HoRS equivalence problem is outlined in Figure~\ref{fig:decision_procedure}, so that:
\begin{restatable}{theorem}{HoRSEquivalenceDecidability}
\label{thm:HoRS_equivalence}
The HoRS equivalence problem is decidable if the HoRS-to-HoCHC encoding lives in a semi-decidable fragment of coinductive HoCHC over Maher's complete and decidable theory of trees.
\end{restatable}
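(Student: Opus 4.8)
The plan is to assemble the reduction from the pieces already in place. First I would recall from Section~\ref{sec:homc_procedure} the two coinductive HoCHC problems $\mathcal{P}_1 = \abra{\Delta_{\calG_1} \cup \Delta_{\calG_2}, P_{\calG_1} \cup P_{\calG_2}, \mathit{Eq}_1}$ and $\mathcal{P}_0 = \abra{\Delta_{\calG_1} \cup \Delta_{\calG_2}, P_{\calG_1} \cup P_{\calG_2}, \mathit{Eq}_0}$, built over the Maher theory $T_\Sigbot$, which is complete and decidable by Section~\ref{sec:homc_maher}. The key correctness facts, already established, are: (i) by Proposition above, we may assume WLOG that $\sem{\calG_1}$ and $\sem{\calG_2}$ are $\bot$-free, passing to the $\bot$-free transforms computed by the algorithm of Lemma~\ref{lem:bot-free-transform} if necessary; and (ii) by Theorem~\ref{thm:homc_correctness}, in the greatest model of $P_{\calG_j}$ (and hence of the combined program, since the two programs share no relational variables), $R_{S_j}$ valuates to the characteristic function $\lambda r.\,(\sem{\calG_j} = r)$.

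From (ii) and the characterisation of solvability in terms of the greatest model (Theorem above, ``$\mmng{G}(M_P) = 1$''), I would spell out that $\mathcal{P}_1$ is solvable iff there exist trees $r_1, r_2$ with $\sem{\calG_1} = r_1$, $\sem{\calG_2} = r_2$, and $r_1 = r_2$ in the model $\alltrees$ — i.e.\ iff $\sem{\calG_1} = \sem{\calG_2}$ — and dually $\mathcal{P}_0$ is solvable iff $\sem{\calG_1} \neq \sem{\calG_2}$. Crucially, the equality and disequality constraints $r_1 = r_2$, $r_1 \neq r_2$ are first-order tree constraints, so whether they hold in $\alltrees$ is exactly a query $T_\Sigbot \models \phi$, decided by the Maher solver; this is why $T_\Sigbot$ being decidable is the right hypothesis rather than merely semi-decidable.

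Now suppose the HoRS-to-HoCHC encoding lies in a semi-decidable fragment of coinductive HoCHC over $T_\Sigbot$. Then there is a semi-decision procedure for solvability that applies to both $\mathcal{P}_1$ and $\mathcal{P}_0$, since both share the logic program $P_{\calG_1} \cup P_{\calG_2}$ produced by the encoding and differ only in the goal formula. Given input HoRS $\calG_1, \calG_2$, the decision procedure: computes their $\bot$-free transforms (Lemma~\ref{lem:bot-free-transform}); builds $\mathcal{P}_1$ and $\mathcal{P}_0$; and dovetails the two semi-decision procedures. Exactly one of $\sem{\calG_1} = \sem{\calG_2}$ and $\sem{\calG_1} \neq \sem{\calG_2}$ holds, so by the equivalences above exactly one of $\mathcal{P}_1, \mathcal{P}_0$ is solvable, hence exactly one of the two dovetailed searches terminates; report ``equivalent'' if the $\mathcal{P}_1$-search halts and ``inequivalent'' if the $\mathcal{P}_0$-search halts. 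This terminates on every input and is correct, so the HoRS equivalence problem is decidable.

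The only genuine subtlety — and the step I would be most careful about — is the passage from ``$R_{S_j}$ is correct in the greatest model of $P_{\calG_j}$'' to ``$R_{S_1}$ and $R_{S_2}$ are simultaneously correct in the greatest model of the combined program $P_{\calG_1} \cup P_{\calG_2}$''. This needs the observation that $\Delta_{\calG_1}$ and $\Delta_{\calG_2}$ are disjoint (relational variables indexed by the nonterminals of the respective schemes, taken WLOG to be disjoint), so the one-step consequence operator of the union decomposes as a product of the two component operators on the product lattice $\mmng{\Delta_{\calG_1}} \times \mmng{\Delta_{\calG_2}}$, whence its greatest fixpoint is the pair of the two component greatest fixpoints; then Theorem~\ref{thm:homc_correctness} applies componentwise. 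Everything else is routine bookkeeping over facts already proved.
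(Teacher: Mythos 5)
Your proposal is correct and follows essentially the same route as the paper: the theorem is obtained exactly by combining the $\bot$-free transforms (Lemma~\ref{lem:bot-free-transform}), the two instances $\mathcal{P}_1,\mathcal{P}_0$ over the decidable Maher theory, the equivalences supplied by Theorem~\ref{thm:homc_correctness}, and dovetailing of the two semi-decision procedures. Your explicit justification that the greatest model of $P_{\calG_1}\cup P_{\calG_2}$ decomposes componentwise because $\Delta_{\calG_1}$ and $\Delta_{\calG_2}$ are disjoint is a point the paper leaves implicit, and it is handled correctly.
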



\section{Conclusion and related work}
\label{sec:conclusion}

\paragraph{Higher-order recursion scheme equivalence problem.}
To the best of our knowledge, the HoRS equivalence problem~\cite{Clairambault2013} remains open.
We obtain a full decision procedure for this problem if:
\begin{enumerate*}[label=(\alph*)]
    \item coinductive HoCHC over Maher's theory of trees~\cite{Maher1988} is semi-decidable, or 
    \item the image of our HoRS-to-HoCHC encoding over Maher's theory lives in a semi-decidable fragment of coinductive HoCHC.
\end{enumerate*}

Restricted to order 1, the HoRS equivalence problem is equivalent to the DPDA equivalence problem \cite{Courcelle1978}.
Thus, the fundamental result of \cite{DBLP:journals/tcs/Senizergues01}, and subsequent refinements by \cite{DBLP:journals/tcs/Stirling01}, \cite{DBLP:journals/tcs/Senizergues02}, and \cite{DBLP:conf/lics/Jancar12} provide a decision procedure for the equivalence of first-order HoRS.

\paragraph{$\lambda\textbf{Y}$-calculus B\"ohm tree equivalence problem.}
The HoRS equivalence problem is recursively equivalent to $\lambda{\bf Y}$-calculus B\"{o}hm tree equivalence problem, 
which asks whether the B\"{o}hm trees of two given $\lambda{\bf Y}$-terms are equal \cite{Clairambault2013}.
The question of the decidability ``has been there from the beginning of the subject'' \cite{Walukiewicz16}.
Semi-decidability of coinductive HoCHC would also allow us to decide this problem.

Note that the closely related $\lambda\mathbf{Y}$-calculus word problem (are two closed $\lambda{\bf Y}$-terms $\beta\eta{\bf Y}$-equivalent?) is undecidable~\cite{Statman2004}.
Although HoRS are programs of a simply-typed $\lambda{\bf Y}$-calculus, constructed from uninterpreted function symbols,
they define a strict subsystem of the $\lambda{\bf Y}$-calculus: the same set of trees as ground-type $\lambda{\bf Y}$-terms with free variables (corresponding to terminal symbols) of order at most 1~\cite{Salvati2014}. 

\paragraph{Semi-decidability of coinductive HoCHC.}

Existing semi-decidability results for inductive HoCHC \cite{Pham2018,OngWagner2019} do not carry over to coinductive HoCHC, because proofs for coinductive programs may have infinite length.

For first-order Horn clauses, Coinductive Logic Programming (CoLP, \cite{Gupta2007,Simon2007}) provides an approach to computing solutions for infinite sequences of reductions.
Resolution proof systems for coinductive logic programs rely on loop detection in infinite proofs, see e.g.~\cite{Komendantskaya2013,Komendantskaya2018} and refinements~\cite{Komendantskaya2017,Basold2019}. 
Intuitively, our characterisation of HoRS in HoCHC has not necessarily made such loop detection computationally simpler.

Our best hope is that our HoRS-to-HoCHC encodings live in a semi-decidable fragment of coinductive HoCHC.
There are some indications that this could be the case, e.g.~Lemma~\ref{lem:homc_decreasing_chain_valuations} shows the semantics of encoded HoRS behaves better than (monotone) coinductive HoCHC as a whole.

\paragraph{Relation to HFL.}
In recent years, HFL model checking -- where properties are expressed in higher-order modal fixpoint logic~\cite{Viswanathan2004} -- has gained traction~\cite{Kobayashi2017,Kobayashi2018}.
HoCHC roughly corresponds to a fragment of HFL$_\mathbb{Z}$ without modal operators and fixpoint alternations. HoCHC unsolvability captures HFL \emph{non-reachability}~\cite{Kobayashi2018}.
It seems that coinductive HoCHC unsolvability corresponds to \emph{must-reachability}. Clarifying this relation may help us understand the complexity of coinductive HoCHC.

\bibliographystyle{eptcs}
\bibliography{refs}

\begin{thebibliography}{10}
\providecommand{\bibitemdeclare}[2]{}
\providecommand{\surnamestart}{}
\providecommand{\surnameend}{}
\providecommand{\urlprefix}{Available at }
\providecommand{\url}[1]{\texttt{#1}}
\providecommand{\href}[2]{\texttt{#2}}
\providecommand{\urlalt}[2]{\href{#1}{#2}}
\providecommand{\doi}[1]{doi:\urlalt{http://dx.doi.org/#1}{#1}}
\providecommand{\bibinfo}[2]{#2}

\bibitemdeclare{article}{Basold2019}
\bibitem{Basold2019}
\bibinfo{author}{Henning \surnamestart Basold\surnameend},
  \bibinfo{author}{Ekaterina \surnamestart Komendantskaya\surnameend} \&
  \bibinfo{author}{Yue \surnamestart Li\surnameend} (\bibinfo{year}{2019}):
  \emph{\bibinfo{title}{Coinduction in Uniform: Foundations for Corecursive
  Proof Search with Horn Clauses}}.
\newblock {\sl \bibinfo{journal}{Lecture Notes in Computer Science}}, pp.
  \bibinfo{pages}{783--813}, \doi{10.1007/978-3-030-17184-1_28}.

\bibitemdeclare{incollection}{Bjorner2015}
\bibitem{Bjorner2015}
\bibinfo{author}{Nikolaj \surnamestart Bj{\o}rner\surnameend},
  \bibinfo{author}{Arie \surnamestart Gurfinkel\surnameend},
  \bibinfo{author}{Kenneth \surnamestart McMillan\surnameend} \&
  \bibinfo{author}{Andrey \surnamestart Rybalchenko\surnameend}
  (\bibinfo{year}{2015}): \emph{\bibinfo{title}{Horn clause solvers for program
  verification}}.
\newblock In: {\sl \bibinfo{booktitle}{Fields of Logic and Computation II}},
  \bibinfo{publisher}{Springer}, pp. \bibinfo{pages}{24--51},
  \doi{10.1007/978-3-319-23534-9_2}.
\newblock
  \urlprefix\url{https://www.microsoft.com/en-us/research/wp-content/uploads/2016/02/nbjorner-yurifest.pdf}.

\bibitemdeclare{inproceedings}{Bjorner2012}
\bibitem{Bjorner2012}
\bibinfo{author}{Nikolaj \surnamestart Bj{\o}rner\surnameend},
  \bibinfo{author}{Kenneth \surnamestart McMillan\surnameend} \&
  \bibinfo{author}{Andrey \surnamestart Rybalchenko\surnameend}
  (\bibinfo{year}{2012}): \emph{\bibinfo{title}{Program Verification as
  Satisfiability Modulo Theories}}.
\newblock In \bibinfo{editor}{Pascal \surnamestart Fontaine\surnameend} \&
  \bibinfo{editor}{Amit \surnamestart Goel\surnameend}, editors: {\sl
  \bibinfo{booktitle}{SMT 2012. 10th International Workshop on Satisfiability
  Modulo Theories}}, {\sl \bibinfo{series}{EPiC Series in
  Computing}}~\bibinfo{volume}{20}, \bibinfo{publisher}{EasyChair}, pp.
  \bibinfo{pages}{3--11}, \doi{10.29007/1l7f}.

\bibitemdeclare{incollection}{Bradfield2001}
\bibitem{Bradfield2001}
\bibinfo{author}{Julian \surnamestart Bradfield\surnameend} \&
  \bibinfo{author}{Colin \surnamestart Stirling\surnameend}
  (\bibinfo{year}{2001}): \emph{\bibinfo{title}{CHAPTER 4 - Modal Logics and
  mu-Calculi: An Introduction}}.
\newblock In \bibinfo{editor}{J.A. \surnamestart Bergstra\surnameend},
  \bibinfo{editor}{A.~\surnamestart Ponse\surnameend} \& \bibinfo{editor}{S.A.
  \surnamestart Smolka\surnameend}, editors: {\sl \bibinfo{booktitle}{Handbook
  of Process Algebra}}, \bibinfo{publisher}{Elsevier Science},
  \bibinfo{address}{Amsterdam}, pp. \bibinfo{pages}{293--330},
  \doi{10.1016/B978-044482830-9/50022-9}.
\newblock
  \urlprefix\url{http://citeseerx.ist.psu.edu/viewdoc/summary?doi=10.1.1.9.5944}.

\bibitemdeclare{inproceedings}{DBLP:conf/lics/BroadbentCOS10}
\bibitem{DBLP:conf/lics/BroadbentCOS10}
\bibinfo{author}{Christopher~H. \surnamestart Broadbent\surnameend},
  \bibinfo{author}{Arnaud \surnamestart Carayol\surnameend},
  \bibinfo{author}{C.{-}H.~Luke \surnamestart Ong\surnameend} \&
  \bibinfo{author}{Olivier \surnamestart Serre\surnameend}
  (\bibinfo{year}{2010}): \emph{\bibinfo{title}{Recursion Schemes and Logical
  Reflection}}.
\newblock In: {\sl \bibinfo{booktitle}{Proceedings of the 25th Annual {IEEE}
  Symposium on Logic in Computer Science, {LICS} 2010, 11-14 July 2010,
  Edinburgh, United Kingdom}}, \bibinfo{publisher}{{IEEE} Computer Society},
  pp. \bibinfo{pages}{120--129}, \doi{10.1109/LICS.2010.40}.
\newblock \urlprefix\url{https://hal.archives-ouvertes.fr/hal-00479818}.

\bibitemdeclare{article}{BurnOR18}
\bibitem{BurnOR18}
\bibinfo{author}{Toby \surnamestart Cathcart~Burn\surnameend},
  \bibinfo{author}{C.-H.~Luke \surnamestart Ong\surnameend} \&
  \bibinfo{author}{Steven~J. \surnamestart Ramsay\surnameend}
  (\bibinfo{year}{2017}): \emph{\bibinfo{title}{Higher-order Constrained Horn
  Clauses for Verification}}.
\newblock {\sl \bibinfo{journal}{Proc. ACM Program. Lang.}}
  \bibinfo{volume}{2}(\bibinfo{number}{POPL}), pp.
  \bibinfo{pages}{11:1--11:28}, \doi{10.1145/3158099}.

\bibitemdeclare{inproceedings}{Clairambault2013}
\bibitem{Clairambault2013}
\bibinfo{author}{Pierre \surnamestart Clairambault\surnameend} \&
  \bibinfo{author}{Andrzej~S. \surnamestart Murawski\surnameend}
  (\bibinfo{year}{2013}): \emph{\bibinfo{title}{{B{\"o}hm Trees as Higher-Order
  Recursive Schemes}}}.
\newblock In \bibinfo{editor}{Anil \surnamestart Seth\surnameend} \&
  \bibinfo{editor}{Nisheeth~K. \surnamestart Vishnoi\surnameend}, editors: {\sl
  \bibinfo{booktitle}{IARCS Annual Conference on Foundations of Software
  Technology and Theoretical Computer Science (FSTTCS 2013)}}, {\sl
  \bibinfo{series}{Leibniz International Proceedings in Informatics
  (LIPIcs)}}~\bibinfo{volume}{24}, \bibinfo{publisher}{Schloss
  Dagstuhl--Leibniz-Zentrum f\"{u}r Informatik}, \bibinfo{address}{Dagstuhl,
  Germany}, pp. \bibinfo{pages}{91--102}, \doi{10.4230/LIPIcs.FSTTCS.2013.91}.

\bibitemdeclare{article}{Colmerauer2003}
\bibitem{Colmerauer2003}
\bibinfo{author}{Alain \surnamestart Colmerauer\surnameend} \&
  \bibinfo{author}{Thi-Bich-Hanh \surnamestart Dao\surnameend}
  (\bibinfo{year}{2003}): \emph{\bibinfo{title}{{Expressiveness of Full
  First-Order Constraints in the Algebra of Finite or Infinite Trees}}}.
\newblock {\sl \bibinfo{journal}{Constraints}}
  \bibinfo{volume}{8}(\bibinfo{number}{3}), pp. \bibinfo{pages}{283--302},
  \doi{10.1023/A:1025675127871}.
\newblock \urlprefix\url{https://hal.archives-ouvertes.fr/hal-00144924}.

\bibitemdeclare{article}{Courcelle1978}
\bibitem{Courcelle1978}
\bibinfo{author}{Bruno \surnamestart Courcelle\surnameend}
  (\bibinfo{year}{1978}): \emph{\bibinfo{title}{A Representation of Trees by
  Languages {I}}}.
\newblock {\sl \bibinfo{journal}{Theor. Comput. Sci.}} \bibinfo{volume}{6}, pp.
  \bibinfo{pages}{255--279}, \doi{10.1016/0304-3975(78)90008-7}.
\newblock \urlprefix\url{https://core.ac.uk/download/pdf/82601565.pdf}.

\bibitemdeclare{article}{Djelloul2008}
\bibitem{Djelloul2008}
\bibinfo{author}{Khalil \surnamestart Djelloul\surnameend},
  \bibinfo{author}{Thi-Bich-Hanh \surnamestart Dao\surnameend} \&
  \bibinfo{author}{Thom \surnamestart Fr{\"{u}}hwirth\surnameend}
  (\bibinfo{year}{2008}): \emph{\bibinfo{title}{{Theory of finite or infinite
  trees revisited}}}.
\newblock {\sl \bibinfo{journal}{Theory Pract. Log. Program.}}
  \bibinfo{volume}{8}(\bibinfo{number}{04}), pp. \bibinfo{pages}{431--489},
  \doi{10.1017/S1471068407003171}.
\newblock \urlprefix\url{https://arxiv.org/abs/0706.4323}.

\bibitemdeclare{incollection}{Gupta2007}
\bibitem{Gupta2007}
\bibinfo{author}{Gopal \surnamestart Gupta\surnameend}, \bibinfo{author}{Ajay
  \surnamestart Bansal\surnameend}, \bibinfo{author}{Richard \surnamestart
  Min\surnameend}, \bibinfo{author}{Luke \surnamestart Simon\surnameend} \&
  \bibinfo{author}{Ajay \surnamestart Mallya\surnameend}
  (\bibinfo{year}{2007}): \emph{\bibinfo{title}{{Coinductive Logic Programming
  and Its Applications}}}.
\newblock In: {\sl \bibinfo{booktitle}{Log. Program.}},
  \bibinfo{publisher}{Springer Berlin Heidelberg}, \bibinfo{address}{Berlin,
  Heidelberg}, pp. \bibinfo{pages}{27--44}, \doi{10.1007/978-3-540-74610-2_4}.
\newblock \urlprefix\url{https://personal.utdallas.edu/~gupta/iclp07paper.pdf}.

\bibitemdeclare{article}{Jaffar1986}
\bibitem{Jaffar1986}
\bibinfo{author}{Joxan \surnamestart Jaffar\surnameend} \&
  \bibinfo{author}{Peter~J. \surnamestart Stuckey\surnameend}
  (\bibinfo{year}{1986}): \emph{\bibinfo{title}{{Semantics of infinite tree
  logic programming}}}.
\newblock {\sl \bibinfo{journal}{Theor. Comput. Sci.}} \bibinfo{volume}{46},
  pp. \bibinfo{pages}{141--158}, \doi{10.1016/0304-3975(86)90027-7}.

\bibitemdeclare{inproceedings}{DBLP:conf/lics/Jancar12}
\bibitem{DBLP:conf/lics/Jancar12}
\bibinfo{author}{Petr \surnamestart Jancar\surnameend} (\bibinfo{year}{2012}):
  \emph{\bibinfo{title}{Decidability of {DPDA} Language Equivalence via
  First-Order Grammars}}.
\newblock In: {\sl \bibinfo{booktitle}{Proceedings of the 27th Annual {IEEE}
  Symposium on Logic in Computer Science, {LICS} 2012, Dubrovnik, Croatia, June
  25-28, 2012}}, pp. \bibinfo{pages}{415--424}, \doi{10.1109/LICS.2012.51}.
\newblock
  \urlprefix\url{https://citeseerx.ist.psu.edu/viewdoc/download?doi=10.1.1.1080.2859&rep=rep1&type=pdf}.

\bibitemdeclare{inproceedings}{Kobayashi2017}
\bibitem{Kobayashi2017}
\bibinfo{author}{Naoki \surnamestart Kobayashi\surnameend},
  \bibinfo{author}{\'{E}tienne \surnamestart Lozes\surnameend} \&
  \bibinfo{author}{Florian \surnamestart Bruse\surnameend}
  (\bibinfo{year}{2017}): \emph{\bibinfo{title}{On the Relationship between
  Higher-Order Recursion Schemes and Higher-Order Fixpoint Logic}}.
\newblock In: {\sl \bibinfo{booktitle}{Proceedings of the 44th ACM SIGPLAN
  Symposium on Principles of Programming Languages}}, \bibinfo{series}{POPL
  2017}, \bibinfo{publisher}{Association for Computing Machinery},
  \bibinfo{address}{New York, NY, USA}, pp. \bibinfo{pages}{246--259},
  \doi{10.1145/3009837.3009854}.

\bibitemdeclare{inproceedings}{KobayashiO2009}
\bibitem{KobayashiO2009}
\bibinfo{author}{Naoki \surnamestart Kobayashi\surnameend} \&
  \bibinfo{author}{C.{-}H.~Luke \surnamestart Ong\surnameend}
  (\bibinfo{year}{2009}): \emph{\bibinfo{title}{A Type System Equivalent to the
  Modal Mu-Calculus Model Checking of Higher-Order Recursion Schemes}}.
\newblock In: {\sl \bibinfo{booktitle}{Proceedings of the 24th Annual {IEEE}
  Symposium on Logic in Computer Science, {LICS} 2009, 11-14 August 2009, Los
  Angeles, CA, {USA}}}, \bibinfo{publisher}{{IEEE} Computer Society}, pp.
  \bibinfo{pages}{179--188}, \doi{10.1109/LICS.2009.29}.
\newblock
  \urlprefix\url{https://www-kb.is.s.u-tokyo.ac.jp/~koba/papers/hors-type.pdf}.

\bibitemdeclare{inproceedings}{Kobayashi2018}
\bibitem{Kobayashi2018}
\bibinfo{author}{Naoki \surnamestart Kobayashi\surnameend},
  \bibinfo{author}{Takeshi \surnamestart Tsukada\surnameend} \&
  \bibinfo{author}{Keiichi \surnamestart Watanabe\surnameend}
  (\bibinfo{year}{2018}): \emph{\bibinfo{title}{Higher-Order Program
  Verification via HFL Model Checking}}.
\newblock In \bibinfo{editor}{Amal \surnamestart Ahmed\surnameend}, editor:
  {\sl \bibinfo{booktitle}{Programming Languages and Systems}},
  \bibinfo{publisher}{Springer International Publishing},
  \bibinfo{address}{Cham}, pp. \bibinfo{pages}{711--738},
  \doi{10.1007/978-3-319-89884-1_25}.
\newblock \urlprefix\url{https://arxiv.org/abs/1710.08614}.

\bibitemdeclare{article}{Komendantskaya2017}
\bibitem{Komendantskaya2017}
\bibinfo{author}{Ekaterina \surnamestart Komendantskaya\surnameend} \&
  \bibinfo{author}{Yue \surnamestart Li\surnameend} (\bibinfo{year}{2017}):
  \emph{\bibinfo{title}{Productive corecursion in logic programming}}.
\newblock {\sl \bibinfo{journal}{Theory and Practice of Logic Programming}}
  \bibinfo{volume}{17}(\bibinfo{number}{5-6}), pp. \bibinfo{pages}{906--923},
  \doi{10.1017/S147106841700028X}.
\newblock \urlprefix\url{http://arxiv.org/abs/1707.01541}.

\bibitemdeclare{inproceedings}{Komendantskaya2018}
\bibitem{Komendantskaya2018}
\bibinfo{author}{Ekaterina \surnamestart Komendantskaya\surnameend} \&
  \bibinfo{author}{Yue \surnamestart Li\surnameend} (\bibinfo{year}{2018}):
  \emph{\bibinfo{title}{Towards Coinductive Theory Exploration in Horn Clause
  Logic: Position Paper}}.
\newblock In \bibinfo{editor}{Temesghen \surnamestart Kahsai\surnameend} \&
  \bibinfo{editor}{German \surnamestart Vidal\surnameend}, editors: {\sl
  \bibinfo{booktitle}{{\rm Proceedings 5th Workshop on} Horn Clauses for
  Verification and Synthesis, {\rm Oxford, UK, 13th July 2018}}}, {\sl
  \bibinfo{series}{Electronic Proceedings in Theoretical Computer Science}}
  \bibinfo{volume}{278}, \bibinfo{publisher}{Open Publishing Association}, pp.
  \bibinfo{pages}{27--33}, \doi{10.4204/EPTCS.278.5}.

\bibitemdeclare{article}{Komendantskaya2013}
\bibitem{Komendantskaya2013}
\bibinfo{author}{Ekaterina \surnamestart Komendantskaya\surnameend},
  \bibinfo{author}{John \surnamestart Power\surnameend} \&
  \bibinfo{author}{Martin \surnamestart Schmidt\surnameend}
  (\bibinfo{year}{2016}): \emph{\bibinfo{title}{{Coalgebraic logic programming:
  from Semantics to Implementation}}}.
\newblock {\sl \bibinfo{journal}{Journal of Logic and Computation}}
  \bibinfo{volume}{26}(\bibinfo{number}{2}), pp. \bibinfo{pages}{745--783},
  \doi{10.1093/logcom/exu026}.
\newblock \urlprefix\url{https://arxiv.org/abs/1312.6568}.

\bibitemdeclare{inproceedings}{Maher1988}
\bibitem{Maher1988}
\bibinfo{author}{Michael~J. \surnamestart Maher\surnameend}
  (\bibinfo{year}{1988}): \emph{\bibinfo{title}{{Complete axiomatizations of
  the algebras of finite, rational and infinite trees}}}.
\newblock In: {\sl \bibinfo{booktitle}{LICS}}, pp. \bibinfo{pages}{348--357},
  \doi{10.1109/LICS.1988.5132}.
\newblock
  \urlprefix\url{https://www.computer.org/csdl/pds/api/csdl/proceedings/download-article/12OmNyLiuB4/pdf}.

\bibitemdeclare{inproceedings}{Ong2006}
\bibitem{Ong2006}
\bibinfo{author}{C.{-}H.~Luke \surnamestart Ong\surnameend}
  (\bibinfo{year}{2006}): \emph{\bibinfo{title}{On Model-Checking Trees
  Generated by Higher-Order Recursion Schemes}}.
\newblock In: {\sl \bibinfo{booktitle}{21th {IEEE} Symposium on Logic in
  Computer Science {(LICS} 2006), 12-15 August 2006, Seattle, WA, USA,
  Proceedings}}, pp. \bibinfo{pages}{81--90}, \doi{10.1109/LICS.2006.38}.
\newblock
  \urlprefix\url{https://www.cs.ox.ac.uk/people/luke.ong/personal/publications/lics06.pdf}.

\bibitemdeclare{inproceedings}{Ong2015}
\bibitem{Ong2015}
\bibinfo{author}{C.-H.~Luke \surnamestart Ong\surnameend}
  (\bibinfo{year}{2015}): \emph{\bibinfo{title}{Higher-Order Model Checking: An
  Overview}}.
\newblock In: {\sl \bibinfo{booktitle}{30th Annual {ACM/IEEE} Symposium on
  Logic in Computer Science, {LICS} 2015, Kyoto, Japan, July 6-10, 2015}}, pp.
  \bibinfo{pages}{1--15}, \doi{10.1109/LICS.2015.9}.
\newblock
  \urlprefix\url{http://www.cs.ox.ac.uk/people/luke.ong/personal/publications/LICS15.pdf}.

\bibitemdeclare{inproceedings}{OngWagner2019}
\bibitem{OngWagner2019}
\bibinfo{author}{C.-H.~Luke \surnamestart {Ong}\surnameend} \&
  \bibinfo{author}{Dominik \surnamestart {Wagner}\surnameend}
  (\bibinfo{year}{2019}): \emph{\bibinfo{title}{HoCHC: A Refutationally
  Complete and Semantically Invariant System of Higher-order Logic Modulo
  Theories}}.
\newblock In: {\sl \bibinfo{booktitle}{2019 34th Annual ACM/IEEE Symposium on
  Logic in Computer Science (LICS)}}, pp. \bibinfo{pages}{1--14},
  \doi{10.1109/LICS.2019.8785784}.
\newblock \urlprefix\url{https://arxiv.org/abs/1902.10396}.

\bibitemdeclare{article}{Pham2018}
\bibitem{Pham2018}
\bibinfo{author}{Long \surnamestart Pham\surnameend},
  \bibinfo{author}{Steven~J. \surnamestart Ramsay\surnameend} \&
  \bibinfo{author}{C.{-}H.~Luke \surnamestart Ong\surnameend}
  (\bibinfo{year}{2018}): \emph{\bibinfo{title}{Defunctionalization of
  Higher-Order Constrained Horn Clauses}}.
\newblock {\sl \bibinfo{journal}{CoRR}} \bibinfo{volume}{abs/1810.03598}.
\newblock \urlprefix\url{http://arxiv.org/abs/1810.03598}.

\bibitemdeclare{article}{Salvati2014}
\bibitem{Salvati2014}
\bibinfo{author}{Sylvain \surnamestart Salvati\surnameend} \&
  \bibinfo{author}{Igor \surnamestart Walukiewicz\surnameend}
  (\bibinfo{year}{2014}): \emph{\bibinfo{title}{Krivine machines and
  higher-order schemes}}.
\newblock {\sl \bibinfo{journal}{Information and Computation}}
  \bibinfo{volume}{239}, pp. \bibinfo{pages}{340--355},
  \doi{10.1016/j.ic.2014.07.012}.
\newblock \urlprefix\url{https://hal.inria.fr/inria-00589407/document}.

\bibitemdeclare{article}{DBLP:journals/tcs/Senizergues01}
\bibitem{DBLP:journals/tcs/Senizergues01}
\bibinfo{author}{G{\'{e}}raud \surnamestart S{\'{e}}nizergues\surnameend}
  (\bibinfo{year}{2001}): \emph{\bibinfo{title}{L(A)=L(B)? decidability results
  from complete formal systems}}.
\newblock {\sl \bibinfo{journal}{Theor. Comput. Sci.}}
  \bibinfo{volume}{251}(\bibinfo{number}{1-2}), pp. \bibinfo{pages}{1--166},
  \doi{10.1016/S0304-3975(00)00285-1}.

\bibitemdeclare{article}{DBLP:journals/tcs/Senizergues02}
\bibitem{DBLP:journals/tcs/Senizergues02}
\bibinfo{author}{G{\'{e}}raud \surnamestart S{\'{e}}nizergues\surnameend}
  (\bibinfo{year}{2002}): \emph{\bibinfo{title}{L(A)=L(B)? {A} simplified
  decidability proof}}.
\newblock {\sl \bibinfo{journal}{Theor. Comput. Sci.}}
  \bibinfo{volume}{281}(\bibinfo{number}{1-2}), pp. \bibinfo{pages}{555--608},
  \doi{10.1016/S0304-3975(02)00027-0}.

\bibitemdeclare{incollection}{Simon2007}
\bibitem{Simon2007}
\bibinfo{author}{Luke \surnamestart Simon\surnameend}, \bibinfo{author}{Ajay
  \surnamestart Bansal\surnameend}, \bibinfo{author}{Ajay \surnamestart
  Mallya\surnameend} \& \bibinfo{author}{Gopal \surnamestart Gupta\surnameend}
  (\bibinfo{year}{2007}): \emph{\bibinfo{title}{{Co-Logic Programming:
  Extending Logic Programming with Coinduction}}}.
\newblock In: {\sl \bibinfo{booktitle}{Autom. Lang. Program.}},
  \bibinfo{publisher}{Springer Berlin Heidelberg}, \bibinfo{address}{Berlin,
  Heidelberg}, pp. \bibinfo{pages}{472--483},
  \doi{10.1007/978-3-540-73420-8_42}.

\bibitemdeclare{article}{Statman2004}
\bibitem{Statman2004}
\bibinfo{author}{Rick \surnamestart Statman\surnameend} (\bibinfo{year}{2004}):
  \emph{\bibinfo{title}{{On the Lambda-Y calculus}}}.
\newblock {\sl \bibinfo{journal}{Annals of Pure and Applied Logic}}
  \bibinfo{volume}{130}(\bibinfo{number}{1-3 SPEC. ISS.}), pp.
  \bibinfo{pages}{325--337}, \doi{10.1016/j.apal.2004.04.004}.
\newblock \urlprefix\url{https://core.ac.uk/download/pdf/82358399.pdf}.

\bibitemdeclare{article}{DBLP:journals/tcs/Stirling01}
\bibitem{DBLP:journals/tcs/Stirling01}
\bibinfo{author}{Colin \surnamestart Stirling\surnameend}
  (\bibinfo{year}{2001}): \emph{\bibinfo{title}{Decidability of {DPDA}
  equivalence}}.
\newblock {\sl \bibinfo{journal}{Theor. Comput. Sci.}}
  \bibinfo{volume}{255}(\bibinfo{number}{1-2}), pp. \bibinfo{pages}{1--31},
  \doi{10.1016/S0304-3975(00)00389-3}.
\newblock \urlprefix\url{http://homepages.inf.ed.ac.uk/cps/dpda.pdf}.

\bibitemdeclare{inproceedings}{Viswanathan2004}
\bibitem{Viswanathan2004}
\bibinfo{author}{Mahesh \surnamestart Viswanathan\surnameend} \&
  \bibinfo{author}{Ramesh \surnamestart Viswanathan\surnameend}
  (\bibinfo{year}{2004}): \emph{\bibinfo{title}{A Higher Order Modal Fixed
  Point Logic}}.
\newblock In \bibinfo{editor}{Philippa \surnamestart Gardner\surnameend} \&
  \bibinfo{editor}{Nobuko \surnamestart Yoshida\surnameend}, editors: {\sl
  \bibinfo{booktitle}{CONCUR 2004 - Concurrency Theory}},
  \bibinfo{publisher}{Springer Berlin Heidelberg}, \bibinfo{address}{Berlin,
  Heidelberg}, pp. \bibinfo{pages}{512--528},
  \doi{10.1007/978-3-540-28644-8_33}.
\newblock \urlprefix\url{http://vmahesh.cs.illinois.edu/papers/concur04.pdf}.

\bibitemdeclare{misc}{Wagner2019}
\bibitem{Wagner2019}
\bibinfo{author}{Dominik \surnamestart Wagner\surnameend}
  (\bibinfo{year}{2019}): \bibinfo{howpublished}{private communication}.

\bibitemdeclare{article}{Walukiewicz16}
\bibitem{Walukiewicz16}
\bibinfo{author}{Igor \surnamestart Walukiewicz\surnameend}
  (\bibinfo{year}{2016}): \emph{\bibinfo{title}{Automata Theory and
  Higher-Order Model-Checking}}.
\newblock {\sl \bibinfo{journal}{ACM SIGLOG News}}
  \bibinfo{volume}{3}(\bibinfo{number}{4}), pp. \bibinfo{pages}{13--31},
  \doi{10.1145/3026744.3026745}.
\newblock
  \urlprefix\url{https://www.labri.fr/perso/igw/Papers/igw-siglog16.pdf}.

\bibitemdeclare{article}{Zaiser2020}
\bibitem{Zaiser2020}
\bibinfo{author}{Fabian \surnamestart Zaiser\surnameend} \&
  \bibinfo{author}{C.-H.~Luke \surnamestart Ong\surnameend}
  (\bibinfo{year}{2020}): \emph{\bibinfo{title}{The Extended Theory of Trees
  and Algebraic (Co)datatypes}}.
\newblock {\sl \bibinfo{journal}{Electronic Proceedings in Theoretical Computer
  Science}} \bibinfo{volume}{320}, pp. \bibinfo{pages}{167--196},
  \doi{10.4204/eptcs.320.14}.

\end{thebibliography}

\clearpage

\appendix 


\section{Correctness proofs from Section~\ref{sec:transformation}}
\label{apx:correctness}

\subsection{Mappings between HoRS and HoCHC semantics}\label{sec:homc_corr_mappings}

We define a relaxation of the monotone HoCHC sort frame that we call \emph{relatively monotone}.
This new sort frame coincides with our trusted monotone sort frame for sorts $\iota$ and $\iota \to o$.

\begin{definition}[Relatively monotone sort frame]
For each sort $\sigma$ over $\iota$, we define
\[
    \Image_\sigma \defeq \makeset{\theta \in \dmng{\Rel^-(\sigma)} \mid \exists h \in \hmng{\sigma}.\,\imap^-_\sigma(h) = \theta}
\]
where $\imap^-_\sigma$ is defined as in Definition~\ref{def:rel_emb}, and $\dmng{-}$ denotes the \emph{relatively monotone frame}:
\begin{align*}
    \dmng{\iota} &\defeq \mmng{\iota}\\
    \dmng{\Rel^+(\iota)} &\defeq \mmng{\iota \to o}\\
    \dmng{\Rel^+(\sigma_1 \to \sigma_2)} &\defeq \left[ \dmng{\Rel^-(\sigma_1)} \Rightarrow_{m[\Image_{\sigma_1}]} \dmng{\Rel^+(\sigma_2)} \right]
\end{align*}
The latter denotes the space of functions that are \emph{monotone with respect to $\Image_{\sigma_1}$}, i.e.~$f: \dmng{\Rel^-(\sigma_1)} \to \dmng{\Rel^+(\sigma_2)}$ is an element of $\dmng{\Rel^+(\sigma_1 \to \sigma_2)}$ just if: $z_1 \sqsubseteq z_2$ implies $f\,z_1 \sqsubseteq f\,z_2$ for all $z_1,z_2 \in \Image_{\sigma_1}$.
\end{definition}

For relational sorts $\rho$ larger than $\iota \to o$, $\dmng{\rho}$ captures a strictly larger set of functions than $\mmng{\rho}$.
The following definition extends Definition~\ref{def:mappings_iota} to higher sorts. 
\begin{definition}\label{def:rel_emb}
For all sorts $\sigma$ over $\iota$, we define two pairs of mappings:
\[
    \dmng{\Rel^+(\sigma)} \galois{\jmap_\sigma}{\imap_\sigma} \hmng{\sigma}
    \qquad 
    \dmng{\Rel^-(\sigma)} \galois{\jmap^-_\sigma}{\imap^-_\sigma} \hmng{\sigma}
\]
For sort $\iota$, all $t \in \hmng{\iota}$ and $p \in \dmng{\Rel^+(\iota)}$, we define:
\[ 
\imap_\iota(t) := \lambda s.\,t \sqsubseteq s
\qquad 
\jmap_\iota(p) := 
\left\{\begin{array}{ll}
\bot & \quad \text{if }p = \lambda s.\,0\\
\mathit{choice} \, \min \makeset{t \mid p\,t} & \quad \text{otherwise}
\end{array}\right.
\]
where {\normalfont choice} denotes an arbitrary choice function, which exists by the Axiom of Choice.

For $\sigma = \sigma_1 \to \dots \to \sigma_m \to \iota$ with $m > 0$, we define the following for all $h \in \hmng{\sigma}$:
\[
\imap_\sigma \, h := \lambda x_1^{\dmng{\Rel^-(\sigma_1)}} \dots x_m^{\dmng{\Rel^-(\sigma_m)}}.\,\imap_\iota \left( h \, (\jmap^-_{\sigma_1}\, x_1) \dots (\jmap^-_{\sigma_m}\, x_m) \right)
\]
with
\[
\jmap^-_\sigma : \dmng{\Rel^-(\sigma)} \to \hmng{\sigma} := 
\left\{
\begin{array}{ll}
\text{inclusion }\dmng{\iota} \hookrightarrow \hmng{\iota} & \quad \text{if }\sigma = \iota\\
\jmap_\sigma & \quad \text{otherwise}.
\end{array}\right.
\]
Similarly, for all $\theta \in \dmng{\Rel^+(\sigma)}$:
\[
\jmap_{\sigma}\,\theta := 
\mathit{choice} \, \max \makeset{h \in \hmng{\sigma} \mid \theta \sqsubseteq \imap_\sigma\,h}
\]
with
\[
\imap^-_\sigma : \hmng{\sigma} \to \dmng{\Rel^-(\sigma)} := \left\{
\begin{array}{ll}
\text{inclusion }\hmng{\iota} \hookrightarrow \dmng{\iota} & \quad \text{if }\sigma = \iota\\
\imap_\sigma & \quad \text{otherwise}.
\end{array}\right.
\]
\end{definition}

The top case of $\jmap_\iota$ will not used in practice.

\begin{lemma}\label{lem:homc_wellsorted_pair}
For all sorts $\sigma$ over $\iota$,
\vspace*{-7pt}
\begin{multicols}{2}
\begin{enumerate}[label=(\arabic*)]
    \item $\imap_\sigma \left(\bot_{\hmng{\sigma}}\right) = \top_{\dmng{\Rel^+(\sigma)}}$ \label{enum:homc_imap_extremes}
            \item $\imap_\sigma$ is injective \label{enum:homc_imap_injective}
    \item $\imap_\sigma$ is antitone \label{enum:homc_imap_antitone}
    \item $\jmap_\sigma \circ \imap_\sigma = \id_{\hmng{\sigma}}$ \label{enum:homc_imap_id}
    \item $\jmap^-_\sigma \circ \imap^-_\sigma = \id_{\hmng{\sigma}}$ \label{enum:homc_imap_id_minus}
    
    \vspace*{15pt}
\end{enumerate}
\end{multicols}
\end{lemma}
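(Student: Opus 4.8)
\textbf{Proof plan for Lemma~\ref{lem:homc_wellsorted_pair}.}
The plan is to prove all five items simultaneously by induction on the structure of the sort $\sigma$ over $\iota$, since the mappings $\imap_\sigma, \jmap_\sigma, \imap^-_\sigma, \jmap^-_\sigma$ are defined by mutual recursion on sorts and the higher-sort cases invoke the lower-sort mappings. The base case $\sigma = \iota$ is essentially a direct computation: $\imap_\iota(\bot) = \lambda s.\,\bot \sqsubseteq s = \lambda s.\,1 = \top_{\dmng{\Rel^+(\iota)}}$ gives \ref{enum:homc_imap_extremes}; injectivity \ref{enum:homc_imap_injective} and antitonicity \ref{enum:homc_imap_antitone} of $\imap_\iota(t) = \lambda s.\,t\sqsubseteq s$ follow because $\sqsubseteq$ is a partial order (antisymmetry gives injectivity: if $\lambda s.\,t\sqsubseteq s = \lambda s.\,t'\sqsubseteq s$, instantiate at $s=t$ and $s=t'$); and \ref{enum:homc_imap_id}, \ref{enum:homc_imap_id_minus} follow by unfolding $\jmap_\iota$ (resp. the inclusion $\jmap^-_\iota$) on a value of the form $\imap_\iota(t)$, using that $\min\{s \mid t \sqsubseteq s\} = t$ and that this set is nonempty, so the ``otherwise'' branch of $\jmap_\iota$ applies and the choice function returns $t$.

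For the inductive step, take $\sigma = \sigma_1 \to \dots \to \sigma_m \to \iota$ with $m > 0$ and assume the five properties hold for each $\sigma_i$ (in particular for the ``smaller'' sorts appearing in the codomain $\sigma_2 \to \dots \to \sigma_m \to \iota$, which is how the recursion in Definition~\ref{def:rel_emb} actually descends). For \ref{enum:homc_imap_extremes}: $\imap_\sigma(\bot_{\hmng\sigma}) = \lambda \vv{x}.\,\imap_\iota(\bot_{\hmng\sigma}\,(\jmap^-_{\sigma_1}x_1)\cdots(\jmap^-_{\sigma_m}x_m))$; since $\bot_{\hmng\sigma}$ applied to anything is $\bot_{\hmng\iota}$ (bottom is pointwise), this equals $\lambda\vv x.\,\imap_\iota(\bot) = \lambda\vv x.\,\top$, which is the top element of $\dmng{\Rel^+(\sigma)}$. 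For \ref{enum:homc_imap_antitone}: if $h \sqsubseteq h'$ in $\hmng\sigma$ then pointwise $h\,\vv a \sqsubseteq h'\,\vv a$, so by the base-case antitonicity of $\imap_\iota$ we get $\imap_\sigma h \sqsupseteq \imap_\sigma h'$ pointwise on $\Image$-inputs, hence in $\dmng{\Rel^+(\sigma)}$. For \ref{enum:homc_imap_injective}: if $\imap_\sigma h = \imap_\sigma h'$, evaluate at $x_i = \imap^-_{\sigma_i}(a_i)$ for arbitrary $a_i \in \hmng{\sigma_i}$; using \ref{enum:homc_imap_id_minus} for the $\sigma_i$ (so that $\jmap^-_{\sigma_i}\imap^-_{\sigma_i} = \id$) this forces $\imap_\iota(h\,\vv a) = \imap_\iota(h'\,\vv a)$, and base-case injectivity plus extensionality gives $h = h'$. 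For \ref{enum:homc_imap_id}: unfold $\jmap_\sigma(\imap_\sigma h) = \mathit{choice}\,\max\{g \mid \imap_\sigma h \sqsubseteq \imap_\sigma g\}$; by antitonicity \ref{enum:homc_imap_antitone} and injectivity \ref{enum:homc_imap_injective} of $\imap_\sigma$, the condition $\imap_\sigma h \sqsubseteq \imap_\sigma g$ is equivalent to $g \sqsubseteq h$, so the set is the principal ideal below $h$, whose maximum is $h$; hence the choice function returns $h$. Finally \ref{enum:homc_imap_id_minus} is immediate for $\sigma = \iota$ (composition of inclusions) and for $\sigma \neq \iota$ it reduces to \ref{enum:homc_imap_id} since $\jmap^-_\sigma = \jmap_\sigma$ and $\imap^-_\sigma = \imap_\sigma$ in that case.

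The main obstacle I anticipate is \textbf{well-definedness of the choice-over-max expressions}, which is implicitly part of the claim (the mappings have to actually land in the stated domains). For $\jmap_\sigma\,\theta = \mathit{choice}\,\max\{h \mid \theta \sqsubseteq \imap_\sigma h\}$ to make sense one needs this set to be nonempty and to contain a maximum element; the nonemptiness when $\theta = \imap_\sigma h'$ is clear, but for a general $\theta \in \dmng{\Rel^+(\sigma)}$ (as will be needed elsewhere) and for the existence of a genuine maximum rather than merely maximal elements, one must exploit the $\Image_\sigma$-monotonicity constraint built into $\dmng{-}$ together with Lemma~\ref{lem:homc_lub_glb} (preservation of bounds by $\imap_\iota$), so that the set $\{h \mid \theta \sqsubseteq \imap_\sigma h\}$ is closed under the relevant suprema. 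I would isolate this as the technical heart of the inductive step; the remaining manipulations are the routine unfoldings sketched above. Antisymmetry of the subtree order $\sqsubseteq$ on $\alltrees$ and the fact that $\hmng\sigma$ is built from dcpos ordered pointwise are used freely throughout.
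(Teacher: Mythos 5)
The paper states Lemma~\ref{lem:homc_wellsorted_pair} without proof, so there is no official argument to compare against; judged on its own, your simultaneous induction on the sort $\sigma$ is the natural argument and is essentially sound. The base case computations are all correct, and in the inductive step the use of item~\ref{enum:homc_imap_id_minus} at the argument sorts $\sigma_i$ to collapse $\jmap^-_{\sigma_i}\circ\imap^-_{\sigma_i}$ is exactly the right move for injectivity. One step is justified too loosely: in your proof of \ref{enum:homc_imap_id} you claim that $\imap_\sigma h \sqsubseteq \imap_\sigma g$ is \emph{equivalent} to $g \sqsubseteq h$ ``by antitonicity and injectivity''. Antitonicity plus injectivity does not in general yield order reflection (an antitone injection can send incomparable elements to comparable ones), so the forward implication $\imap_\sigma h \sqsubseteq \imap_\sigma g \Rightarrow g \sqsubseteq h$ needs its own argument. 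It does hold here, by the same evaluation trick you use for injectivity: instantiate both sides at $\vv{x}=\imap^-(\vv{a})$ to reduce to $\imap_\iota(h\,\vv a)\sqsubseteq\imap_\iota(g\,\vv a)$ and then evaluate at $s=h\,\vv a$ to get $g\,\vv a\sqsubseteq h\,\vv a$; you should say this explicitly, since the whole of \ref{enum:homc_imap_id} (the set $\{g\mid\imap_\sigma h\sqsubseteq\imap_\sigma g\}$ being the principal ideal below $h$, hence having a maximum) rests on it. A second, smaller omission: item~\ref{enum:homc_imap_extremes} implicitly asserts that $\imap_\sigma h$ lands in $\dmng{\Rel^+(\sigma)}$ at all, i.e.\ is monotone on $\Image_{\sigma_1}$-inputs; this also follows from the same order-reflection fact combined with continuity of $h$ and antitonicity of $\imap_\iota$, and is worth a sentence. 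Your concern about well-definedness of $\jmap_\sigma$ on arbitrary $\theta$ is legitimate for other uses of the mappings but is not needed for the lemma as stated, since on the image of $\imap_\sigma$ the relevant set is a principal ideal.
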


\auxiliaryLubGlb*
\begin{proof}
Recall that $\hmng{\iota}$ is a dcpo and $\mmng{\Rel^+(\iota)}$ a complete lattice, so that the bounds are defined.
\[
    \imap_\iota \left(\lub D\right)
        = \lambda s.\left(\lub D \sqsubseteq s\right)
        = \lambda s.\,\glb\makeset{d \sqsubseteq s \mid d \in D} 
        = \glb\makeset{\lambda s.\,  (d \sqsubseteq s) \mid d \in D}
        = \glb\makeset{\imap_\iota ( d ) \mid d \in D} 
\]
To see that the second equality holds: 
suppose that $\lub D \sqsubseteq s$ for some $s \in \dmng{\iota} = \hmng{\iota}$.
By transitivity, $d \sqsubseteq \lub D \sqsubseteq s$ for all $d \in D$.
Since the greatest lower bound on $\mmng{o}$ is conjunction, this implies that
$\glb \makeset{d \sqsubseteq s \mid d \in D}$.
For the converse, suppose $\glb \makeset{d \sqsubseteq s \mid d \in D}$. 
This means that $d \sqsubseteq s$, for all $d \in D$.
Thus, $s$ is an upper bound on $D$.
However, $\lub D$ is the least upper bound on this set, so $\lub D \sqsubseteq s$.
\end{proof}

\subsection{Nonemptiness}\label{sec:homc_corr_nonempty}

We aim to show there exists a tree $t$ that does not contain $\bot$ such that $\mmng{\Delta_\calG \vdash \lift{S}}(T_{P_\calG:\Delta_\calG}^{\calM \, n}(\top_{\Delta_\calG}))\,t = 1$, for every $n \geq 0$ (Corollary~\ref{cor:homc_cor_nonempty}).
To this end, we define a family of logical relations in Definition~\ref{def:homc_botfree_relations} to capture this notion at higher sorts and for larger sort environments, as proved in Lemma~\ref{lem:homc_botfree_lemma}.

Intuitively, such a relation holds whenever a predicate maps nonempty inputs to nonempty outputs, where ``nonempty'' is taken to mean with respect to $\bot$-free trees.

\begin{definition}\label{def:homc_botfree_relations}
We define a family of logical relations $\bot\mhyphen\mathsf{free}_\sigma \subseteq \mmng{\sigma}$:
\begin{align*}
    \bot\mhyphen\mathsf{free}_\iota(t) &\defeq t\text{ has no }\bot\text{-labelled leaves}\\
    \bot\mhyphen\mathsf{free}_{\Rel^+(\iota)}(p) &\defeq \exists s \in \mmng{\iota}.\, \bot\mhyphen\mathsf{free}_\iota(s) \land p\,s = 1\\
    \bot\mhyphen\mathsf{free}_{\Rel^+(\sigma_1\to\sigma_2)}(p) &\defeq \forall s \in \mmng{\Rel^-(\sigma_1)}.\, \bot\mhyphen\mathsf{free}_{\Rel^-(\sigma_1)}(s) \Rightarrow \bot\mhyphen\mathsf{free}_{\Rel^+(\sigma_2)}(p\,s)\\
    \bot\mhyphen\mathsf{free}_{\Rel^-(\Gamma)}(\theta)
        &\defeq \dom(\lift{\Gamma}) = \dom(\theta) \land \bigwedge_{x':\Rel^-(\sigma) \in \lift{\Gamma}} \bot\mhyphen\mathsf{free}_{\Rel^-(\sigma)}(\theta(x'))
\end{align*}
Alternatively, we can define, for all $\vv{s}$ from the appropriate domains:
\vspace*{-5pt}
\[
    \bot\mhyphen\mathsf{free}_{\Rel^+(\sigma_1\to\dots\to \sigma_m \to \iota)}(p)
    \defeq \forall \vv{s}. \left(\bigwedge_{i \in [m]} \bot\mhyphen\mathsf{free}_{\Rel^-(\sigma_i)}(s_i)\right) \Rightarrow \bot\mhyphen\mathsf{free}_{\Rel^+(\iota)}(p\,s_1 \dots s_m)
\]
\end{definition}

\begin{lemma}
For all $p_1,p_2 \in \mmng{\Rel^+(\iota)}$,
if $p_1 \sqsubseteq p_2$ and $\bot\mhyphen\mathsf{free}_{\Rel^+(\iota)}(p_1)$,
then $\bot\mhyphen\mathsf{free}_{\Rel^+(\iota)}(p_2)$.
\end{lemma}
\begin{proof}
Trivial, using that witness $t\in \mmng{\iota}$ of $\bot\mhyphen\mathsf{free}_{\Rel^+(\iota)}(p_1)$ also witnesses $\bot\mhyphen\mathsf{free}_{\Rel^+(\iota)}(p_2)$.
\end{proof}

\begin{lemma}\label{lem:homc_botfree_lemma}
For all $n \geq 0$, 
all typing judgements 
\(
\calN, \Gamma \vdash e : \sigma
\) of the HoRS $\calG$ where $\Gamma = \set{x_1 : \tau_1, \ldots, x_k : \tau_k}$,
and valuations $\theta \in \mmng{\lift{\Gamma}}$,
\[
    \calN, \Gamma \vdash e:\sigma \land \bot\mhyphen\mathsf{free}_{\Rel^-(\Gamma)}(\theta) 
        \quad \Rightarrow \quad
    \bot\mhyphen\mathsf{free}_{\Rel^+(\sigma)}\left(\mmng{\Delta_\calG, \lift{\Gamma} \vdash \lift{e}}(\beta^{n})\right)
\]
where 
\begin{align*}
\lift{\Gamma} &:= \set{x'_1: \Rel^-(\tau_1), \dots, x'_k: \Rel^-(\tau_k)}\\
\beta^{n} &:= \left( T_{P_\calG:\Delta_\calG}^{\calM \, n}(\top_{\Delta_\calG})\right) [\overline{x'} \mapsto \vv{\theta(x')}] \in \mmng{\Delta_\calG, \lift{\Gamma}}.
\end{align*}
Notice that $\calN, \Gamma \vdash e : \sigma$ implies $\Delta_\calG, \lift{\Gamma} \vdash \lift{e} : \Rel^+(\sigma)$.
\end{lemma}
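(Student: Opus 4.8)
The plan is to prove the implication by a nested induction: an outer induction on $n$, and, for each fixed $n$, an inner structural induction on the HoRS term $e$, carried out uniformly over all admissible contexts $\Gamma$ and all valuations $\theta$ with $\bot\mhyphen\mathsf{free}_{\Rel^-(\Gamma)}(\theta)$. Before the induction I would record one preliminary fact: the carrier $\mmng{\iota} = \alltrees$ contains at least one $\bot$-free tree (for instance $\sem{\calG}$, which is $\bot$-free by the standing assumption of this section, or the infinite linear $\bsym$-tree produced by the $\bot$-free transform of Lemma~\ref{lem:bot-free-transform}); consequently, by a routine induction on $\sigma$, the top element $\top_{\mmng{\Rel^+(\sigma)}}$ satisfies $\bot\mhyphen\mathsf{free}_{\Rel^+(\sigma)}$ for every sort $\sigma$ over $\iota$ --- at $\Rel^+(\iota) = \iota \to o$ we have $\top = \lambda s.\,1$, which a $\bot$-free tree witnesses, and $\top_{\mmng{\Rel^+(\sigma_1 \to \sigma_2)}}$ sends every argument to $\top_{\mmng{\Rel^+(\sigma_2)}}$, which is $\bot$-free by the inductive hypothesis. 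This fact discharges the outer base case $n = 0$ uniformly, since $\beta^{0}$ assigns $\top$ to every $R_F$.

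For the inner structural induction at a fixed $n$, the cases of $e$ go as follows. If $e = x$ is a variable then $x \in \dom(\Gamma)$, and $\mmng{\Delta_\calG, \lift{\Gamma} \vdash \lift{x}}(\beta^{n})$ equals $\beta^{n}(x') = \theta(x')$ up to $\eta$-expansion when $x$ has a higher sort (so $\bot$-freeness is inherited from $\bot\mhyphen\mathsf{free}_{\Rel^-(\Gamma)}(\theta)$, using $\Rel^-(\sigma) = \Rel^+(\sigma)$ for non-ground $\sigma$), and equals $\lambda s.\,(\theta(x') = s)$ when $x : \iota$ (so the $\bot$-free tree $\theta(x')$ witnesses $\bot\mhyphen\mathsf{free}_{\Rel^+(\iota)}$). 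If $e = \lambda z_1 \cdots z_m.\,e'$, I extend $\theta$ along the clones $z_1', \dots, z_m'$ by arbitrary $\bot$-free inputs $u_1, \dots, u_m$ (which preserves $\bot\mhyphen\mathsf{free}_{\Rel^-(-)}$ of the valuation), apply the inner hypothesis to the smaller term $e'$ in the enlarged context $\Gamma, z_1:\tau_1, \dots, z_m:\tau_m$, and observe that $\mmng{\lift{e}}(\beta^{n})\,u_1 \cdots u_m$ is precisely $\mmng{\lift{e'}}$ evaluated under the enlarged valuation; unfolding the definition of $\bot\mhyphen\mathsf{free}_{\Rel^+(\tau_1 \to \cdots \to \tau_m \to \sigma')}$ then gives the claim.

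The applicative case $e = \$\,e_1 \cdots e_l$ splits on the head. When $\$ = f \in \Sigma$, each $e_i$ has sort $\iota$, so $\dlift{(e_i,r_i)} = r_i$ and $\Prop(e_i,r_i) = \lift{e_i}\,r_i$; given $\bot$-free trees $\vv{s}$ for the remaining arguments of $f$, the inner hypothesis on each $e_i$ supplies a $\bot$-free tree $t_i$ with $\mmng{\lift{e_i}}(\beta^{n})\,t_i = 1$, and then $s := f\,t_1 \cdots t_l\,s_1 \cdots s_n$ is $\bot$-free and makes the body of the existential in $\lift{f\,e_1\cdots e_l}$ true, witnessing $\bot\mhyphen\mathsf{free}_{\Rel^+(-)}$ of the whole. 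When $\$ = F \in \calN$, I first note $\bot\mhyphen\mathsf{free}_{\Rel^+(\sigma_F)}(\beta^{n}(R_F))$, writing $\sigma_F$ for $\calN(F)$: for $n = 0$ this is the top-element fact, and for $n \geq 1$ we have $\beta^{n}(R_F) = \mmng{\Delta_\calG \vdash \lift{\calR(F)}}(T^{\calM\,(n-1)}_{P_\calG:\Delta_\calG}(\top_{\Delta_\calG}))$ by unfolding the one-step consequence operator and $P_\calG(R_F) = \lift{\calR(F)}$, so the \emph{outer} hypothesis at $n-1$, applied to the closed judgement $\calN \vdash \calR(F):\sigma_F$ with the vacuously $\bot$-free empty valuation, yields exactly this. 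Then, given $\bot$-free arguments $\vv{s}$, for each $e_i$ the inner hypothesis furnishes a $\bot$-free object for the slot $\dlift{(e_i,r_i)}$ --- a $\bot$-free tree $t_i$ with $\Prop(e_i,t_i)$ true when $e_i : \iota$, and the $\bot\mhyphen\mathsf{free}_{\Rel^-(\sigma_i)}$ value $\mmng{\lift{e_i}}(\beta^{n})$ with $\Prop(e_i,r_i) = \mathsf{true}$ otherwise --- so feeding these, together with $\vv{s}$, into $\bot\mhyphen\mathsf{free}_{\Rel^+(\sigma_F)}(\beta^{n}(R_F))$ yields a $\bot$-free tree $s$ with $\beta^{n}(R_F)\,\dlift{(e_1,t_1)} \cdots \dlift{(e_l,t_l)}\,\vv{s}\,s = 1$; this witnesses the existential in $\lift{F\,e_1\cdots e_l}$, hence $\bot\mhyphen\mathsf{free}_{\Rel^+(-)}$ of $\mmng{\lift{F\,e_1\cdots e_l}}(\beta^{n})$.

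The step I expect to be the main obstacle is the nonterminal subcase just described: $\calR(F)$ is not a subterm of $F\,e_1 \cdots e_l$, so structural induction on $e$ alone cannot reach it, and one is forced to descend in $n$ instead --- this is precisely why the lemma must be indexed by the iteration count rather than stated directly for the greatest model, and it is the reason the outer induction is on $n$. The remaining work is routine but must be executed carefully: every existential $\exists r_1 \cdots r_l$ appearing in a relational lift has to be instantiated by exactly the objects produced by the inductive hypotheses, and one uses throughout that $\bot$-freeness of a tree is preserved when the tree is placed directly under a terminal symbol.
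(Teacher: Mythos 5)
Your proposal is correct and takes essentially the same route as the paper: the same nested induction (outer on $n$, inner structural on $e$), the top element discharging $n=0$ for nonterminals, explicit $\bot$-free witnesses for the existentials, and the crucial descent in $n$ via $\beta^{n}(R_F) = \mmng{\Delta_\calG \vdash \lift{\calR(F)}}(\beta^{n-1})$ for nonterminal heads --- which you rightly single out as the reason the lemma must be indexed by the iteration count. The only omission is the applicative case with a higher-sort variable head, $e = x\,e_1\cdots e_l$ with $x \in \rsvars$, which the paper treats as a separate case but which goes through exactly like your nonterminal-head case with $\bot\mhyphen\mathsf{free}_{\Rel^-(\Gamma)}(\theta)$ supplying the $\bot$-freeness of the head in place of the outer hypothesis.
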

\begin{proof}
We proceed by induction on $n \geq 0$ within which (both in the base case and the induction step) we use structural induction on HoRS term $e$.
Some parts of the proof are presented out of order to avoid duplication.
Figure~\ref{fig:homc_botfree_IH_overview} outlines the structure of the proof.
We assume WLOG that $e$ contains no $\lambda$s.

We use the following shorthand, for $e:\sigma = \sigma_1 \to \dots \to \sigma_m \to \iota$ and $z_i \in \mmng{\Rel^-(\sigma_i)}$ such that $\bot\mhyphen\mathsf{free}_{\Rel^-(\sigma_i)}(z_i)$, for all $i \in [m]$:
\phantomsection\label{eq:homc_A_botfree}
\begin{align*}
    A &= \mmng{\Delta_\calG, \lift{\Gamma} \vdash \lift{e:\sigma}}(\beta^{n})\,\vv{z}
\end{align*}
Our proof strategy is to rewrite $\eqA$ and provide a witness to $\bot\mhyphen\mathsf{free}_{\Rel^+(\iota)}(\mmng{\Delta_\HORS, \lift{\Gamma} \vdash \lift{e:\sigma}}(\beta^{n})\,\vv{z})$, which proves $\bot\mhyphen\mathsf{free}_{\Rel^+(\sigma)}(\mmng{\Delta_\HORS, \lift{\Gamma} \vdash \lift{e:\sigma}}(\beta^{n}))$.

We present the following base cases w.r.t. the structure of $e$ (also denoted $b$ for \emph{base case expression}).

\emph{Case $e = x: \iota \in \rsvars$.}
For all $n \geq 0$, the tree $\theta(x')$ is a witness thanks to $\bot\mhyphen\mathsf{free}_{\Rel^-(\Gamma)}(\theta)$:
\[
    \eqA
        = \mmng{\Delta_\HORS, \lift{\Gamma} \vdash \lift{x}}(\beta^{n})\\
        = \mmng{\Delta_\HORS, \lift{\Gamma} \vdash \lambda r.\,x' = r}(\beta^{n})\\
                        = \lambda s. \left(\beta^{n}(x') = s\right)\\
        = \lambda s. \left(\theta(x') = s\right)
\]

\emph{Case $e = f: \sigma_1 \to \dots \to \sigma_m \to \iota \in \Sigma$.}
For all $n \geq 0$, the $\bot$-free tree $\widehat{F_f} \, \vv{z}$ is a witness:
\[
    \eqA
        = \mmng{\Delta_\HORS, \lift{\Gamma} \vdash \lift{f}}(\beta^{n})\,\vv{z}\\
        = \mmng{\Delta_\HORS, \lift{\Gamma} \vdash \lambda \vv{y} \, r.\,f\,\vv{y} = r}(\beta^{n})\,\vv{z}\\
        = \lambda s . \left(\widehat{F_f} \, \vv{z} = s \right)
\]

\emph{Case $e = x: \sigma_1 \to \dots \to \sigma_m \to \iota \in \rsvars$ and $m>0$.}
For all $n \geq 0$, it holds that $\bot\mhyphen\mathsf{free}_{\Rel^+(\iota)}(\theta(x')\,\vv{z})$:
\[
    \eqA
        = \mmng{\Delta_\HORS, \lift{\Gamma} \vdash \lift{x}}(\beta^{n})\,  \vv{z}\\
        = \mmng{\Delta_\HORS, \lift{\Gamma} \vdash \lambda \vv{y}\,r.\,x'\,\vv{y}\,r}(\beta^{n})\,  \vv{z}\\
        = \lambda s.\,\beta^{n} (x')\, \vv{z} \, s\\
                = \theta(x')\, \vv{z}
\]

\emph{Case $e = F:\sigma_1 \to \dots \to \sigma_m \to \iota \in \calN$ and $n = 0$.}
Any $\bot$-free tree (e.g.~$\sem{\calG}$) is a witness:
\[
    \eqA 
        = \mmng{\Delta_\HORS, \lift{\Gamma} \vdash \lift{F}}(\beta^{0}) \, \vv{z}\\
        = \mmng{\Delta_\HORS, \lift{\Gamma} \vdash \lambda \vv{y} \, r.\,R_F\,\vv{y} \, r}(\beta^{0})\,\vv{z}\\
        = \lambda s.\,\beta^{0}(R_F) \, \vv{z} \, s \\
        = \top_{\mmng{\Rel^+(\sigma)}} \, \vv{z}
\]

This covers $n = 0$ for all base case expressions.
We distinguish three induction hypotheses, where $S(n,e)$ denotes that the claim holds for $n$ and expression $e$.
The proof consists of four parts (in a logical sense but not a physical, to prevent duplication) which are related as in Figure~\ref{fig:homc_botfree_IH_overview}.
Thus, we have now proved $S(0,b)$ for all base case expressions $b$. 
Next, we use~\ref{IH:nonempty_saf_0} to show $S(0,e)$ for all expressions $e$.

\vspace*{-15pt}
\begin{figure}[hbt]
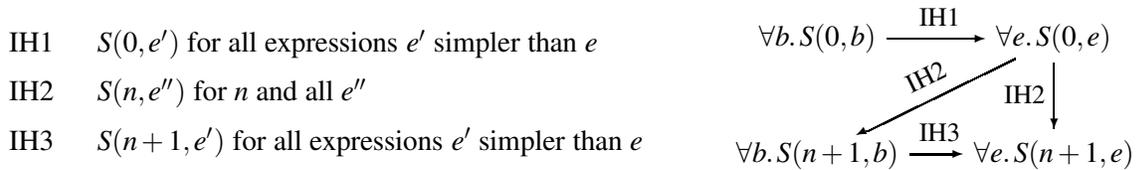

    \centering
    \begin{subfigure}[c]{0.55\textwidth}
        \begin{enumerate}[label=IH\arabic*]
        \item $\quad S(0,e')$ for all expressions $e'$ simpler than $e$ \label{IH:nonempty_saf_0}
        \item $\quad S(n,e'')$ for $n$ and all $e''$ \label{IH:nonempty_saf_BC_n+1}
        \item $\quad S(n+1,e')$ for all expressions $e'$ simpler than $e$ \label{IH:nonempty_saf_IC_n+1}
        \end{enumerate}
    \end{subfigure}
    \begin{subfigure}[c]{0.43\textwidth}
        \begin{diagram}[h=2em,labelstyle=\textstyle,midshaft]
            \forall b.\,S(0,b) & \rTo^{\;\small\text{\ref{IH:nonempty_saf_0}}} & \forall e.\,S(0,e)\\
             & \ldTo^{\small\text{\ref{IH:nonempty_saf_BC_n+1}}} & \dTo^{\small\text{\ref{IH:nonempty_saf_BC_n+1}}} \\
             \forall b.\,S(n+1,b) & \rTo^{\;\small\text{\ref{IH:nonempty_saf_IC_n+1}}} & \forall e.\,S(n+1,e)
        \end{diagram}
        \vspace*{3pt}
    \end{subfigure}
    \vspace*{-12pt}
    \caption{The inductive structure of the correctness proof of the HoRS-to-HoCHC encoding.}
    \label{fig:homc_botfree_IH_overview}
\end{figure}

In this inductive case, we consider expressions $e = \$ \, e_1 \dots e_\ell : \sigma_1 \to \dots \to \sigma_m \to \iota$ for some $\ell > 0$.
As before,
let $z_i \in \mmng{\Rel^-(\sigma_i)}$ such that $\bot\mhyphen\mathsf{free}_{\Rel^-(\sigma_i)}(z_i)$,
for each $i \in [m]$.
We introduce some shorthands:
\[
    \Delta_\HORS' \defeq \Delta_\HORS, \lift{\Gamma}, \vv{y}, r \qquad
    \Delta_\HORS'' \defeq \Delta_\HORS, \lift{\Gamma},\vv{y}, r, \vv{r} \qquad
    \beta^{n,\overline{z},s} \defeq \beta^{n}[\vv{y} \mapsto \vv{z}, r \mapsto s]
\]
\begin{samepage} 
\!Note that the sort $\tau$ of $\$$ is of the form
\[ \tau = \tau_1 \to \dots \to \tau_\ell \to \sigma_1 \to \dots \to \sigma_m \to \iota \]
where $e_1 : \tau_1, \dots, e_\ell : \tau_\ell$, for some $\ell > 0$.
Sometimes we abbreviate $\sigma_1 \to \dots \to \sigma_m \to \iota$ to $\sigma$.
\end{samepage}

\phantomsection\label{eq:homc_dag_nonemptiness_proof}
In the sequel, steps marked with $\dag$ use \ref{IH:nonempty_saf_0} for $n=0$, and \ref{IH:nonempty_saf_IC_n+1} for $n>0$. 

\phantomsection\label{eq:A_rewrite}
For all $n \geq 0$ and expressions $e = \$ \, \vv{e}$, we can rewrite $\eqA$:
\begin{align*}
    \eqA 
        &= \mmng{\Delta_\HORS, \lift{\Gamma} \vdash \lift{\$\,\vv{e}}}(\beta^{n})\, \vv{z}\\
        &= \mmng{\Delta_\HORS, \lift{\Gamma} \vdash \lambda \vv{y}\,r.\, \exists \vv{r}.\, \$' \, \dlift{(e_1,r_1)}\dots \dlift{(e_\ell,r_\ell)}\, \vv{y} \, r \land \bigwedge_{i \in [\ell]} Prop(e_i,r_i)}(\beta^{n})\, \vv{z}\\
        &= \lambda s.\,\mmng{\Delta_\HORS' \vdash \exists \vv{r}.\, \$' \, \dlift{(e_1,r_1)}\dots \dlift{(e_\ell,r_\ell)}\, \vv{y} \, r \land \bigwedge_{i \in [\ell]} Prop(e_i,r_i)}(\beta^{n,\overline{z},s})\\
                                                        &= \lambda s.\,\max\Big\{ \\
            & \qquad \qquad \qquad \min\{ \\
                & \qquad \qquad \qquad \qquad \,\mmng{\Delta_\HORS'' \vdash \$'}(\beta^{n,\overline{z},s}[\vv{r} \mapsto \vv{r'}]) ( \mmng{\Delta_\HORS'' \vdash \dlift{(e_1,r_1)}}(\beta^{n,\overline{z},s}[\vv{r} \mapsto \vv{r'}])) \\
                & \qquad \qquad \qquad \qquad \qquad \qquad \qquad \quad\quad\, \dots (\mmng{\Delta_\HORS'' \vdash \dlift{(e_\ell,r_\ell)}}(\beta^{n,\overline{z},s}[\vv{r} \mapsto \vv{r'}]))\, \vv{z} \, s, \\
                & \qquad \qquad \qquad \qquad \min\{\mmng{\Delta_\HORS'' \vdash Prop(e_i,r_i)}(\beta^{n,\overline{z},s}[\vv{r} \mapsto \vv{r'}]) \mid i \in [\ell]\} \\
            & \qquad \qquad \qquad \;\} \\
        & \qquad \quad  \mid \forall i \in [\ell].\, r'_i \in \mmng{\Rel^-(\tau_i)} \Big\} 
        \end{align*}
We now distinguish two cases for each subexpression $e_i:\tau_i$, namely $\tau_i = \iota$ and $\tau_i = \tau'_1 \to \tau'_2$.

If $e_i$ is of sort $\iota$, then the following holds.
\begin{align*}
    \mmng{\Delta_\HORS'' \vdash Prop(e_i:\iota,r_i)}(\beta^{n,\overline{z},s}[\vv{r} \mapsto \vv{r'}])
                        &= \mmng{\Delta_\HORS,\lift{\Gamma} \vdash \lift{e_i}}(\beta^{n})\,r_i'\\
    \mmng{\Delta_\HORS'' \vdash \dlift{(e_i:\iota,r_i)}}(\beta^{n,\overline{z},s}[\vv{r} \mapsto \vv{r'}])
                &= r_i'
\end{align*}
We know from \hyperref[eq:homc_dag_nonemptiness_proof]{$\dag$} that $\bot\mhyphen\mathsf{free}_{\Rel^+(\iota)}(\mmng{\Delta_\HORS,\lift{\Gamma} \vdash \lift{e_i}}(\beta^{n}))$. This means that there exists $r'' \in \mmng{\iota}$ such that $\bot\mhyphen\mathsf{free}_{\iota}(r'')$ and $\mmng{\Delta_\HORS,\lift{\Gamma} \vdash \lift{e_i}}(\beta^{n})\,r'' = 1$.

Otherwise, in case $e_i:\tau_i = \tau'_1 \to \tau'_2$, the following holds:
\begin{align*}
    \mmng{\Delta_\HORS'' \vdash Prop(e_i:\tau_i,r_i)}(\beta^{n,\overline{z},s}[\vv{r} \mapsto \vv{r'}]) 
                &= 1  \\
    \mmng{\Delta_\HORS'' \vdash \dlift{(e_i:\tau_i,r_i)}}(\beta^{n,\overline{z},s}[\vv{r} \mapsto \vv{r'}]) 
                &= \mmng{\Delta_\HORS, \lift{\Gamma} \vdash \lift{e_i}}(\beta^{n})
\end{align*}
We know from \hyperref[eq:homc_dag_nonemptiness_proof]{$\dag$} that $\bot\mhyphen\mathsf{free}_{\Rel^+(\tau_i)}(\mmng{\Delta_\HORS,\lift{\Gamma} \vdash \lift{e_i}}(\beta^{n}))$.

As ``semantic equivalents'' of the above terms, let us write 
\[
    P_i:o \defeq
        \left\{\begin{array}{ll}
            \mmng{\Delta_\HORS,\lift{\Gamma} \vdash \lift{e_i}}(\beta^{n})\,r_i' & \quad \text{if }\tau_i = \iota \\
            1 & \quad \text{if }\tau_i = \tau'_1 \to \tau'_2
        \end{array}\right.
\]
and
\[
    T_i:\Rel^-(\tau_i) \defeq
        \left\{\begin{array}{ll}
            r_i' & \quad \text{if }\tau_i = \iota \\
            \mmng{\Delta_\HORS, \lift{\Gamma} \vdash \lift{e_i}}(\beta^{n}) & \quad \text{if }\tau_i = \tau'_1 \to \tau'_2
        \end{array}\right.
\]
for all $i \in [\ell]$.
Additionally, we define:
\[
    S_i:\Rel^-(\tau_i) \defeq
        \left\{\begin{array}{ll}
            r_i'' & \quad \text{if }\tau_i = \iota \\
            \mmng{\Delta_\HORS, \lift{\Gamma} \vdash \lift{e_i}}(\beta^{n}) & \quad \text{if }\tau_i = \tau'_1 \to \tau'_2
        \end{array}\right.
\]
where $r_i''$ is an arbitrary ($\bot$-free) witness to $\bot\mhyphen\mathsf{free}_{\Rel^+(\iota)}(\mmng{\Delta_\HORS,\lift{\Gamma} \vdash \lift{e_i}}(\beta^{n}))$, which exists by \hyperref[eq:homc_dag_nonemptiness_proof]{$\dag$}.
This gives us $\bot\mhyphen\mathsf{free}_{\Rel^-(\tau_i)}(S_i)$ for all $i \in [m]$.

We derive by abuse of notation, using the above:
\begin{align*}
    \eqA
        &= \lambda s.\,\exists \vv{r'}. \left( \mmng{\Delta_\HORS'' \vdash \$'}(\beta^{n,\overline{z},s}[\vv{r} \mapsto \vv{r'}]) \, \vv{T}\; \vv{z} \, s \land \bigwedge_{i \in [\ell]} P_i \right)
\end{align*}

We continue by case analysis on $\$$. 

\emph{Case $e = f \, e_1 \dots e_\ell$ with $f \in \Sigma$.}
For all $n \geq 0$, the $\bot$-free tree $\widehat{F_f}\,\vv{r''}\,\vv{z}$ is a witness:
\begin{align*}
    \eqA 
        &= \lambda s.\,\exists \vv{r'}. \left( \mmng{\Delta_\HORS'' \vdash D_f}(\beta^{n,\overline{z},s}[\vv{r} \mapsto \vv{r'}]) \, \vv{r'}\; \vv{z} \, s \land \bigwedge_{i \in [\ell]} \mmng{\Delta_\HORS,\lift{\Gamma} \vdash \lift{e_i}}(\beta^{n})\,r_i' \right) \\
        &= \lambda s.\,\exists \vv{r'}. \left( \widehat{F_f} \, \vv{r'}\; \vv{z} = s \land \bigwedge_{i \in [\ell]} \mmng{\Delta_\HORS,\lift{\Gamma} \vdash \lift{e_i}}(\beta^{n})\,r_i' \right) \\
        &\sqsupseteq \lambda s. \left( \widehat{F_f} \, \vv{r''}\; \vv{z} = s \right) \qquad \hyperref[eq:homc_dag_nonemptiness_proof]{\dag}
\end{align*}

\emph{Case $e = x \, e_1 \dots e_\ell$ with $x \in \rsvars$.}
For all $n \geq 0$, it holds that $\bot\mhyphen\mathsf{free}_{\Rel^+(\iota)}(\theta(x') \, \vv{S}\, \vv{z})$ and:
\begin{align*}
    \eqA 
        &= \lambda s.\,\exists \vv{r'}. \left( \mmng{\Delta_\HORS'' \vdash x'}(\beta^{n,\overline{z},s}[\vv{r} \mapsto \vv{r'}]) \, \vv{T}\; \vv{z} \, s \land \bigwedge_{i \in [\ell]} P_i \right)\\
        &= \lambda s.\,\exists \vv{r'}. \left( \beta^{n,\overline{z},s}[\vv{r} \mapsto \vv{r'}](x') \, \vv{T}\; \vv{z} \, s \land \bigwedge_{i \in [\ell]} P_i \right)\\
                &= \lambda s.\,\exists \vv{r'}. \left( \theta(x') \, \vv{T}\; \vv{z} \, s \land \bigwedge_{i \in [\ell]} P_i \right)\\
        &\sqsupseteq \lambda s. \, \theta(x') \, \vv{S}\; \vv{z} \, s \qquad \hyperref[eq:homc_dag_nonemptiness_proof]{\dag} \\
        &= \theta(x') \, \vv{S}\; \vv{z}  
\end{align*}

\emph{Case $e = F \, e_1 \dots e_\ell$ with $F:\sigma \in \calN$, and $n=0$.}
Any $\bot$-free tree (e.g.~$\sem{\calG}$) is a witness:
\begin{align*}
    \eqA 
        &= \lambda s.\,\exists \vv{r'}. \left( \mmng{\Delta_\HORS'' \vdash R_F}(\beta^{0,\overline{z},s}[\vv{r} \mapsto \vv{r'}]) \, \vv{T}\; \vv{z} \, s \land \bigwedge_{i \in [\ell]} P_i \right)\\
        &= \lambda s.\,\exists \vv{r'}. \left( \beta^{0,\overline{z},s}[\vv{r} \mapsto \vv{r'}](R_F) \, \vv{T}\; \vv{z} \, s \land \bigwedge_{i \in [\ell]} P_i \right)\\
        &= \lambda s.\,\exists \vv{r'}. \left( \top_{\mmng{\Rel^+(\tau)}} \, \vv{T}\; \vv{z} \, s \land \bigwedge_{i \in [\ell]} P_i \right)\\
        &= \lambda s.\,\exists \vv{r'}. \left( 1 \land \bigwedge_{i \in [\ell]} P_i \right)\\
        &= \lambda s.\,1 \qquad \text{\ref{IH:nonempty_saf_0}}
\end{align*}

We have now established that $S(0,e'')$ holds for expressions all $e''$. 
The following case is the last remaining case to prove $S(n+1,b)$ for all base case expressions $b$:

\emph{Case $e = F:\sigma \in \calN$, for $n+1$.}
Thanks to~\ref{IH:nonempty_saf_BC_n+1}, $\bot\mhyphen\mathsf{free}_{\Rel^+(\iota)}(\mmng{\Delta_\HORS, \lift{\Gamma} \vdash \lift{\calR(F)} }(\beta^{n})\,\vv{z})$ and:
\begin{align*}
    \eqA 
        &= \mmng{\Delta_\HORS, \lift{\Gamma} \vdash \lift{F}}(\beta^{n+1}) \, \vv{z}\\
        &= \mmng{\Delta_\HORS, \lift{\Gamma} \vdash \lambda \vv{y} \, r.\,R_F\,\vv{y} \, r}(\beta^{n+1})\,\vv{z}\\
        &= \lambda s.\, \beta^{n+1,\overline{z},s}(R_F) \, \vv{z} \, s \\
        &= \lambda s.\, \beta^{n+1}(R_F) \, \vv{z} \, s \\
        &= \lambda s.\, \mmng{\Delta_\HORS, \lift{\Gamma} \vdash \lift{\calR(F)} }(\beta^{n}) \, \vv{z} \, s\\
        &= \mmng{\Delta_\HORS, \lift{\Gamma} \vdash \lift{\calR(F)} }(\beta^{n}) \, \vv{z}
\end{align*}

Finally, we present the remaining case to prove that $S(n',e'')$ for all $n' \geq 0$ and all expressions $e''$.

\emph{Case $e = F \, e_1 \dots e_\ell$ with $F \in \calN$ and $\ell>0$, for $n+1$.}
\begin{align*}
    \eqA 
        &= \lambda s.\,\exists \vv{r'}.\, \left( \mmng{\Delta_\HORS'' \vdash R_F}(\beta^{n+1,\overline{z},s}[\vv{r} \mapsto \vv{r'}]) \, \vv{T}\; \vv{z} \, s \land \bigwedge_{i \in [\ell]} P_i \right) \\
        &= \lambda s.\,\exists \vv{r'}.\, \left( \beta^{n+1,\overline{z},s}[\vv{r} \mapsto \vv{r'}](R_F) \, \vv{T}\; \vv{z} \, s \land \bigwedge_{i \in [\ell]} P_i \right)\\
        &= \lambda s.\,\exists \vv{r'}.\, \left( \beta^{n+1}(R_F) \, \vv{T}\; \vv{z} \, s \land \bigwedge_{i \in [\ell]} P_i \right)\\
        &= \lambda s.\,\exists \vv{r'}.\, \left( \mmng{\Delta_\HORS, \lift{\Gamma} \vdash \lift{\calR(F)}}(\beta^{n}) \, \vv{T}\; \vv{z} \, s \land \bigwedge_{i \in [\ell]} P_i \right)\\
        &\sqsupseteq \lambda s.\,  \mmng{\Delta_\HORS, \lift{\Gamma} \vdash \lift{\calR(F)}}(\beta^{n}) \, \vv{S}\; \vv{z} \, s \qquad \qquad \text{\ref{IH:nonempty_saf_IC_n+1}}\\
        &= \mmng{\Delta_\HORS, \lift{\Gamma} \vdash \lift{\calR(F)}}(\beta^{n}) \, \vv{S}\; \vv{z} 
\end{align*}
By \ref{IH:nonempty_saf_BC_n+1},
$\bot\mhyphen\mathsf{free}_{\Rel^+(\tau)}(\mmng{\Delta_\HORS, \lift{\Gamma} \vdash \lift{\calR(F)} }(\beta^{n}))$.
All arguments $\vv{S}$ and $\vv{z}$ are also $\bot$-free, so it follows that $\bot\mhyphen\mathsf{free}_{\Rel^+(\iota)}(\mmng{\Delta_\HORS, \lift{\Gamma} \vdash \lift{\calR(F)} }(\beta^{n}) \, \vv{S}\; \vv{z})$, as required.
\end{proof}

\begin{corollary}\label{cor:homc_cor_nonempty}
For all $n \geq 0$ and HoRS $\calG$,
$\bot\mhyphen\mathsf{free}_{\Rel^+(\iota)}(\mmng{\Delta_\calG \vdash \lift{S}}(\beta^n))$.
I.e.~there exists
a $\bot$-free tree $t \in \mmng{\iota}$ such that $\mmng{\Delta_\calG \vdash \lift{S}}(\beta^n)\,t$.
\end{corollary}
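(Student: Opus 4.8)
The plan is to obtain the corollary as an immediate instantiation of Lemma~\ref{lem:homc_botfree_lemma}. First I would take the empty environment $\Gamma = \emptyset$, so that $\lift{\Gamma}$ is empty and the unique valuation $\theta \in \mmng{\lift{\Gamma}}$ is the empty one; the hypothesis $\bot\mhyphen\mathsf{free}_{\Rel^-(\emptyset)}(\theta)$ then holds vacuously, since the conjunction over $x' : \Rel^-(\sigma) \in \lift{\Gamma}$ in Definition~\ref{def:homc_botfree_relations} is empty and $\dom(\lift{\Gamma}) = \dom(\theta) = \emptyset$. Next I would instantiate the typing judgement of the lemma with $\calN \vdash S : \iota$, which is legitimate because $S \in \calN$ is the start symbol and hence of ground sort $\iota$. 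With these choices, the valuation $\beta^n$ appearing in Lemma~\ref{lem:homc_botfree_lemma} (which in general is $T_{P_\calG:\Delta_\calG}^{\calM \, n}(\top_{\Delta_\calG})$ further extended by the bindings $\overline{x'} \mapsto \overline{\theta(x')}$) specialises exactly to $T_{P_\calG:\Delta_\calG}^{\calM \, n}(\top_{\Delta_\calG})$, there being no extra variables to bind; this matches the $\beta^n$ of the corollary.

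Applying Lemma~\ref{lem:homc_botfree_lemma} then yields $\bot\mhyphen\mathsf{free}_{\Rel^+(\iota)}(\mmng{\Delta_\calG \vdash \lift{S}}(\beta^n))$ for every $n \geq 0$. It remains only to unfold the logical relation at sort $\Rel^+(\iota)$: by Definition~\ref{def:homc_botfree_relations}, $\bot\mhyphen\mathsf{free}_{\Rel^+(\iota)}(p)$ means there exists $s \in \mmng{\iota}$ with $\bot\mhyphen\mathsf{free}_\iota(s)$ — i.e.~$s$ has no $\bot$-labelled leaves — such that $p\,s = 1$. Taking $p := \mmng{\Delta_\calG \vdash \lift{S}}(\beta^n)$ and $t := s$ gives precisely the claimed $\bot$-free tree $t$ with $\mmng{\Delta_\calG \vdash \lift{S}}(\beta^n)\,t = 1$.

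Since essentially all the content is carried by Lemma~\ref{lem:homc_botfree_lemma}, there is no real obstacle at this stage; the only points that need care are checking that the empty-context instantiation is admissible (so that the vacuous side condition genuinely holds and $\beta^n$ collapses to the right operator iterate) and that $S$ indeed has ground sort $\iota$, so that $\Rel^+(\sigma)$ in the lemma is $\Rel^+(\iota) = \iota \to o$ and the conclusion is a statement about a predicate on trees rather than a higher-type relation.
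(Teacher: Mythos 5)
Your proposal is correct and matches the paper's (implicit) argument: the corollary is stated without proof precisely because it is the empty-environment, ground-sort instantiation of Lemma~\ref{lem:homc_botfree_lemma}, exactly as you describe. The checks you flag — vacuity of $\bot\mhyphen\mathsf{free}_{\Rel^-(\emptyset)}(\theta)$, collapse of the lemma's $\beta^n$ to $T_{P_\calG:\Delta_\calG}^{\calM\,n}(\top_{\Delta_\calG})$, and $S:\iota$ so that the conclusion lands at sort $\Rel^+(\iota)$ — are the right ones and all go through.
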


\correctnessNonemptiness*
\begin{proof}[Proof sketch]
Note that the constructed HoCHC logic program $\vdash P_\calG : \Delta_\calG$ is \emph{incremental} in the sense that each iteration of the one-step consequence operator (further) constrains a finite prefix of the trees it generates.
For sort $\iota \to o$, this means that either a contradiction occurs in finite time (e.g.~$\exists r_1.\,a\,r_1=r \land b\,r_1=r$ where $a,b$ are distinct unary alphabet symbols) or no contradiction occurs and the program is strictly incremental.

By Corollary~\ref{cor:homc_cor_nonempty}, no contradiction occurs after finite time.
This means that no contradiction occurs at all and $\vdash P_\calG : \Delta_\calG$ is strictly incremental.
It follows that there exists a $\bot$-free tree $t \in \mmng{\iota}$ such that $\mmng{\Delta_\calG \vdash \lift{S}}\left(\glb \beta^n\right)\,t$.
\end{proof}

\subsection{Inclusion}\label{sec:homc_corr_inclusion}

We aim to show that $\mmng{\Delta_\calG \vdash \lift{S}}(T_{P_\calG:\Delta_\calG}^{\calM \, n}(\top_{\Delta_\calG}))$ is included in $\imap_\iota(\hmng{\calN \vdash S}(\left(\hmng{\calG}_{\calN}^n(\bot_\calN) \right))$, for every $n \geq 0$ (Corollary~\ref{cor:correctness_inclusion}).
To this end, we define a family of logical relations in Definition~\ref{def:homc_inclusion_relations} to capture this notion at higher sorts and for larger sort environments, as proved in Lemma~\ref{lem:homc_cor_inclusion}.

The intuition is that the relation comprises pairs that preserve order on order-preserving arguments.

\begin{definition}
\label{def:homc_inclusion_relations}
We define a family of logical relations $\Incl_\sigma \subseteq \mmng{\sigma} \times \dmng{\sigma}$:
\begin{align*}
    \Incl_{\iota}(t_1,t_2) &\defeq t_1 = t_2\\
    \Incl_{\Rel^+(\iota)}(p_1,p_2) &\defeq p_1 \sqsubseteq p_2\\
    \Incl_{\Rel^+(\sigma_1\to\sigma_2)}(p_1,p_2) &\defeq \forall w \in \mmng{\Rel^-(\sigma_1)}.\, \forall z \in \hmng{\sigma_1}.\\
    &\qquad\qquad \Incl_{\Rel^-(\sigma_1)}(w,\imap^-_{\sigma_1}(z)) \Rightarrow \Incl_{\Rel^+(\sigma_2)}(p_1\,w, p_2\,\imap^-_{\sigma_1}(z))\\
    \Incl_{\Gamma}(\theta_\beta,\theta_\alpha)
        &\defeq \dom(\lift{\Gamma}) = \dom(\theta_\beta) \land \dom(\Gamma) = \dom(\theta_\alpha)\; \land \\
        &\qquad \bigwedge_{x:\sigma \in \Gamma} \Incl_{\Rel^-(\sigma)}(\theta_\beta(x'),\imap^-_\sigma(\theta_\alpha(x)))
\end{align*}
Alternatively, $\Incl_{\Rel^+(\sigma_1\to\dots\to \sigma_m \to \iota)}(p_1,p_2)$ can be defined as
\[
    \forall \vv{w}.\,\forall \vv{z}.\, \left(\bigwedge_{i \in [m]} \Incl_{\Rel^-(\sigma_i)}(w_i,\imap^-_{\sigma_i}(z_i))\right) \Rightarrow \Incl_{\Rel^+(\iota)}(p\,w_1 \dots w_m, p_2\,\imap^-_{\sigma_1}(z_1)\dots \imap^-_{\sigma_m}(z_m))
\]
for all $\vv{w}$ and $\vv{z}$ from the appropriate domains.
\end{definition}

Note that in general $\mmng{\sigma}$ differs from $\dmng{\sigma}$, and the arguments of $\Incl_\sigma$ do not necessarily live in the same set.
However, for the sort we are interested in, namely $\Rel^+(\iota) = \iota \to o$, the denotations $\mmng{\iota \to o}$ and $\dmng{\iota \to o}$ coincide (idem for $\iota$), so that the relations are well-defined.

\begin{lemma}\label{lem:homc_existential_inclusion}
For $F \in \hmng{\sigma_1 \to \dots \to \sigma_k \to \tau_1 \to \dots \to \tau_\ell \to \iota}$, $t_i \in \hmng{\sigma_i}$ for all $i \in [k]$, and $z_j \in \hmng{\tau_j}$ for all $j \in [\ell]$,
\[ 
    \left(\lambda s.\,\exists \vv{r}.\,F\,\vv{r}\,\vv{z} = s \land \bigwedge_{i \in [k]} t_i \sqsubseteq r_i \right)
    \sqsubseteq
    \left(\lambda s.\,\exists \vv{r}.\,F\,\vv{r}\,\vv{z} \sqsubseteq s \land \bigwedge_{i \in [k]} t_i \sqsubseteq r_i \right)
    \sqsubseteq 
    \left(\lambda s.\,F\,\vv{t}\,\vv{z} \sqsubseteq s \right)
\]
where $\vv{r} = r_1 \dots r_k$, $\vv{t} = t_1 \dots t_k$, and $\vv{z} = z_1 \dots z_\ell$.
\end{lemma}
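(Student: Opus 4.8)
The plan is to reduce both inclusions to pointwise implications between characteristic functions and then dispatch them by reflexivity and monotonicity of $F$, respectively. First I would observe that all three displayed expressions are well-sorted elements of $\mmng{\iota\to o}$: the set $\mmng{\iota}$ carries the discrete order, so every function $\alltrees\to\mmng{o}$ is monotone, and the ambient order $\sqsubseteq$ between the three expressions is the pointwise one on $\mmng{\iota\to o}$. Hence it suffices to fix an arbitrary $s\in\hmng{\iota}$ and show that whenever the left (resp.\ middle) function evaluates to $1$ at $s$, so does the middle (resp.\ right) one.

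For the first inclusion I would just unfold the existential: if there is a tuple $\vv{r}$ with $F\,\vv{r}\,\vv{z}=s$ and $t_i\sqsubseteq r_i$ for all $i\in[k]$, then, since equality refines $\sqsubseteq$, the very same $\vv{r}$ witnesses $F\,\vv{r}\,\vv{z}\sqsubseteq s$ together with $t_i\sqsubseteq r_i$, which is exactly what the middle function requires at $s$. Nothing beyond reflexivity of $\sqsubseteq$ is needed here.

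For the second inclusion, fix witnesses $\vv{r}$ with $F\,\vv{r}\,\vv{z}\sqsubseteq s$ and $t_i\sqsubseteq r_i$ for all $i$. The key point is that $F$, being an element of the continuous function space $\hmng{\sigma_1\to\dots\to\sigma_k\to\tau_1\to\dots\to\tau_\ell\to\iota}$, is in particular monotone in each argument; applying monotonicity in the first $k$ arguments one at a time transforms the hypotheses $t_i\sqsubseteq r_i$ into $F\,\vv{t}\,\vv{z}\sqsubseteq F\,\vv{r}\,\vv{z}$ (the codomain order being pointwise, so the inequality persists after feeding in $\vv{z}$). Transitivity of $\sqsubseteq$ with $F\,\vv{r}\,\vv{z}\sqsubseteq s$ then yields $F\,\vv{t}\,\vv{z}\sqsubseteq s$, i.e.\ the right-hand function evaluates to $1$ at $s$, completing the chain.

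I do not anticipate a genuine obstacle; the lemma is essentially reflexivity and transitivity of $\sqsubseteq$ together with monotonicity of $F$. The only things to be careful about are keeping the two orderings apart --- the discrete order on $\mmng{\iota}$, which is what lets us regard the three expressions as well-sorted monotone functions, versus the subtree order $\sqsubseteq$ on $\hmng{\iota}$ that occurs inside the formulas --- and recording explicitly that it is continuity of $F$ (from the use of the continuous function space $\cTo$ in $\hmng{-}$) that licenses the monotonicity step.
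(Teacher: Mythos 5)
Your proposal is correct. The paper states Lemma~\ref{lem:homc_existential_inclusion} without any proof, so there is no argument to diverge from; your justification --- the first inclusion by reflexivity of $\sqsubseteq$ (the same witnesses $\vv{r}$ work since equality refines the subtree order), the second by monotonicity of $F$ (guaranteed because $\hmng{-}$ uses the continuous, hence monotone, function space) followed by transitivity --- is exactly the routine argument the authors evidently considered too immediate to record, and your care in separating the pointwise order on $\mmng{\iota\to o}$ from the subtree order inside the formulas is the only subtlety worth flagging.
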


\begin{lemma}\label{lem:homc_cor_inclusion}
For all $n \geq 0$, 
all typing judgements 
\(
\calN, \Gamma \vdash e : \sigma
\) of the HoRS $\calG$ where $\Gamma = \set{x_1 : \tau_1, \ldots, x_k : \tau_k}$,
and valuations $\theta_\alpha \in \hmng{\Gamma}$ and $\theta_\beta \in \mmng{\lift{\Gamma}}$,
\begin{align*}
    &\calN, \Gamma \vdash e:\sigma \land \Incl_{\Gamma}(\theta_\beta,\theta_\alpha) 
        \quad \Rightarrow \quad  \\
    & \qquad \Incl_{\Rel^+(\sigma)}\left(\mmng{\Delta_\calG, \lift{\Gamma} \vdash \lift{e}}(\beta^{n}), \imap_\sigma(\hmng{\calN,\Gamma \vdash e}(\alpha^{n}))\right)
\end{align*}
where 
\begin{align*}
\lift{\Gamma} &:= \set{x'_1: \Rel^-(\tau_1), \dots, x'_k: \Rel^-(\tau_k)}\\
\alpha^{n} &:= \left(\hmng{\calG}_{\calN}^n(\bot_\calN) \right) [\overline x \mapsto \theta_\alpha(x)] \in \hmng{\calN, \Gamma}\\ 
\beta^{n} &:= \left( T_{P_\calG:\Delta_\calG}^{\calM \, n}(\top_{\Delta_\calG})\right) [\overline{x'} \mapsto \vv{\theta_\beta(x')}] \in \mmng{\Delta_\calG, \lift{\Gamma}}.
\end{align*}
Notice that $\calN, \Gamma \vdash e : \sigma$ implies $\Delta_\calG, \lift{\Gamma} \vdash \lift{e} : \Rel^+(\sigma)$.
\end{lemma}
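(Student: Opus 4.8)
The plan is to prove Lemma~\ref{lem:homc_cor_inclusion} by a double induction, mirroring the scaffold used for the nonemptiness Lemma~\ref{lem:homc_botfree_lemma} and depicted in Figure~\ref{fig:homc_botfree_IH_overview}: an outer induction on the iteration count $n$, and, inside each stage, a structural induction on the HoRS term $e$ (which we may take $\lambda$-free, so that $e = \$\,e_1 \dots e_\ell$ with $\$ \in \calN \cup \Sigma \cup \rsvars$). Writing $S(n,e)$ for the instance of the stated implication at $n$ and $e$, I would first dispatch the base expressions $S(0,b)$, then use the inductive hypothesis on proper subterms to lift to $S(0,e)$ for all $e$, then use $S(n,\cdot)$ on all expressions to obtain $S(n+1,b)$, and finally use $S(n+1,\cdot)$ on proper subterms to lift to $S(n+1,e)$ — exactly the three nested hypotheses of Figure~\ref{fig:homc_botfree_IH_overview}.

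For the base expressions at $n=0$: for a recursion-scheme variable $x$ the claim is immediate from $\Incl_\Gamma(\theta_\beta,\theta_\alpha)$ after unfolding $\lift{x}$ and $\hmng{\calN,\Gamma\vdash x}$; for a terminal $f$, unfolding $\lift{f} = D_f = \lambda \vv y\,r.\,(f\,\vv y = r)$ and $\hmng{f}(\alpha^0) = \widehat{F_f}$ reduces the goal (after feeding equal ground-sort arguments $\vv z$ into both sides, as $\Incl_\iota$ forces) to $\lambda s.\,(\widehat{F_f}\,\vv z = s) \sqsubseteq \lambda s.\,(\widehat{F_f}\,\vv z \sqsubseteq s)$, which holds since equality entails $\sqsubseteq$; for a nonterminal $F$ we have $\beta^0(R_F) = \top_{\mmng{\Rel^+(\sigma)}}$ and, by Lemma~\ref{lem:homc_wellsorted_pair}\ref{enum:homc_imap_extremes}, $\imap_\sigma(\hmng{\calN\vdash F}(\bot_\calN)) = \imap_\sigma(\bot_{\hmng\sigma}) = \top_{\dmng{\Rel^+(\sigma)}}$, so the goal is $\Incl_{\Rel^+(\sigma)}(\top,\top)$, which follows by a side induction on the sort (at ground sort it is $\top \sqsubseteq \top$). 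In the inductive-step base case $e=F$ at $n+1$ we instead observe $\beta^{n+1}(R_F) = \mmng{\Delta_\calG\vdash\lift{\calR(F)}}(\beta^n)$ and $\alpha^{n+1}(F) = \hmng{\calN\vdash\calR(F)}(\alpha^n)$, so the goal is precisely the instance $S(n,\calR(F))$, which is available from the outer induction on $n$.

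The heart of the proof is the application case $e = \$\,e_1 \dots e_\ell : \sigma_1 \to \dots \to \sigma_m \to \iota$. I would unfold $\mmng{\Delta_\calG,\lift\Gamma\vdash\lift{\$\,\vv e}}(\beta)\,\vv z$ into its existential normal form $\lambda s.\,\exists \vv r.\,\$'\,\dlift{(e_1,r_1)}\dots\dlift{(e_\ell,r_\ell)}\,\vv z\,s \land \bigwedge_i \Prop(e_i,r_i)$, exactly as in the nonemptiness proof. Applying the relevant hypothesis to each $e_i$ gives, for $\tau_i = \iota$, that $\Prop(e_i,r_i) = \mmng{\lift{e_i}}(\beta)\,r_i = 1$ forces $\hmng{e_i}(\alpha) \sqsubseteq r_i$, and, for higher $\tau_i$, the full logical relation $\Incl_{\Rel^+(\tau_i)}(\mmng{\lift{e_i}}(\beta), \imap_{\tau_i}(\hmng{e_i}(\alpha)))$. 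Applying the appropriate fact to the head ($S(n,\calR(F))$ respectively the $n=0/n+1$ nonterminal computation for $\$ = F$, a direct computation for $\$ = f$, and $\Incl_\Gamma$ for $\$ = x$) shows that $\mmng{\lift{\$}}(\beta)$ is $\Incl$-related to $\imap_\tau(\hmng{\$}(\alpha))$; since $\Incl$ is a logical relation it is closed under application, and, using $\jmap^-_\sigma \circ \imap^-_\sigma = \id$ (Lemma~\ref{lem:homc_wellsorted_pair}\ref{enum:homc_imap_id_minus}) together with the definition of $\imap_\sigma$ at higher sorts (Definition~\ref{def:rel_emb}), feeding both sides the respective witnesses/clones of the $e_i$ and the trailing arguments $\vv z$ yields $\mmng{\lift{\$}}(\beta)\,\dlift{(e_1,r_1)}\dots\,\vv z\,s \sqsubseteq \big(\hmng{\$}(\alpha)\,r_1 \dots r_\ell\,\vv z \sqsubseteq s\big)$ under the side conditions $\hmng{e_i}(\alpha) \sqsubseteq r_i$ at the ground arguments. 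Collapsing the existential over $\vv r$ is then precisely Lemma~\ref{lem:homc_existential_inclusion}, instantiated with its $F$ being $\hmng{\$}(\alpha)$, its $t_i$ being $\hmng{e_i}(\alpha)$ for the ground-sort $e_i$, and its pass-through arguments being the higher-sort $e_i$ together with $\vv z$. This gives $\mmng{\lift{\$\,\vv e}}(\beta)\,\vv z \sqsubseteq \lambda s.\,(\hmng{\$\,\vv e}(\alpha)\,\vv z \sqsubseteq s) = \imap_\sigma(\hmng{\$\,\vv e}(\alpha))\,\vv z$, which, abstracting over $\vv z$, is the required $\Incl_{\Rel^+(\sigma)}$ relation (using the semantics of application, $\hmng{\$\,\vv e}(\alpha) = \hmng{\$}(\alpha)\,\hmng{e_1}(\alpha)\dots\hmng{e_\ell}(\alpha)$).

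I expect the main obstacle to be the bookkeeping at higher sorts: threading the logical relation through the application while keeping straight which of $\imap_\sigma$, $\imap^-_\sigma$, $\jmap^-_\sigma$ is in play for each subexpression, and cleanly separating the ground-sort arguments — where the existential witnesses enter and Lemma~\ref{lem:homc_existential_inclusion} does the real work — from the higher-sort arguments, where the relational lift is passed verbatim and only the function-space clause of $\Incl$ is used. As in the nonemptiness proof, the three nested induction hypotheses of Figure~\ref{fig:homc_botfree_IH_overview} must be invoked in exactly the right places; in particular, the head case $\$ = F$ appeals to the claim at $\calR(F)$ for the previous value of $n$, which is legitimate only because the outer induction is on $n$ and $\calR(F)$ need not be a subterm of $e$.
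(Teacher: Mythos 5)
Your proposal is correct and follows essentially the same route as the paper's proof: the same outer induction on $n$ with inner structural induction on $\lambda$-free $e$ (the three nested hypotheses of Figure~\ref{fig:homc_botfree_IH_overview}), the same treatment of the base expressions (including $\Incl(\top,\top)$ for nonterminals at $n=0$ via Lemma~\ref{lem:homc_wellsorted_pair} and the reduction of $e=F$ at $n{+}1$ to $S(n,\calR(F))$), and the same handling of the application case by unfolding to existential normal form, pushing the IH through the ground-sort witnesses versus the verbatim higher-sort lifts, and collapsing the existential with Lemma~\ref{lem:homc_existential_inclusion}. Your closing remark about why appealing to $\calR(F)$ at stage $n$ is legitimate is exactly the point the paper's induction scheme is designed to accommodate.
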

\begin{proof}
We proceed by induction on $n \geq 0$ within which (both in the base case and the induction step) we use structural induction on HoRS term $e$.
Some parts of the proof are presented out of order to avoid duplication.
In fact, the structure of this proof and order of presentation correspond to the proof of Lemma~\ref{lem:homc_botfree_lemma}, the structure of which is outlined in Figure~\ref{fig:homc_botfree_IH_overview}.
We again assume WLOG $e$ contains no $\lambda$s.

We use the following shorthands for $e:\sigma = \sigma_1 \to \dots \to \sigma_m \to \iota$, $w_i \in \mmng{\Rel^-(\sigma_i)}$, and $z_i \in \hmng{\sigma_i}$ such that $\Incl_{\Rel^-(\sigma_i)}(w_i,\imap^-_{\sigma_i}(z_i))$, for all $i \in [m]$:
\phantomsection\label{eq:homc_B_inclusion}\label{eq:homc_C_inclusion}\[
    B = \mmng{\Delta_\calG, \lift{\Gamma} \vdash \lift{e}}(\beta^{n})\,\vv{w}
    \qquad\qquad
    C = \imap_\sigma(\hmng{\calN,\Gamma \vdash e}(\alpha^{n}))\,\vv{\imap^-(z)}
\]
Thus, both $\eqB$ and $\eqC$ are both elements of $\mmng{\iota \to o} = \dmng{\iota \to o}$, and 
it suffices to show that $\eqB \sqsubseteq \eqC$.

We present the following base cases w.r.t. the structure of $e$ (also denoted $b$ for \emph{base case expression}).

\emph{Case $e = x: \iota \in \rsvars$.} 
For all $n \geq 0$, $\Incl_\Gamma(\theta_\beta,\theta_\alpha)$ implies $\beta^{n}(x') = \alpha^{n}(x)$ and:
\begin{align*}
    \eqB
        &= \mmng{\Delta_\HORS, \lift{\Gamma} \vdash \lift{x}}(\beta^{n})\\
        &= \mmng{\Delta_\HORS, \lift{\Gamma} \vdash \lambda r.\,x' = r}(\beta^{n})\\
        &= \lambda s \in \mmng{\iota}.\left(\beta^{n}(x') = s\right)\\
                &= \lambda s \in \mmng{\iota}.\left( \alpha^{n}(x) = s \right)\\
        &\sqsubseteq \imap_\iota(\alpha^{n}(x))\\
        &= \imap_\iota ( \hmng{\calN,\Gamma \vdash x}(\alpha^{n})) \\
        &= \eqC
\end{align*}

\emph{Case $e = \$: \sigma_1 \to \dots \to \sigma_m \to \iota \in \Sigma \cup \rsvars$.}
For all $n \geq 0$, we rely on Lemma~\ref{lem:homc_wellsorted_pair}.
If $\$ \in \dom(\rsvars)$, let $m>0$.
Then, $\Incl_{\Rel^-(\sigma)}(\beta^{n}(x'),\imap^-_{\sigma}(\alpha^{n}(x)))$ and $\Incl_{\Rel^-(\sigma_j)}(w_j,\imap^-_{\sigma_j}(z_j))$, for all $j \in [m]$, such that:

\vspace*{-10pt}
\hspace*{15pt}\begin{minipage}{.45\linewidth}
\begin{align*}
    \eqB
        &= \mmng{\Delta_\HORS, \lift{\Gamma} \vdash \lift{f}}(\beta^{n})\,\vv{w}\\
        &= \mmng{\Delta_\HORS, \lift{\Gamma} \vdash \lift{f}}(\beta^{n})\,\vv{z}\\
        &= \mmng{\Delta_\HORS, \lift{\Gamma} \vdash \lambda \vv{y} \, r.\,f\,\vv{y} = r}(\beta^{n})\,\vv{z}\\
        &= \lambda s \in \mmng{\iota}.\left(  \widehat{F_f} \, \vv{z} = s \right)\\
        &\sqsubseteq \imap_\iota(\widehat{F_f} \, \vv{z})\\
        &= \imap_\iota (\hmng{\calN,\Gamma \vdash f}(\alpha^{n}) \, \vv{z})\\
        &= \imap_{\sigma} (\hmng{\calN,\Gamma \vdash f}(\alpha^{n})) \, \vv{\imap^-(z)} \\
        &= \eqC
\end{align*}
\end{minipage} 
\begin{minipage}{.45\linewidth}
\begin{align*}
    \eqB
        &= \mmng{\Delta_\HORS, \lift{\Gamma} \vdash \lift{x}}(\beta^{n})\,  \vv{w}\\
        &= \mmng{\Delta_\HORS, \lift{\Gamma} \vdash \lambda \vv{y}\,r.\,x'\,\vv{y}\,r}(\beta^{n})\,  \vv{w}\\
        &= \lambda s \in \mmng{\iota}.\,\beta^{n} (x')\, \vv{w} \, s\\
        &= \beta^{n} (x')\, \vv{w}\\
        &\sqsubseteq \imap^-_{\sigma} (\alpha^{n}(x)) \, \vv{\imap^-(z)}\\
        &= \imap_{\sigma} (\hmng{\calN,\Gamma \vdash x}(\alpha^{n})) \, \vv{\imap^-(z)}\\
        &= \eqC\\
        \vphantom{\left(  \widehat{F_f} \, \vv{z} = s \right)} 
\end{align*}
\end{minipage}

This remaining base case expression is where we start needing induction on $n \geq 0$.

\emph{Case $e = F:\sigma \in \calN$, and $n = 0$.} 
We rely on Lemma~\ref{lem:homc_wellsorted_pair} to prove $\eqB$ and $\eqC$ are the universal relation:
\begin{align*}
    \eqB 
        &= \mmng{\Delta_\HORS, \lift{\Gamma} \vdash \lift{F}}(\beta^{0}) \, \vv{w}\\
        &= \mmng{\Delta_\HORS, \lift{\Gamma} \vdash \lambda \vv{y} \, r.\,R_F\,\vv{y} \, r}(\beta^{0})\,\vv{w}\\
                &= \lambda s \in \mmng{\iota}.\,\beta^{0}(R_F) \, \vv{w} \, s \\
                &= \top_{\mmng{\Rel^+(\iota)}} \\
                &= \top_{\dmng{\Rel^+(\sigma)}} \, \vv{\imap^-(z)} \\
        &= \imap_\sigma(\bot_{\hmng{\sigma}}) \, \vv{\imap^-(z)}\\
                &= \imap_\sigma(\hmng{\calN,\Gamma \vdash F}(\alpha^{0})) \, \vv{\imap^-(z)} \\
        &= \eqC
\end{align*}

This covers $n = 0$ for base case expressions.
Recall our proof follows the structure of Lemma~\ref{lem:homc_botfree_lemma}, which is outlined in Figure~\ref{fig:homc_botfree_IH_overview}.
We distinguish three induction hypotheses, where $S(n,e)$ denotes that the claim holds for $n$ and expression $e$.
Thus, we have now proved $S(0,b)$ for all base case expressions $b$. 

In this inductive case, we consider expressions $e = \$ \, e_1 \dots e_\ell : \sigma_1 \to \dots \to \sigma_m \to \iota$ for some $\ell > 0$.
For this, we introduce some more shorthands:
\[
    \Delta_\HORS' \defeq \Delta_\HORS, \lift{\Gamma}, \vv{y}, r \qquad
    \Delta_\HORS'' \defeq \Delta_\HORS, \lift{\Gamma},\vv{y}, r, \vv{r}\qquad
    \beta^{n,\vv{w},s} \defeq \beta^{n}[\vv{y} \mapsto \vv{w}, r \mapsto s]
\]

Note that the sort $\tau$ of $\$$ is of the form
\[ \tau = \tau_1 \to \dots \to \tau_\ell \to \sigma_1 \to \dots \to \sigma_m \to \iota \]
where $e_1 : \tau_1, \dots, e_\ell : \tau_\ell$, for some $\ell > 0$.
Sometimes we abbreviate $\sigma_1 \to \dots \to \sigma_m \to \iota$ to $\sigma$.

\phantomsection\label{eq:homc_dag_inclusion_proof}
In the sequel, steps marked with $\dag$ use \ref{IH:nonempty_saf_0} for $n=0$, and \ref{IH:nonempty_saf_IC_n+1} for $n>0$. 

Because $\eqB$ is $A$ from Lemma~\ref{lem:homc_botfree_lemma} with $\vv{w}$ substituted for (their) $\vv{z}$, we simply rewrite $\eqB$ like $A$ is rewritten \hyperref[eq:A_rewrite]{there};
for all $n \geq 0$ and expressions $e = \$ \, \vv{e}$, we can rewrite $\eqB$ to obtain:
\begin{align*}
    \eqB 
                                                                                &= \lambda s.\,\max\Big\{ \\
            & \qquad \qquad \qquad \min\{ \\
                & \qquad \qquad \qquad \qquad \,\mmng{\Delta_\HORS'' \vdash \$'}(\beta^{n,\overline{w},s}[\vv{r} \mapsto \vv{r'}]) ( \mmng{\Delta_\HORS'' \vdash \dlift{(e_1,r_1)}}(\beta^{n,\overline{w},s}[\vv{r} \mapsto \vv{r'}])) \\
                & \qquad \qquad \qquad \qquad \qquad \qquad \qquad \quad\quad\, \dots (\mmng{\Delta_\HORS'' \vdash \dlift{(e_\ell,r_\ell)}}(\beta^{n,\overline{w},s}[\vv{r} \mapsto \vv{r'}]))\, \vv{w} \, s, \\
                & \qquad \qquad \qquad \qquad \min\{\mmng{\Delta_\HORS'' \vdash Prop(e_i,r_i)}(\beta^{n,\overline{w},s}[\vv{r} \mapsto \vv{r'}]) \mid i \in [\ell]\} \\
            & \qquad \qquad \qquad \;\} \\
        & \qquad \quad  \mid \forall i \in [\ell].\, r'_i \in \mmng{\Rel^-(\tau_i)} \Big\}        \end{align*}
We now distinguish two cases for each subexpression $e_i:\tau_i$, namely $\tau_i = \iota$ and $\tau_i = \tau'_1 \to \tau'_2$.

If $e_i$ is of sort $\iota$, then the following holds:
\begin{align*}
    \mmng{\Delta_\HORS'' \vdash Prop(e_i:\iota,r_i)}(\beta^{n,\overline{w},s}[\vv{r} \mapsto \vv{r'}])
                        &= \mmng{\Delta_\HORS,\lift{\Gamma} \vdash \lift{e_i}}(\beta^{n})\,r_i'\\
    \mmng{\Delta_\HORS'' \vdash \dlift{(e_i:\iota,r_i)}}(\beta^{n,\overline{w},s}[\vv{r} \mapsto \vv{r'}])
                &= r_i'
\end{align*}
We know from \hyperref[eq:homc_dag_inclusion_proof]{$\dag$} that $\Incl_{\Rel^+(\iota)}(\mmng{\Delta_\HORS,\lift{\Gamma} \vdash \lift{e_i}}(\beta^{n}), \imap_\iota(\hmng{\calN,\Gamma \vdash e_i}(\alpha^{n}))$. 

Otherwise, in case $e_i:\tau_i = \tau'_1 \to \tau'_2$, the following holds:
\begin{align*}
    \mmng{\Delta_\HORS'' \vdash Prop(e_i:\tau_i,r_i)}(\beta^{n,\overline{w},s}[\vv{r} \mapsto \vv{r'}]) 
                &= 1  \\
    \mmng{\Delta_\HORS'' \vdash \dlift{(e_i:\tau_i,r_i)}}(\beta^{n,\overline{w},s}[\vv{r} \mapsto \vv{r'}]) 
                &= \mmng{\Delta_\HORS, \lift{\Gamma} \vdash \lift{e_i}}(\beta^{n})
\end{align*}
We know from \hyperref[eq:homc_dag_inclusion_proof]{$\dag$} that $\Incl_{\Rel^+(\tau_i)}(\mmng{\Delta_\HORS,\lift{\Gamma} \vdash \lift{e_i}}(\beta^{n}), \imap_{\tau_i}(\hmng{\calN,\Gamma \vdash e_i}(\alpha^{n}))$. 

As ``semantic equivalents'' of the above terms, let us write
\[
    P_i:o \defeq
        \left\{\begin{array}{ll}
            \mmng{\Delta_\HORS,\lift{\Gamma} \vdash \lift{e_i}}(\beta^{n})\,r_i' & \quad \text{if }\tau_i = \iota \\
            1 & \quad \text{if }\tau_i = \tau'_1 \to \tau'_2
        \end{array}\right.
\]
and
\[
    T_i:\Rel^-(\tau_i) \defeq
        \left\{\begin{array}{ll}
            r_i' & \quad \text{if }\tau_i = \iota \\
            \mmng{\Delta_\HORS, \lift{\Gamma} \vdash \lift{e_i}}(\beta^{n}) & \quad \text{if }\tau_i = \tau'_1 \to \tau'_2
        \end{array}\right.
\]
for all $i \in [\ell]$.
Additionally, we define
\[
    P_i':o \defeq
        \left\{\begin{array}{ll}
            \imap_\iota(\hmng{\calN,\Gamma \vdash e_i}(\alpha^{n}))\,r_i' & \quad \text{if }\tau_i = \iota \\
            1 & \quad \text{if }\tau_i = \tau'_1 \to \tau'_2
        \end{array}\right.
\]
and
\[
    T_i':\Rel^-(\tau_i) \defeq
        \left\{\begin{array}{ll}
            r_i' & \quad \text{if }\tau_i = \iota \\
            \imap_{\tau_i}(\hmng{\calN,\Gamma \vdash e_i}(\alpha^{n})) & \quad \text{if }\tau_i = \tau'_1 \to \tau'_2
        \end{array}\right.
\]
for all $i \in [\ell]$, to be used after applying the induction hypothesis \hyperref[eq:homc_dag_inclusion_proof]{$\dag$}. 
And finally, for all $i \in [\ell]$,
\[
    S_i:\Rel^-(\tau_i) \defeq
        \left\{\begin{array}{ll}
            r_i' & \quad \text{if }\tau_i = \iota \\
            \hmng{\calN,\Gamma \vdash e_i}(\alpha^{n}) & \quad \text{if }\tau_i = \tau'_1 \to \tau'_2
        \end{array}\right.
\]

We derive by abuse of notation, using the above:
\begin{align*}
    \eqB
        &= \lambda s.\,\exists \vv{r'}.\, \left( \mmng{\Delta_\HORS, \lift{\Gamma} \vdash \$'}(\beta^{n}) \, \vv{T}\; \vv{w} \, s \land \bigwedge_{i \in [\ell]} P_i \right)
\end{align*}

We continue by case analysis on $\$$. 

\emph{Case $e = f \, e_1 \dots e_\ell$ with $f \in \Sigma$.}
For all $n \geq 0$:
\begin{align*}
    \eqB 
        &= \lambda s.\,\exists \vv{r'}. \left( \mmng{\Delta_\HORS, \lift{\Gamma} \vdash D_f}(\beta^{n}) \, \vv{r'}\; \vv{w} \, s \land \bigwedge_{i \in [\ell]} \mmng{\Delta_\HORS,\lift{\Gamma} \vdash \lift{e_i}}(\beta^{n})\,r_i' \right)\\
        &= \lambda s.\,\exists \vv{r'}. \left( \widehat{F_f} \, \vv{r'}\; \vv{w} = s \land \bigwedge_{i \in [\ell]} \mmng{\Delta_\HORS,\lift{\Gamma} \vdash \lift{e_i}}(\beta^{n})\,r_i' \right) \\
        &\sqsubseteq \lambda s.\,\exists \vv{r'}. \left( \widehat{F_f} \, \vv{r'}\; \vv{w} = s \land \bigwedge_{i \in [\ell]} \imap_\iota(\hmng{\calN,\Gamma  \vdash e_i}(\alpha^{n}))\,r_i' \right)  \qquad  \hyperref[eq:homc_dag_inclusion_proof]{\dag}\\
        &= \lambda s.\,\exists \vv{r'}. \left( \widehat{F_f} \, \vv{r'}\; \vv{w} = s \land \bigwedge_{i \in [\ell]} \left(\hmng{\calN,\Gamma  \vdash e_i}(\alpha^{n}) \sqsubseteq \,r_i'\right) \right) \\
        &= \lambda s.\,\exists \vv{r'}. \left( \widehat{F_f} \, \vv{r'}\; \vv{z} = s \land \bigwedge_{i \in [\ell]} \left(\hmng{\calN,\Gamma  \vdash e_i}(\alpha^{n}) \sqsubseteq \,r_i'\right) \right) \\
        &\sqsubseteq \lambda s. \left(\widehat{F_f}\, \hmng{\calN,\Gamma \vdash e_1}(\alpha^{n}) \dots \hmng{\calN,\Gamma \vdash e_\ell}(\alpha^{n})\, \vv{z} \sqsubseteq s \right) \qquad \text{Lem~\ref{lem:homc_existential_inclusion}}\\
        &= \imap_\iota (\widehat{F_f} \, \hmng{\calN,\Gamma \vdash e_1}(\alpha^{n}) \dots \hmng{\calN,\Gamma \vdash e_\ell}(\alpha^{n})\, \vv{z}) \\
        &= \imap_\sigma (\hmng{\calN,\Gamma \vdash f}(\alpha^{n})) \, \imap^-_\iota(\hmng{\calN,\Gamma \vdash e_1}(\alpha^{n})) \dots \imap^-_\iota(\hmng{\calN,\Gamma \vdash e_\ell}(\alpha^{n}))\, \vv{z} \\
        &= \eqC
\end{align*}

\emph{Case $e = x \, e_1 \dots e_\ell$ with $x \in \rsvars$ of sort $\tau = \tau_1 \to \tau_2$.}
For all $n \geq 0$: 
\begin{align*}
    \eqB 
        &= \lambda s.\,\exists \vv{r'}.\left( \mmng{\Delta_\HORS, \lift{\Gamma} \vdash x'}(\beta^{n}) \, \vv{T}\; \vv{w} \, s \land \bigwedge_{i \in [\ell]} P_i \right)\\
        &= \lambda s.\,\exists \vv{r'}.\, \left( \beta^{n}(x') \, \vv{T}\; \vv{w} \, s \land \bigwedge_{i \in [\ell]} P_i \right)\\
        &\sqsubseteq \lambda s.\,\exists \vv{r'}. \left( \imap_\tau(\alpha^{n}(x)) \, \vv{T'}\; \vv{\imap^-(z)} \, s \land \bigwedge_{i \in [\ell]} P_i \right) \qquad \hyperref[eq:homc_dag_inclusion_proof]{\dag}\\
        &\sqsubseteq \lambda s.\,\exists \vv{r'}. \left( \imap_\tau(\alpha^{n}(x)) \, \vv{T'}\; \vv{\imap^-(z)} \, s \land \bigwedge_{i \in [\ell]} P'_i \right) \qquad \hyperref[eq:homc_dag_inclusion_proof]{\dag}\\
        &= \lambda s.\,\exists \vv{r'}. \left( \imap_\iota(\alpha^{n}(x) \, \vv{S}\; \vv{z}) \, s \land \bigwedge_{i \in [\ell]} P'_i \right) \qquad \text{Lem~\ref{lem:homc_wellsorted_pair}}\\
        &= \lambda s.\,\exists \vv{r'}. \left( \hmng{\calN,\Gamma \vdash x}(\alpha^{n}) \, \vv{S}\; \vv{z} \sqsubseteq s \land \bigwedge_{i \in [\ell]} P_i' \right)\\
        &\sqsubseteq \lambda s.\left(\hmng{\calN,\Gamma \vdash x\,\vv{e}}(\alpha^{n})\, \vv{z} \sqsubseteq s\right) \qquad \text{Lem~\ref{lem:homc_existential_inclusion}}\\
        &= \imap_\iota (\hmng{\calN,\Gamma \vdash x\,\vv{e}}(\alpha^{n})\, \vv{z})\\
        &= \imap_\tau (\hmng{\calN,\Gamma \vdash x\,\vv{e}}(\alpha^{n}))\, \vv{\imap^-(z)} \qquad \text{Lem~\ref{lem:homc_wellsorted_pair}}\\
        &= \eqC
\end{align*}
Recall that $\Incl_{\Rel^-(\tau)}(\beta^{n}(x'),\imap^-_\tau(\alpha^{n}(x)))$. 
The IH \hyperref[eq:homc_dag_inclusion_proof]{$\dag$} gives us $\Incl_{\Rel^-(\tau_i)}(T_i,T_i')$.
Because we also have $\Incl_{\Rel^-(\sigma_j)}(w_j,\imap^-_{\sigma_j}(z_j))$, we derive the first inclusion.

\emph{Case $e = F \, e_1 \dots e_\ell$ with $F:\sigma \in \calN$, and $n=0$.} 
\begin{align*}
    \eqB 
        &= \lambda s.\,\exists \vv{r'}. \left( \mmng{\Delta_\HORS, \lift{\Gamma} \vdash R_F}(\beta^{n}) \, \vv{T}\; \vv{w} \, s \land \bigwedge_{i \in [\ell]} P_i \right)\\
                                &\sqsubseteq \lambda s.\,1\\
        &= \imap_\iota(\bot) \qquad \text{Lem~\ref{lem:homc_wellsorted_pair}}\\
        &= \imap_\iota(\bot_{\hmng{\sigma}} \, \hmng{\calN,\Gamma \vdash e_1}(\alpha^{0}) \dots \hmng{\calN,\Gamma \vdash e_\ell}(\alpha^{0}) \, \vv{z})\\
        &= \imap_\iota(\alpha^{0}(F) \, \hmng{\calN,\Gamma \vdash e_1}(\alpha^{0}) \dots \hmng{\calN,\Gamma \vdash e_\ell}(\alpha^{0}) \, \vv{z})\\
        &= \imap_\iota(\hmng{\calN,\Gamma \vdash F}(\alpha^{0}) \, \hmng{\calN,\Gamma \vdash e_1}(\alpha^{0}) \dots \hmng{\calN,\Gamma \vdash e_\ell}(\alpha^{0}) \, \vv{z})\\
        &= \imap_\sigma(\hmng{\calN,\Gamma \vdash F}(\alpha^{0})) \, \imap^-_{\tau_1}(\hmng{\calN,\Gamma \vdash e_1}(\alpha^{0})) \dots \imap^-_{\tau_\ell}(\hmng{\calN,\Gamma \vdash e_\ell}(\alpha^{0})) \, \vv{\imap^-(z)}\\
        &= \eqC
\end{align*}

We have now established that $S(0,e'')$ holds for expressions all $e''$. 
The following case is the last remaining case to prove that $S(n+1,b)$ for all base case expressions $b$:

\emph{Case $e = F:\sigma \in \calN$, for $n+1$.} 
\begin{align*}
    \eqB 
        &= \mmng{\Delta_\HORS, \lift{\Gamma} \vdash \lift{F}}(\beta^{n+1}) \, \vv{w}\\
        &= \mmng{\Delta_\HORS, \lift{\Gamma} \vdash \lambda \vv{y} \, r.\,R_F\,\vv{y} \, r}(\beta^{n+1})\,\vv{w}\\
                        &= \lambda s.\, \beta^{n+1}(R_F) \, \vv{w} \, s \\
        &= \lambda s.\, \mmng{\Delta_\HORS, \lift{\Gamma} \vdash \lift{\calR(F)} }(\beta^{n}) \, \vv{w} \, s \\
        &\sqsubseteq \lambda s.\, \imap_\sigma(\hmng{\calN,\Gamma \vdash \calR(F)}(\alpha^{n})) \, \vv{\imap^-(z)} \, s \quad \text{\ref{IH:nonempty_saf_BC_n+1}} \\
        &= \lambda s.\, \imap_\sigma(\alpha^{n+1}(F)) \, \vv{\imap^-(z)} \, s \\
                        &= \imap_\sigma(\hmng{\calN, \Gamma \vdash F}(\alpha^{n+1})) \, \vv{\imap^-(z)} \\
        &= \eqC
\end{align*}
Finally, we present the remaining case to prove that $S(n',e'')$ for all $n' \geq 0$ and all expressions $e''$.

\emph{Case $e = F \, e_1 \dots e_\ell$ with $F \in \calN$ and $\ell>0$, for $n+1$.}
\begin{align*}
    \eqB 
        &= \lambda s.\,\exists \vv{r'}. \left( \mmng{\Delta_\HORS, \lift{\Gamma} \vdash R_F}(\beta^{n+1}) \, \vv{T}\; \vv{w} \, s \land \bigwedge_{i \in [\ell]} P_i \right)\\
        &= \lambda s.\,\exists \vv{r'}.\, \left( \beta^{n+1}(R_F) \, \vv{T}\; \vv{w} \, s \land \bigwedge_{i \in [\ell]} P_i \right)\\
        &= \lambda s.\,\exists \vv{r'}. \left( \mmng{\Delta_\HORS, \lift{\Gamma} \vdash \lift{\calR(F)}}(\beta^{n}) \, \vv{T}\; \vv{w} \, s \land \bigwedge_{i \in [\ell]} P_i \right)\\
        &\sqsubseteq \lambda s.\,\exists \vv{r'}. \left( \imap_\tau(\hmng{\calN,\Gamma \vdash \calR(F)}(\alpha^{n})) \, \vv{T'}\; \vv{\imap^-(z)} \, s \land \bigwedge_{i \in [\ell]} P_i \right) \qquad \text{\ref{IH:nonempty_saf_BC_n+1}, }\hyperref[eq:homc_dag_nonemptiness_proof]{\dag} \\
        &\sqsubseteq \lambda s.\,\exists \vv{r'}. \left( \imap_\tau(\hmng{\calN,\Gamma \vdash \calR(F)}(\alpha^{n})) \, \vv{T'}\; \vv{\imap^-(z)} \, s \land \bigwedge_{i \in [\ell]} P'_i \right) \qquad \hyperref[eq:homc_dag_nonemptiness_proof]{\dag} \\
        &= \lambda s.\,\exists \vv{r'}. \left( \imap_\iota(\hmng{\calN,\Gamma \vdash \calR(F)}(\alpha^{n}) \, \vv{S}\; \vv{z}) \, s \land \bigwedge_{i \in [\ell]} P_i' \right) \qquad \text{Lem~\ref{lem:homc_wellsorted_pair}}\\
                &= \lambda s.\,\exists \vv{r'}. \left( \alpha^{n+1}(F) \, \vv{S}\; \vv{z} \sqsubseteq s \land \bigwedge_{i \in [\ell]} P_i' \right)\\
        &= \lambda s.\,\exists \vv{r'}. \left( \hmng{\calN,\Gamma \vdash F}(\alpha^{n+1}) \, \vv{S}\; \vv{z} \sqsubseteq s \land \bigwedge_{i \in [\ell]} P_i' \right)\\
        &\sqsubseteq \imap_\iota (\hmng{\calN,\Gamma \vdash  F\,\vv{e}}(\alpha^{n+1})\,\vv{z}) \qquad \text{Lem~\ref{lem:homc_existential_inclusion}}\\
        &= \imap_\tau (\hmng{\calN,\Gamma \vdash  F\,\vv{e}}(\alpha^{n+1}))\,\vv{\imap^-(z)} \qquad \text{Lem~\ref{lem:homc_wellsorted_pair}}\\
        &= \eqC
\end{align*}
\end{proof}

\correctnessInclusion*

\subsection{Main result: equality}\label{sec:homc_corr_finalised}

\auxiliaryDescendingChain*
\begin{proof}
Recall that $\mmng{\Delta_\calG}$ and $\mmng{\rho}$ are complete lattices for each relational sort environment $\Delta_\calG$ and relational sort $\rho$. 
Thus, we know that the greatest lower bounds exist.

To perform induction on the structure of $e:\sigma$, we strengthen the claim to the following.

For all typing judgements 
\(
\calN, \Gamma \vdash e : \sigma
\) 
of the HoRS $\calG$ where $\Gamma = \set{x_1 : \tau_1, \ldots, x_k : \tau_k}$,
for all descending chains of valuations $\calI \subseteq \mmng{\Delta_\calG}$,
and valuations $\theta \in \mmng{\lift{\Gamma}}$,
\begin{align*} 
    &\mmng{\Delta_\calG, \lift{\Gamma} \vdash \lift{e: \sigma}}\left(\left(\glb \calI\right)[\overline{x'} \mapsto \vv{\theta(x')}]\right)\\
        &= \glb_{I \in \calI} \mmng{\Delta_\calG, \lift{\Gamma} \vdash \lift{e: \sigma}}(I[\overline{x'} \mapsto \vv{\theta(x')}]).
\end{align*}
where $\lift{\Gamma} = \set{x'_1: \Rel^-(\tau_1), \dots, x'_k: \Rel^-(\tau_k)}$.
We abbreviate $I[\overline{x'} \mapsto \vv{\theta(x')}]$ to $I'$.
Note that
\[ 
\left(\glb \calI\right)[\overline{x'} \mapsto \vv{\theta(x')}] 
=
\glb_{I \in \calI} I[\overline{x'} \mapsto \vv{\theta(x')}]
=
\glb_{I \in \calI} I',
\]
so that we shorten the above equation to
\[
    \mmng{\Delta_\calG, \lift{\Gamma} \vdash \lift{e: \sigma}}\left(\glb_{I \in \calI} I'\right)
        = \glb_{I \in \calI} \mmng{\Delta_\calG, \lift{\Gamma} \vdash \lift{e: \sigma}}(I').
\]

\emph{Case $e = f: \iota^k \to \iota \in \Sigma$.}
The meaning of $\lift{f}$ is independent of the valuation, as demonstrated by:
\[
\lift{f} = \lambda y_1 \dots y_k\,r.\,(f\,y_1 \dots y_k = r)
\]

\emph{Case $x:\sigma \in \rsvars$.}
The meaning of $\lift{x}$ relies only on the $\theta$ part of the valuation, as evident from:
\[
    \lift{x:\sigma} 
        = \left\{\begin{array}{ll}
        \lambda r.\,(x' = r) & \quad \text{if }\sigma = \iota\\
        \lambda y_1 \dots y_k \, r.\,x'\,y_1\dots y_k\,r & \quad \text{if }\sigma = \sigma_1 \to \dots \to \sigma_k \to \iota \text{ for }k>0
        \end{array}\right.
\]

\emph{Case $F\sigma_1 \to \dots \to \sigma_k \to \iota \in \calN$.}
\begin{align*}
    \mmng{\Delta_\calG, \lift{\Gamma} \vdash \lift{F}}\left(\glb_{I \in \calI} I'\right)
        &= \mmng{\Delta_\calG, \lift{\Gamma} \vdash \lambda y_1\dots y_k\,r.\,R_F\,y_1\dots y_k\,r}\left(\glb_{I \in \calI} I'\right)\\
        &= \lambda y_1\dots y_k\,r.\,\left(\glb_{I \in \calI} I'\right)(R_F)\, y_1\dots y_k\,r\\
        &= \lambda y_1\dots y_k\,r.\,\left(\glb_{I \in \calI} I\right)(R_F)\, y_1\dots y_k\,r\\
        &= \lambda y_1\dots y_k\,r.\,\left(\glb_{I \in \calI} I(R_F)\right)\, y_1\dots y_k\,r \\
        &= \glb_{I \in \calI} I(R_F)\\
        &= \glb_{I \in \calI} I'(R_F)\\
        &= \glb_{I \in \calI} \mmng{\Delta_\calG, \lift{\Gamma} \vdash R_F}(I')\\
        &= \glb_{I \in \calI} \mmng{\Delta_\calG, \lift{\Gamma} \vdash \lambda y_1\dots y_k\,r.\,R_F\,y_1\dots y_k\,r}(I')\\
        &= \glb_{I \in \calI} \mmng{\Delta_\calG, \lift{\Gamma} \vdash \lift{F}}(I')
\end{align*}
For the fourth equality, we rely on the codomain of $\calI$ being a complete lattice (namely, a finite product of complete lattices $\mmng{\rho}$).

\emph{Case $e = \$\,\vv{e}$ with $\vv{e} = e_1\dots e_\ell$ for $\ell>0$.}
This case follows from applying the induction hypothesis in a straightforward though laborious unfolding of the relational lift and the semantics.
Recall that:
\begin{align*}
    &\mmng{\Delta_\HORS, \lift{\Gamma} \vdash \lift{\$\,\vv{e}}}\left(\glb_{I \in \calI} I'\right) \\
        &= \mmng{\Delta_\HORS, \lift{\Gamma} \vdash \lambda \vv{y}\,r.\, \exists \vv{r}.\, \$' \, \dlift{(e_1,r_1)}\dots \dlift{(e_\ell,r_\ell)}\, \vv{y} \, r \land \bigwedge_{i \in [\ell]} Prop(e_i,r_i)}\left(\glb_{I \in \calI} I'\right)
\end{align*}
The previous cases show that the greatest lower bound is preserved by $\mmng{\$'}$.
Observe that $\dlift{(e_i,r_i)}$ is $r_i$ or $\lift{e_i}$. 
Either way, the greatest lower bound is preserved by $\mmng{\dlift{(e_i,r_i)}}$.
Similarly, $Prop(e_i,r_i)$ is either $\lift{e_i}\,r_i$ or  $\mathsf{true}$, and the greatest lower bound is thus preserved by $\mmng{Prop(e_i,r_i)}$.
This concludes the proof.
\end{proof}

\correctnessEquality*

\end{document}